\PassOptionsToPackage{unicode}{hyperref}

\PassOptionsToPackage{hyphens}{url}
\PassOptionsToPackage{dvipsnames,svgnames,x11names}{xcolor}
\documentclass[12pt]{article}
\usepackage{hyperref}
\usepackage{xr-hyper}       
\usepackage{amsmath,amssymb}
\usepackage{iftex}
\ifPDFTeX

  \usepackage[T1]{fontenc}
  \usepackage[utf8]{inputenc}
  \usepackage{textcomp} 
\else 
  \usepackage{unicode-math}
  \defaultfontfeatures{Scale=MatchLowercase}
  \defaultfontfeatures[\rmfamily]{Ligatures=TeX,Scale=1}
\fi
\usepackage{lmodern}
\ifPDFTeX\else  
\fi
\IfFileExists{upquote.sty}{\usepackage{upquote}}{}
\IfFileExists{microtype.sty}{
  \usepackage[]{microtype}
  \UseMicrotypeSet[protrusion]{basicmath} 
}{}
\usepackage{xcolor}
\usepackage{longtable,booktabs,array}
\usepackage{calc} 
\usepackage{etoolbox}
\makeatletter
\patchcmd\longtable{\par}{\if@noskipsec\mbox{}\fi\par}{}{}
\makeatother
\IfFileExists{footnotehyper.sty}{\usepackage{footnotehyper}}{\usepackage{footnote}}
\makesavenoteenv{longtable}
\usepackage{graphicx}
\makeatletter
\def\maxwidth{\ifdim\Gin@nat@width>\linewidth\linewidth\else\Gin@nat@width\fi}
\def\maxheight{\ifdim\Gin@nat@height>\textheight\textheight\else\Gin@nat@height\fi}
\makeatother
\setkeys{Gin}{width=\maxwidth,height=\maxheight,keepaspectratio}
\makeatletter
\def\fps@figure{htbp}
\makeatother
\usepackage[left=1in,top=0.8in,right=1in,bottom=0.7in,head=.2in,nofoot]{geometry}

\usepackage[font=small,labelfont=bf]{caption}
\makeatletter
\@ifpackageloaded{caption}{}{\usepackage{caption}}
\AtBeginDocument{%
\ifdefined\contentsname
  \renewcommand*\contentsname{Table of contents}
\else
  \newcommand\contentsname{Table of contents}
\fi
\ifdefined\listfigurename
  \renewcommand*\listfigurename{List of Figures}
\else
  \newcommand\listfigurename{List of Figures}
\fi
\ifdefined\listtablename
  \renewcommand*\listtablename{List of Tables}
\else
  \newcommand\listtablename{List of Tables}
\fi
\ifdefined\figurename
  \renewcommand*\figurename{Figure}
\else
  \newcommand\figurename{Figure}
\fi
\ifdefined\tablename
  \renewcommand*\tablename{Table}
\else
  \newcommand\tablename{Table}
\fi
}
\@ifpackageloaded{float}{}{\usepackage{float}}
\floatstyle{ruled}
\@ifundefined{c@chapter}{\newfloat{codelisting}{h}{lop}}{\newfloat{codelisting}{h}{lop}[chapter]}
\floatname{codelisting}{Listing}

\makeatother
\makeatletter
\@ifpackageloaded{caption}{}{\usepackage{caption}}
\@ifpackageloaded{subcaption}{}{\usepackage{subcaption}}
\makeatother

\ifLuaTeX
  \usepackage{selnolig}  
\fi
\usepackage[]{natbib}
\usepackage{bookmark}

\IfFileExists{xurl.sty}{\usepackage{xurl}}{} 
\hypersetup{
  pdftitle={Title},
  pdfauthor={Author 1; Author 2},
  pdfkeywords={3 to 6 keywords, that do not appear in the title},
  colorlinks=true,
  linkcolor={blue},
  filecolor={Maroon},
  citecolor={Blue},
  urlcolor={Blue},
  pdfcreator={LaTeX via pandoc}}

\usepackage{xcolor}
\newcommand{\anon}{1}

\usepackage{mycommands}

\newtheorem{theorem}{Theorem}
\newtheorem{lemma}{Lemma}

\newtheorem{corollary}{Corollary}

\newtheorem{example}{Example}
\newtheorem{remark}{Remark}
\newtheorem{assumption}{Assumption}

\newtheorem*{condition*}{Condition}

\newtheorem{proposition}{Proposition}

\theoremstyle{plain}
\theoremstyle{definition}

\makeatletter
\providecommand\hyper@newdestlabel[2]{}      
\providecommand\HyField@AuxAddToFields[1]{}
\providecommand\HyField@AuxAddToCoFields[2]{}
\providecommand\newlabelxx[6]{}
\makeatother

\makeatletter
\providecommand\hyper@newdestlabel[2]{}      
\providecommand\HyField@AuxAddToFields[1]{}
\providecommand\HyField@AuxAddToCoFields[2]{}
\providecommand\newlabelxx[6]{}
\makeatother

\begin{document}

\def\spacingset#1{\renewcommand{\baselinestretch}%
{#1}\small\normalsize} \spacingset{1}


\if1\anon
{
  \title{\bf Stochastic Sensitivity Analysis for Matched Observational Studies}
  \author{\\Mengqi Lin, Colin B. Fogarty and Gongjun Xu
   \hspace{.2cm} \\
   Department of Statistics, University of Michigan}
    \date{}
  \maketitle
  \date{}
} \fi

\bigskip
\begin{abstract}
Sensitivity analysis asks how strong unmeasured confounding needs to be to explain away an observational study’s conclusion. The conventional approach in matched studies conducts inference conditional upon the potential outcomes as well as both observed and unobserved confounders, and then finds the worst-case distribution for the conditional treatment assignments across all possible realizations of the unobserved confounder. The resulting worst-case allocation imagines strong, near perfect, correlations between the potential outcomes and hidden bias. We propose a stochastic sensitivity analysis that instead targets inference conditional upon potential outcomes and observed confounders while treating the hidden confounders as random with unknown conditional laws. Rather than finding the worst-case realizations for the hidden confounders, we instead determine the worst-case conditional law over a broad class of distributions. This preserves the adversarial spirit of sensitivity analysis while allowing for imperfect alignment between hidden bias and potential outcomes to a degree controlled by a scalar sensitivity parameter. We consider restrictions to both an interpretable class with no parametric assumptions and a Bernoulli class of conditional laws. Design sensitivity calculations and real-data demonstrations illustrate that allowing for even a small degree of stochasticity can materially increase reported robustness to hidden bias relative to the conventional approach.
\end{abstract}

\noindent%
{\it Keywords:} Causal Inference; Randomization Inference; Unmeasured Confounding; Nuisance Parameter
\vfill




\newpage
\spacingset{1.8} %
\section{Introduction}

\subsection{A motivating example}\label{subsec:mt_example} 

In \citeauthor{ham64}'s \citeyearpar{ham64} study of smoking and lung cancer, 36,975 male nonsmokers were matched to men who smoked 20 or more cigarettes per day on a broad set of demographic, occupational, and health characteristics. In these data, there were 122 discordant matched pairs in which exactly one subject died of lung cancer:  in 110 pairs the smoker died of lung cancer, whereas in 12 pairs the nonsmoker died of lung cancer  \citep{ros87,Rosenbaum2002}. While this striking difference provides evidence for an association, it does not provide direct evidence for a causal effect of smoking without further assumptions: as treatment was not randomized, the conclusion remains vulnerable to unmeasured confounding. One proposed source of hidden bias in the early studies of smoking and lung cancer is genetic variation affecting both smoking behavior and lung cancer risk \citep{fisher_lung_1958}. Recent genetic studies have reported that variants in region q24--25.1 of chromosome 15 are associated with smoking and nicotine dependence \citep{Thorgeirsson2008,Weiss2008,Saccone2009, Lassi2016}. Genome-wide association studies have also linked variants in this same region to lung-cancer risk \citep{Hung2008,Amos2008,Thorgeirsson2008}. A risk allele in this region represents a potential unmeasured confounder associated with both smoking behavior and lung-cancer risk.


Rather than assuming away hidden bias, sensitivity analysis asks how strong unmeasured confounding would have to be to overturn the study's conclusion. In studies such as \citet{ham64} that use matching to adjust for overt biases, methods derived under the model of \citet{ros87} are the standard choice for probing robustness to unobserved confounding. This conventional sensitivity model considers an unmeasured confounder \(U_{ij}\in[0,1]\) and, at sensitivity level \(\Gamma\ge 1\), assumes that subjects $(i,j)$ and $(i,\ell)$ in matched set $i$ differ in their odds of exposure to treatment by $\Gamma^{U_{ij}-U_{i\ell}}$ because of this hidden confounder. Larger values of \(\Gamma\) therefore permit hidden bias to have a stronger impact on the assignment mechanism. In the smoking example, one may interpret \(U_{ij}=1\) as indicating that a subject carries the risk allele at that particular variant, and \(U_{ij}=0\) otherwise.
Then a carrier may be up to \(\Gamma\) times more likely to be the smoker than a noncarrier in the same matched set. 

Under this model, the conventional sensitivity analysis proceeds to find an upper bound on the $p$-value by identifying the worst-case values the unmeasured confounders $U_{ij}$ for all individuals in the study. In Hammond's study, the conventional approach proceeds as if, \textit{for all 122 discordant pairs}, the subject who died of lung cancer had $U_{ij}=1$ and the matched partner had $U_{i\ell}=0$. For the genetic variant, this would mean that all of the individuals who died from lung cancer in discordant pairs carried the risk allele, and that none of the individuals who did not die from lung cancer carried the risk allele. Such an alignment is stronger than the genetic evidence suggests: genetic associations reported by \citet{Amos2008}
and \citet{Thorgeirsson2008} between lung cancer risk and risk allele presence are modest on the odds-ratio scale (for instance, a difference of 30\% between those with a single risk allele and those without any), but not deterministic. Subjects who die of lung cancer would be more likely, but not certain, to carry that allele. In that case, some matched pairs would likely contain two noncarriers, some two carriers, and some would contain a carrier who did not die of lung cancer. The conventional sensitivity analysis instead places the hidden confounder in the most adverse configuration, effectively aligning it with the outcome perfectly.

\subsection{Overview and contributions}\label{subsec:overview}
A growing literature has sought to relax this deterministic worst-case analysis. One line of work does so by allowing the magnitude of hidden bias to vary across matched sets for matched observational studies. 
For matched pairs, \citet{hasegawa_sensitivity_2017} showed that this worst-case  analysis can be calibrated to average rather than worst-case hidden bias. \citet{fog19extend} introduced an extended sensitivity analysis that simultaneously bounds maximal and typical bias. Recently, \cite{wu25} studied quantiles of hidden biases. These papers primarily address heterogeneity in the magnitude of hidden bias across matched sets, but they do not directly relax the conventional deterministic worst-case allocation of hidden bias. 
Another line of research treats the unmeasured confounder as a random latent variable and specifies treatment and outcome models involving that variable \citep{Rosenbaum1983,imb03,carnegie2016assessing,Dorie2016,zhang2020calibrated}. These approaches are typically more model-based and often rely on parametric or otherwise low-dimensional assumptions on the latent confounder and/or on the treatment and outcome models.



This paper develops a stochastic sensitivity analysis for matched observational studies.  Rather than considering  inference valid for any realization of the potential outcomes, observed confounders and unobserved confounder, we instead consider valid inference given the potential outcomes and observed covariates while treating the unobserved covariate as random, considering the worst-case conditional \textit{distribution} for the unobserved covariates given the potential outcomes over a broad class of distributions. Rather than committing to a single fully specified latent-variable treatment and outcome model, we retain the finite-population, randomization-based framework that preserves the worst-case spirit of sensitivity analysis for matched observational studies. The conventional sensitivity analysis is recovered when the class is allowed to include degenerate distributions concentrated on the most adverse configurations. Our approach allows for departures from this most adverse configuration through an additional sensitivity parameter, roughly viewed as the degree of stochasticity in the conditional law.

Once inference only conditions upon the potential outcomes and measured covariates, finding the required worst-case \(p\)-value becomes an optimization over distributions for the unobserved confounders rather than over fixed vectors of them. Nonetheless, we show that the separable algorithm of \citet{gas00}, originally introduced under the conventional approach, remains the appropriate principle in this stochastic setting. Specifically, within each matched set, we first identify the distributions that maximize the mean of the set-specific statistic and then, among all maximizers, break ties by choosing one with the largest  variance. 
We show that this principle yields the exact least-favorable upper tail for two-point statistics and an asymptotically conservative upper tail for general statistics.

We consider a broad class of possible distributions for the hidden confounders.
If this class were left completely unrestricted, it would include degenerate point masses on the most adverse configurations and would therefore collapse to the conventional deterministic worst-case analysis. To move beyond that extreme, we impose a mean-band restriction which requires the expected values of the hidden confounders to lie away from the extreme endpoints.
This restriction  rules out the most extreme distributions without committing to a specific parametric family. 
The strength of this distributional restriction is controlled by the additional sensitivity parameter \(g\in[0,1/2]\). The usual parameter \(\Gamma\) controls the magnitude of hidden bias, while \(g\) controls how close the conditional distribution of the hidden confounder may come to deterministic worst-case alignment. When \(g=0\), the class includes the degenerate distributions that recover the conventional sensitivity analysis; as \(g\) increases, these near-deterministic adverse distributions are progressively ruled out.
Within this class, we further study two interpretable subclasses. The first is a two-group subclass, in which the joint distribution for the hidden confounders within a matched set follows a mixture of product distributions, with each product comprised of at most two distinct marginals. Motivated by the genetic illustration in the Hammond example, we also consider a Bernoulli subclass, in which the hidden confounders take the endpoint values with different probabilities. In both cases, the least favorable distributions retain the conventional  top-$k$ structure (see Section~\ref{sec:rosen} and Section~\ref{sec:classes} for details).

This framework allows us to assess how robustness statements arising from the conventional approach change as stochasticity in the conditional distribution of the unmeasured confounder is introduced. In Section~\ref{sec:power}, design sensitivity
calculations show that small positive values of \(g\) can substantially
increase the value of \(\Gamma\) at which rejection is sustained, relative to
\(g=0\) of the conventional sensitivity analysis. The applications in
Section~\ref{sec:application} illustrate the same phenomenon in finite samples. For Hammond's smoking example, under the conventional approach allowing for perfect alignment between the unobserved confounder and the potential outcomes, two individuals in the same matched set would need to differ in their odds of smoking by a factor of 5.6 in order to overturn the finding of smoking causing lung cancer. If one restricts the conditional distribution of $U_{ij}$ given the potential outcomes and covariates to $\operatorname{Bernoulli}(p_{ij})$ with $p_{ij}\in [g,1-g] = [0.1,0.9]$ for each $(i,j)$ and then finds the worst-case success probabilities $p_{ij}$ for all $(i,j)$, two individuals in the same matched set would need to differ in their odds of smoking by a factor of 14.5 to overturn the finding of smoking causing lung cancer.


The rest of the paper is organized as follows. Section~\ref{sec:background} introduces the setup for matched observational studies and reviews the conventional sensitivity model. Section~\ref{sec:justify-mean} establishes exact and asymptotic justification for the separable algorithm under stochastic confounding. Section~\ref{sec:classes} develops concrete classes of distributions and characterizes their least favorable distributions. Section~\ref{sec:power} compares stochastic and deterministic sensitivity analysis through design sensitivity. Section~\ref{sec:application} presents the reanalyses of Hammond's data along with another study investigating the impact of binge drinking on blood pressure.

\section{Background: matched design and  sensitivity analysis}\label{sec:background}
\subsection{Matched design and sensitivity model}
We consider an observational study partitioned into $I$ matched sets on the basis of observed pre-treatment covariates. Matched set $i=1,...,I$ contains $n_i\ge 2$ units indexed by $j=1,...,n_i$, with total sample size $N=\sum_{i=1}^I n_i$. Each matched set contains one treated unit and $n_i-1$ controls; extensions to full matching are straightforward (e.g., \citet[][Ch.~4, Prob.~4.12]{Rosenbaum2002}). For unit $(i,j)$, let $Z_{ij}\in\{0,1\}$ indicate treatment assignment, so that $\sum_{j=1}^{n_i} Z_{ij}=1$; let $\bx_{ij}\in\mathbb{R}^p$ denote the observed covariates used in matching; and let $(r_{Tij},r_{Cij})$ be the potential outcomes under treatment and control. Under SUTVA \citep{rub74}, the observed outcome is $R_{ij}=Z_{ij}r_{Tij}+(1-Z_{ij})r_{Cij}.$
Let $\bZ=(Z_{11},\ldots,Z_{In_I})^\top$ and $\bZ_i=(Z_{i1},\ldots,Z_{in_i})^\top$, and use the same stacking convention for $\bR$, $\br_T$, $\br_C$, and other set-level quantities. Define
\[
\Omega
:=
\Bigl\{\bz\in\{0,1\}^N:\ \sum_{j=1}^{n_i} z_{ij}=1 \text{ for all } i=1,\dots,I\Bigr\},
\qquad
\cZ:=\{\bZ\in\Omega\},
\]
where $\Omega$ is the set of treatment assignments consistent with the matched design and $\cZ = \{\bZ \in \Omega\}$ is the event that the realized assignment belongs to this set. We work in the finite population framework, conditioning on $\mca:=\{\br_T,\br_C,\bx\}$.

Let $G_{ij}$ denote an unmeasured confounder for unit $(i,j)$, and collect these as
$\bG_i=(G_{i1},\ldots,G_{in_i})^\top$ and $\bG=(\bG_1,\dots,\bG_I)$. Write $\pi_{ij}=\P(Z_{ij}=1\mid \mca,\bG)$ for the treatment assignment probability prior to matching, and assume that $Z_{ij}$ are independent given $(\mca, \bG)$. We consider the
sensitivity model
\begin{equation}\label{eq:sensmodel}
\log\!\left(\frac{\pi_{ij}}{1-\pi_{ij}}\right)
=
\kappa(\bx_{ij})+\log(G_{ij}),
\end{equation}
where $\kappa(\cdot)$ is an unknown function of the observed covariates. The sensitivity parameter
$\Gamma\ge1$ controls the strength of unmeasured confounding by restricting the support of $G_{ij}$ to
$[1,\Gamma]$. The conventional sensitivity model is recovered as the special case $G_{ij}=\Gamma^{U_{ij}}$ for some $U_{ij}\in[0,1]$ \citep{ros87}.

Under the idealized setting where matching is exact, \(\bx_{ij}=\bx_{ij'}\) for all \(j,j'\) within set \(i\). Conditioning on \(\cZ\) then removes the nuisance term \(\kappa(\bx_{ij})\) within set \(i\), yielding the conditional treatment probabilities
\begin{equation}\label{eq:rho}
\varrho_{ij}(\bG_i)
:= \P(Z_{ij}=1\mid \mca,\cZ,\bG)
=
\frac{G_{ij}}{\sum_{\ell=1}^{n_i} G_{i\ell}}.
\end{equation}
Thus, after conditioning on the matched design, the treatment assignment probabilities within set $i$ depend only on $\bG_i$. 
The sensitivity parameter $\Gamma$ quantifies the magnitude of hidden bias: within a matched set, two units with the same observed covariates may differ in their conditional treatment probabilities by at most a factor of $\Gamma$ because of unmeasured confounding.
When $\Gamma=1$, we have $G_{ij}\equiv 1$ and hence $\varrho_{ij}=1/n_i$ for all $j$. As $\Gamma$ increases, the restriction $G_{ij}\in[1,\Gamma]$ permits greater heterogeneity within $\bG_i$, allowing $\varrho_{ij}$ to depart further from $1/n_i$ and thereby representing stronger potential hidden bias.

In this work, we test Fisher's sharp null hypothesis of no causal effect,
\begin{equation}\label{eq:sharpnull}
H_F:\ r_{Tij}=r_{Cij}\quad \forall\, i,j.
\end{equation}
Under $H_F$, all missing potential outcomes are imputed by the observed outcomes. Extensions to other null hypotheses are available; see \citet[][Section~5]{Rosenbaum2002} and \citet[][Sections~2.4--2.5]{designofobs}. Let $\bR_{\bZ}$ denote the vector of observed outcomes under assignment $\bZ$. We consider test statistics of the form
\[
T(\bZ)=\bZ^\top \bq=\sum_{i=1}^I\sum_{j=1}^{n_i} Z_{ij}\,q_{ij},
\qquad \bq=q(\bR_{\bZ}),
\]
which include many common test statistics through an appropriate choice of the score function $q(\cdot)$ \citep{Rosenbaum2002}. Under \eqref{eq:sharpnull}, the score vector is fixed across $\Omega$: for any $\zz\in\Omega$, $q(\bR_{\zz})=q(\bR_{\bZ})$. Hence $T(\bz)=\bz^\top\bq$ is well defined for every $\bz\in\Omega$. Assuming the sensitivity model \eqref{eq:sensmodel} holds at a given $\Gamma$, the conditional right-tail probability of $T$ is
\begin{equation}\label{eq:tailprob}
\P\!\bigl(T(\bZ)\ge a \,\big|\, \mca,\cZ,\bG\bigr)
=
\sum_{\bz\in\Omega} \1\!\bigl\{\bz^\top\bq\ge a\bigr\}\,
\P(\bZ=\bz\mid \mca,\cZ,\bG),
\end{equation}
where $\P(\bZ=\bz\mid \mca,\cZ,\bG)$ is induced by the within-set probabilities in \eqref{eq:rho}. In particular, when $a$ is taken to be the observed value of the test statistic $T_{\mathrm{obs}}$, \eqref{eq:tailprob} is the conditional right-tail $p$-value.

Although this \(p\)-value is computable  under \(H_F\) once the treatment assignment distribution is specified, it is unavailable to the analyst when \(\Gamma>1\) because it depends on the unobserved confounders \(\bG\). Sensitivity analysis posits a class of mechanisms for hidden bias indexed by \(\Gamma\) and then maximizes \eqref{eq:tailprob} with $a = T_{\mathrm{obs}}$ over that class, the resulting upper bound is the worst-case \(p\)-value. The study's robustness to hidden bias can then be summarized by the largest value of \(\Gamma\) for which rejection of \(H_F\) at level \(\alpha\) persists.

\subsection{Conventional sensitivity analysis and the separable algorithm}\label{sec:rosen}
Fix $\Gamma\ge 1$. The conventional sensitivity analysis treats the unmeasured confounders $G_{ij}\in[1,\Gamma]$ as fixed and upper-bounds \eqref{eq:tailprob} by maximizing it over all feasible configurations of $\bG$:
\begin{equation}\label{eq:rosen_worstp}
\sup_{\bG\in[1,\Gamma]^N}\;\P\!\bigl(T(\bZ)\ge a \,\big|\, \mca,\cZ,\bG\bigr).
\end{equation}
Thus, the conventional analysis targets validity conditional on \((\mca,\cZ,\bG)\), uniformly over all fixed realizations of the hidden confounders. 
With $a=T_{\obs}$, this results in a conservative $p$-value that is valid uniformly over all fixed $\bG\in[1,\Gamma]^N$. 

As noted by \citet{gas00}, the optimization in \eqref{eq:rosen_worstp} is generally nonseparable, except for two-point statistics (see Section~\ref{sec:two-point}). That is, the least favorable configuration of \(\bG\) cannot, in general, be obtained by solving separate optimization problems set by set and then combining the results. 
For large samples, \citet{gas00} proposed a separable approximation based on a normal approximation to the distribution of \(T\).  
Writing $T=\sum_{i=1}^I T_i,$ $T_i=\sum_{j=1}^{n_i} Z_{ij}q_{ij},$
their algorithm proceeds as follows. In matched set \(i\), it first chooses \(\bG_i\) to maximize the conditional mean of \(T_i\), and then, among all maximizers, selects the one that maximizes the conditional variance of \(T_i\). The resulting expectations and variances are then aggregated across matched sets to form the approximated upper bound.

To describe this procedure, let
\[
\mu_i(\bG_i)
:= \E(T_i\mid \mca,\cZ,\bG_i)
= \sum_{j=1}^{n_i} \varrho_{ij}(\bG_i)\,q_{ij}
\]
denote the conditional mean of \(T_i\) given \(\bG_i\). The corresponding maximized mean is
\begin{equation}\label{eq:mu_given_G}
(\mu_i)^R
:=\max_{\bG_i\in[1,\Gamma]^{n_i}}\ \mu_i(\bG_i)
=
\max_{\bG_i\in[1,\Gamma]^{n_i}}
\sum_{j=1}^{n_i}
\frac{G_{ij}}{\sum_{\ell=1}^{n_i}G_{i\ell}}\ q_{ij}.
\end{equation}
After relabeling the units so that \(q_{i1}\ge \cdots \ge q_{in_i}\), a maximizer has a top-\(k\) form: there exists \(k\in\{1,\dots,n_i-1\}\) such that
\begin{equation}\label{eq:rosen_topk}
    G_{ij}^\ast=
\begin{cases}
\Gamma, & j\le k,\\
1, & j>k.
\end{cases}
\end{equation}
The maximizing value of $k$ need not be unique. When several values of $k$ attain the same maximum mean $(\mu_i)^R$, the method breaks ties by choosing the configuration that maximizes the conditional variance
\[
\nu_i^2(\bG_i)
:=\Var(T_i\mid \mca,\cZ,\bG_i)
=\sum_{j=1}^{n_i}\varrho_{ij}(\bG_i)\,q_{ij}^2
-\left\{\sum_{j=1}^{n_i}\varrho_{ij}(\bG_i)\,q_{ij}\right\}^2.
\]
Let $(\nu_i^2)^R$ denote this maximized variance. Under mild regularity conditions, the right-tail probability in \eqref{eq:tailprob} is asymptotically upper-bounded by the tail probability of a normal random variable with mean $\sum_{i=1}^I(\mu_i)^R$ and variance $\sum_{i=1}^I(\nu_i^2)^R$.
This approximation was shown to recover the correct optimum up to negligible error \citep{gas00}.

Because the objective in \eqref{eq:rosen_worstp} optimizes over all fixed confounder realizations after conditioning on the potential outcomes, the conventional analysis can be driven by configurations in which hidden bias is nearly perfectly aligned with the potential outcomes.
As an illustration, consider matched pairs with \(n_i\equiv 2\) and \(q_{i1}\ge q_{i2}\). The maximum in \eqref{eq:mu_given_G} is attained by the extreme configuration \(G_{i1}=\Gamma\) and \(G_{i2}=1\). Thus, the conventional sensitivity analysis assigns the largest confounder value to the unit with the larger score and the smallest  confounder value to the unit with the smaller score. 
In the motivating example of Section~\ref{subsec:mt_example}, this means that, in each discordant pair, the subject who died of lung cancer is assigned the larger confounder value, as if that subject carried the risk allele while the matched subject did not. 

\subsection{From deterministic worst-case bias to stochastic confounding}
\label{subsec:stochastic-G}
By maximizing over all fixed realizations of \(\bG\) in (\ref{eq:rosen_worstp}), the conventional sensitivity analysis provides valid inference for any conditioning set \((\mca,\cZ,\bG)\). Were one to additionally imagine that $\mathbf{G}$ has a conditional distribution given $(\mca, \cZ)$, the conventional approach would also confer validity conditional upon $(\mca, \cZ)$. In what follows we instead target inference conditional on \((\mca,\cZ)\) directly while treating the hidden confounders as random with an unknown conditional distribution. Thus, the adversary no longer chooses a single worst-case realization of \(\bG\), but rather a conditional distribution for \(\bG\) within a specified class.

For each matched set \(i\), let \(\cP_i\) denote a class of candidate laws for \(\bG_i=(G_{i1},\ldots,G_{in_i})^\top\), each supported on $[1,\Gamma]^{n_i}$. 
As in the usual randomization framework for matched observational studies, we treat distinct matched sets as independent blocks. The class of candidate laws for $\bG$ can thus be denoted as $\cP
:=
\{P=\bigotimes_{i=1}^I P_i:\ P_i\in\cP_i\}$.
Note that, if each \(\cP_i\) were allowed to contain all distributions on \([1,\Gamma]^{n_i}\), then point-mass distributions would be included and the resulting  bound would collapse to the conventional sensitivity analysis. The stochastic formulation becomes distinct from the conventional analysis once \(\cP_i\) rules out some such degenerate conditional distributions.

Because \(\bG\) is no longer treated as fixed in the analysis, the optimization now ranges over distributions \(P\in\cP\) for \(\bG\), rather than over fixed vectors \(\bG\in[1,\Gamma]^N\).
 For any threshold \(a\), the corresponding
worst-case right-tail probability is
\[
\sup_{P\in\cP}\;\P_P\!\bigl(T(\bZ)\ge a \,\big|\, \mca,\cZ\bigr)
=
\sup_{P\in\cP}\;
\E_{P}\!\Big[\P\!\bigl\{T(\bZ)\ge a \,\big|\, \mca,\cZ,\bG\bigr\}\Big].
\]
Here \(\P_P\) denotes the joint law obtained by first drawing
\(\bG\sim P\) conditional on \((\mca,\cZ)\), and then drawing the treatment
assignment according to the conditional probabilities in \eqref{eq:rho}. Thus,
the guarantee is uniform over conditional distributions \(P\in\cP\), rather
than over every fixed realization of \(\bG\). When \(a=T_{\obs}\), the
quantity is the stochastic worst-case \(p\)-value.

As in the deterministic optimization problem \eqref{eq:rosen_worstp}, direct optimization over \(\cP\) is generally nonseparable across matched sets except for two-point statistics (see Section~\ref{sec:two-point}).
Motivated by \citet{gas00}, we consider the same separable large-sample principle under our stochastic formulation, based on the asymptotic Gaussian behavior of \(T=\sum_{i=1}^I T_i\). Specifically, in each matched set, we first choose a distribution that maximizes the mean of \(T_i\), and among all such maximizers choose one that maximizes the variance of \(T_i\). Section~\ref{sec:justify-mean} shows that, although originally developed for deterministic confounder configurations, this separable algorithm remains valid under our stochastic formulation.

We now describe this separable algorithm for our stochastic formulation. 
For each \(P_i\in\cP_i\), denote the induced treatment probabilities as $\varrho_{ij}(P_i) := \E_{P_i}\!\bigl[\varrho_{ij}(\bG_i)\bigr]$.
After integrating over \(\bG_i\), we have $\mu_i(P_i)
:=
\E_{P_i}(T_i\mid \mca,\cZ)
=
\sum_{j=1}^{n_i}\varrho_{ij}(P_i)\,q_{ij},$
and $\nu_i^2(P_i)
:=
\Var_{P_i}(T_i\mid \mca,\cZ)
=
\sum_{j=1}^{n_i}\varrho_{ij}(P_i)\,q_{ij}^2
-
\mu_i(P_i)^2.$
In matched set \(i\), we solve
\begin{equation}\label{eq:obj_mean}
\sup_{P_i\in\cP_i}\; \mu_i(P_i),
\end{equation}
and, among all maximizers of \eqref{eq:obj_mean}, select a distribution maximizing \(\nu_i^2(P_i)\). 
A degenerate case arises when \(q_{i1}=\cdots=q_{in_i}\). Since
\(\sum_{j=1}^{n_i} Z_{ij}=1\), we then have $T_i=q_{i1}$
almost surely under every distribution \(P_i\). Hence $\mu_i(P_i)=q_{i1}$ and $\nu_i^2(P_i)=0$
for all \(P_i\in\cP_i\). The optimization is therefore trivial in
this case. Therefore, without loss of generality, we may assume that \(q_{i1}>q_{in_i}\) throughout this paper.
Section~\ref{sec:classes} studies several interpretable choices of the classes \(\cP_i\) and derives the corresponding least favorable distributions.

\section{Justification of the separable algorithm
}\label{sec:justify-mean}
In this section, we justify the separable algorithm as a valid least-favorable construction. The justification has two parts. For two-point statistics, the mean maximization step is exactly least favorable for the upper tail. For general statistics, the algorithm yields an asymptotically conservative upper tail under regularity conditions.
\subsection{Two-point statistics and exact upper tail bound}\label{sec:two-point}
For a broad class of test statistics for which each $T_i$ takes at most two values, the separable algorithm  yields the exact least-favorable distribution for the upper tail. Specifically, suppose that
\begin{equation}\label{eq:two-value-T}
T_i \;=\; a_{i2} \;+\; (a_{i1}-a_{i2})\sum_{j=1}^{n_i} Z_{ij}\,\1\{q_{ij}=a_{i1}\},\quad a_{i1} > a_{i2}.
\end{equation}
Statistics of such form include the class of sign-score statistics
\citep[\S4.3--4.4]{ros88,Rosenbaum2002}. This representation covers, for example, all test statistics in matched-pairs designs, as well as several widely used statistics under matching with multiple controls, including the Mantel--Haenszel statistic for binary outcomes.
The next proposition shows that for these statistics, choosing in each matched set a distribution that maximizes \(\mu_i(P_i)\) is exactly least favorable for the entire upper tail.
\begin{proposition}\label{prop:sum-fsd}
Consider a class of candidate distributions \(\cP\) and a test statistic \(T=\sum_{i=1}^I T_i\), where each \(T_i\) is of the form \eqref{eq:two-value-T}. For each matched set \(i\), let
\[
P_i^\ast \in \argmax_{P_i\in \cP_i}\; \mu_i(P_i),
\]
and define \(P^\ast:=\bigotimes_{i=1}^I P_i^\ast\). Then, for every \(a\in\mathbb{R}\),
\begin{equation}\label{eq:tail-prob-bound}
\sup_{P\in\cP}\;
\P_P\!\bigl(T(\bZ)\ge a \mid \mca,\cZ\bigr)
=
\P_{P^\ast}\!\bigl(T(\bZ)\ge a \mid \mca,\cZ\bigr).
\end{equation}
\end{proposition}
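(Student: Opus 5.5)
Under \(\P_P\) with \(P=\bigotimes_i P_i\), the matched sets are independent. Integrating \(\bG_i\) out against \(P_i\) shows that, conditional on \((\mca,\cZ)\), each \(T_i\) is a two-point random variable taking \(a_{i1}\) with probability
\[
p_i(P_i):=\sum_{j:\,q_{ij}=a_{i1}}\varrho_{ij}(P_i)
\]
and \(a_{i2}\) otherwise. Independence across sets holds because the \(\bG_i\) are independent across \(i\) and, given \(\bG\), the assignments in distinct sets are independent. So the law of \(T\) under \(\P_P\) is that of \(\sum_i \{a_{i2}+(a_{i1}-a_{i2})B_i\}\) with independent \(B_i\sim\operatorname{Bernoulli}(p_i(P_i))\), and it depends on \(P\) only through the vector \((p_1(P_1),\dots,p_I(P_I))\).

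Next, \(\mu_i(P_i)=a_{i2}+(a_{i1}-a_{i2})p_i(P_i)\). Since \(a_{i1}>a_{i2}\), this is strictly increasing in \(p_i(P_i)\). Hence \(P_i^\ast\) maximizes \(\mu_i\) over \(\cP_i\) exactly when it maximizes \(p_i\), and \(p_i^\ast:=p_i(P_i^\ast)\ge p_i(P_i)\) for every \(P_i\in\cP_i\).

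It then suffices to show that \(\P(\sum_i c_i B_i\ge b)\) is nondecreasing in each \(p_i\) when every \(c_i=a_{i1}-a_{i2}>0\). I would use a monotone coupling. Draw independent \(V_i\sim\mathrm{Unif}(0,1)\) and set \(B_i=\1\{V_i\le p_i(P_i)\}\) and \(B_i^\ast=\1\{V_i\le p_i^\ast\}\). Then \(B_i\le B_i^\ast\) pointwise, so \(T\le T^\ast\) pointwise, where \(T\) and \(T^\ast\) have the laws of the statistic under \(\P_P\) and \(\P_{P^\ast}\). Consequently \(\1\{T\ge a\}\le \1\{T^\ast\ge a\}\) for every \(a\), and taking expectations gives
\[
\P_P(T\ge a\mid\mca,\cZ)\le\P_{P^\ast}(T\ge a\mid\mca,\cZ).
\]
This is first-order stochastic dominance of the sum. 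Taking the supremum over \(P\in\cP\) gives ``\(\le\)'' in \eqref{eq:tail-prob-bound}. The reverse inequality holds because \(P^\ast\in\cP\), since \(\cP\) is a product class and each \(P_i^\ast\in\cP_i\).

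The only step needing care is the first: checking that mixing over \(\bG_i\) preserves both the two-point structure and the independence across sets, so that the whole problem reduces to the scalars \(p_i\). This follows from linearity of expectation and the product form of \(P\). If the argmax is not attained, the same argument applied to a maximizing sequence still yields the supremum. The statement, however, presupposes that \(P_i^\ast\) exists, so this case need not be treated.
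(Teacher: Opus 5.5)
Your proposal is correct and follows essentially the same route as the paper's proof. Like the paper, you reduce each $T_i$ to a two-point variable with success probability $p_i(P_i)$, observe that maximizing $\mu_i$ is equivalent to maximizing $p_i$, and couple through independent uniforms to get $T^\ast\ge T$ pointwise. You are slightly more explicit than the paper in two places: independence across matched sets under a product law, and the reverse inequality following from $P^\ast\in\mathcal{P}$.
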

\noindent
Proposition~\ref{prop:sum-fsd} shows that, for two-point statistics, maximizing \(\mu_i(P_i)\) set by set yields the exact worst-case right-tail probability, and hence the exact worst-case \(p\)-value. This result is also the stochastic analogue of the corresponding exact separability result in the conventional sensitivity analysis for two-point statistics \citep{ros88}. The difference is that, here, the optimization ranges over distributions \(P\) for the hidden confounders rather than over fixed confounder configurations.

\subsection{General statistics and asymptotic upper tail bound}

Proposition~\ref{prop:sum-fsd} shows that, for two-point statistics, the separable algorithm yields exactly the least favorable upper tail.  For general statistics,
such finite-sample exactness need not hold. In the following, we show that the separable distribution returned by the separable algorithm produces asymptotically conservative upper tail probabilities.

We introduce the following notations and assumptions for our theoretical result.
Consider a class of distributions \({\cP} = \bigotimes_{i=1}^I \cP_{i}\).
For each \(i\), let \(P_{i}^\ast \in \cP_i\) be the distribution returned by the separable algorithm, and let $P^\ast:=\bigotimes_{i=1}^I P_{i}^\ast.$ Write $\mu_{i}^\ast:=\mu_i(P_{i}^\ast)$ and  $(\nu_{i}^\ast)^2:=\nu_{i}^2(P_i^*)$.
For any distribution $P=\bigotimes_{i=1}^I P_{i}\in {\cP}$,  write $\mu_{i}:=\mu_i(P_{i}),$ $\nu_{i}^2:=\nu_{i}^2(P_i).$
By construction,
$
\mu_{i}^\ast\ge \mu_{i},
$ and if $\mu_{i}^\ast=\mu_{i},$
 then $(\nu_{i}^\ast)^2\ge \nu_{i}^2.$

\begin{assumption}\label{assump:asym-separable}
For any \(P\in{\cP}\)  and  \(P^\ast\) as defined above, the following holds.
\begin{enumerate}
\item[(i)] There exist constants \(\zeta>0\), \(M<\infty\), and
\(\underline\nu^2>0\) such that, for any sufficiently large \(I\),
$
\frac{1}{I}\sum_{i=1}^I \nu_i^2\ge \underline\nu^2
$, $
\frac{1}{I}\sum_{i=1}^I (\nu_i^\ast)^2\ge \underline\nu^2,
$
$
\frac{1}{I}\sum_{i=1}^I
\E_P\left[
\left|T_i-\mu_i\right|^{2+\zeta}
\right]\le M,$
and $
\frac{1}{I}\sum_{i=1}^I
\E_{P^\ast}\left[
\left|T_i-\mu_i^\ast\right|^{2+\zeta}
\right]\le M.
$

\item[(ii)] There exists a constant \(\bar\nu^2<\infty\) such that, for all
sufficiently large \(I\) and all \(i=1,\ldots,I\),
$\nu_i^2\le \bar\nu^2$ and $(\nu_i^\ast)^2\le \bar\nu^2.$ Let $A_I(P):=\{i:(\nu_i^\ast)^2<\nu_i^2\}$ and $\pi_I(P):={|A_I(P)|}/{I}.$
There exists a constant \(\delta>0\) such that, for any sufficiently
large \(I\), 
$
\frac{1}{I}\sum_{i=1}^I(\mu_i^\ast-\mu_i)
\ge
\delta\,\pi_I(P).
$
\end{enumerate}
\end{assumption}

Assumption 1 is the stochastic analogue of the regularity
conditions used to justify the separable approximations in the
conventional  analysis \citep{gas00}. Specifically, the conditions in (i) are Lyapunov-type conditions that ensure the Gaussian
approximations are valid. Condition (ii) is needed for the separable algorithm \citep[Section~4]{gas00}, which rules out competing distributions that are close to \(P^\ast\) in mean while having larger variance than \(P^\ast\) on a nonnegligible
fraction of matched sets.

\begin{theorem}\label{thm:asym-tail}
Under Assumption~\ref{assump:asym-separable}, and for every fixed \(c>0\), let $a
:=
\sum_{i=1}^I \mu_{i}
+
c(\sum_{i=1}^I \nu_{i}^2)^{1/2}$,
then, for every \(\epsilon>0\), there exists \(I^\ast\) such that for all \(I\ge I^\ast\),
\begin{equation}\label{eq:asym-tail-bound}
\P_{P^\ast}\!\bigl(T(\bZ)\ge a\mid \mca,\cZ\bigr)
\;\ge\;
\P_{P}\!\bigl(T(\bZ)\ge a\mid \mca,\cZ\bigr)
-\epsilon.
\end{equation}
Moreover, define the Gaussian right-tail \(p\)-value computed under \(P^\ast\) by
\[
p^\ast(T_{\obs})
:=
1-\Phi\left(
\frac{
T_{\obs}-\sum_{i=1}^I \mu_i^\ast
}{
\sqrt{\sum_{i=1}^I(\nu_i^\ast)^2}
}
\right).
\]
If Fisher's sharp null \(H_F\) holds and \(P\) is the true distribution of the
hidden confounders, then, for every fixed \(\alpha\in(0,1/2)\),
\[
\limsup_{I\to\infty}
\P_{P}\!\big(
p^\ast(T_{\obs})\le \alpha
\mid \mca,\cZ\big)
\le
\alpha.
\]
\end{theorem}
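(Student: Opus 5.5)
Under $\P_P(\cdot\mid\mca,\cZ)$ the blocks $T_1,\dots,T_I$ are independent: $\bG_i\sim P_i$ independently and $\bZ_i$ is drawn from $\varrho_{i\cdot}(\bG_i)$ given $\bG_i$. Marginally, $T_i$ takes value $q_{ij}$ with probability $\varrho_{ij}(P_i)$, with mean $\mu_i$ and variance $\nu_i^2$. The same holds under $P^\ast$. The proof therefore reduces to a comparison of two sums of independent, non-identically distributed variables.

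\textbf{Step 1 (CLT for both laws).} Write $S_I:=\sum_i\nu_i^2$ and $S_I^\ast:=\sum_i(\nu_i^\ast)^2$. By Assumption~\ref{assump:asym-separable}(i), both are at least $I\underline\nu^2$. The same assumption bounds the $(2+\zeta)$ central moments by $IM$. Hence the Lyapunov ratio is at most $IM/(I\underline\nu^2)^{1+\zeta/2}\to0$. The Berry--Esseen-type (Lyapunov) bound then gives
\[
\sup_x\bigl|\P_P\{(T-\textstyle\sum_i\mu_i)/S_I^{1/2}\ge x\}-\{1-\Phi(x)\}\bigr|\to0,
\]
and the analogous statement under $P^\ast$. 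Uniformity in $x$ matters because the thresholds below move with $I$. Consequently $\P_P(T\ge a)\to1-\Phi(c)$, and $\P_{P^\ast}(T\ge a)=1-\Phi(x_I)+o(1)$, where
\[
x_I:=\Bigl\{\textstyle\sum_i(\mu_i-\mu_i^\ast)+cS_I^{1/2}\Bigr\}\big/ (S_I^\ast)^{1/2}.
\]
It therefore suffices to show $\limsup_I x_I\le c$, up to an arbitrarily small slack absorbed into $\epsilon$ by continuity of $\Phi$.

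\textbf{Step 2 (the key comparison).} Let $D:=\sum_i(\mu_i^\ast-\mu_i)\ge0$. On the complement of $A_I(P)$ we have $(\nu_i^\ast)^2\ge\nu_i^2$. On $A_I(P)$ we have $\nu_i^2-(\nu_i^\ast)^2\le\bar\nu^2$ by (ii). Thus $S_I\le S_I^\ast+\bar\nu^2|A_I|=S_I^\ast+\bar\nu^2 I\pi_I$. Part (ii) also gives $D\ge\delta I\pi_I$. Using $\sqrt{u+v}\le\sqrt u+v/(2\sqrt u)$, we get
\[
x_I\le c+\frac{c\,\bar\nu^2 I\pi_I}{2S_I^\ast}-\frac{D}{(S_I^\ast)^{1/2}}\le c+I\pi_I\Bigl(\frac{c\bar\nu^2}{2I\underline\nu^2}-\frac{\delta}{\sqrt{I}\,\bar\nu}\Bigr),
\]
where the last step uses $I\underline\nu^2\le S_I^\ast\le I\bar\nu^2$. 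For large $I$ the bracket is negative, because the second term is of order $I^{-1/2}$ and the first is of order $I^{-1}$. Hence $x_I\le c$, which proves \eqref{eq:asym-tail-bound}. I expect this step to be the main obstacle. The point is that the mean gap must dominate the variance surplus, and this comparison only works through the $\sqrt I$ versus $I$ scaling. The constant $c>0$ ensures the variance enlargement hurts $P$ only in the correct direction. The uniform CLT in Step 1 must be stated carefully so that $o(1)$ errors do not depend on the sequence $P$.

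\textbf{Step 3 (validity of $p^\ast$).} Let $z:=\Phi^{-1}(1-\alpha)>0$, using $\alpha<1/2$. The event $p^\ast\le\alpha$ is $\{T\ge\sum_i\mu_i^\ast+z(S_I^\ast)^{1/2}\}$. Rewrite this threshold as $\sum_i\mu_i+y_IS_I^{1/2}$ with
\[
y_I=\{D+z(S_I^\ast)^{1/2}\}/S_I^{1/2}.
\]
Using $S_I\le S_I^\ast+\bar\nu^2I\pi_I$ and $D\ge\delta I\pi_I$, an argument analogous to Step 2 shows $\liminf_I y_I\ge z$. The correction is handled by the same $I^{-1/2}$ versus $I^{-1}$ comparison, and if $D$ is large then $y_I$ is larger still. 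By the uniform CLT under $P$, $\P_P(p^\ast\le\alpha)\le1-\Phi(y_I)+o(1)\le\alpha+o(1)$, which gives the limsup claim.
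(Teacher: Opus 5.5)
Your proposal is correct and follows essentially the same route as the paper: Lyapunov CLTs under both $P$ and $P^\ast$, then the threshold domination $\sum_i\mu_i+c(\sum_i\nu_i^2)^{1/2}\le\sum_i\mu_i^\ast+c\{\sum_i(\nu_i^\ast)^2\}^{1/2}$ for large $I$, and the same inequality reused with $c=z_\alpha$ for the type-I error claim. The domination step combines the variance-surplus bound $\bar\nu^2 I\pi_I$, the square-root linearization and the mean-gap bound $\delta I\pi_I$, exactly as in the paper.

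The only difference is cosmetic. You in fact obtain $x_I\le c$ (and $y_I\ge z$) exactly for large $I$, so monotonicity of the tail in the threshold plus a pointwise CLT at $c$ suffices, as in the paper. The uniform Berry--Esseen/P\'olya CLT you invoke is therefore not needed.
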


\noindent
Theorem~\ref{thm:asym-tail} shows that the upper tail under \(P^\ast\) is asymptotically no smaller than that under any \(P\in\cP\), up to an arbitrarily small error. That is, for any candidate class \(\cP\), the separable algorithm yields an asymptotically conservative upper tail over that class. Moreover, under $H_F$, if \(P\) is the true conditional distribution for the hidden confounders,  the Gaussian approximated \(p\)-value computed under \(P^\ast\) controls the type-I error asymptotically. 

\section{Interpretable distribution classes and least favorable distributions}
\label{sec:classes}
 
Our initial results in Section~\ref{sec:justify-mean} on the stochastic formulation have left the classes \(\cP_i\) abstract. We now construct concrete choices of \(\cP_i\) and derive their least favorable distributions using the separable algorithm.  Recall that $\cP_i$ is the candidate class for the conditional law of $\bG_i$ given $(\mca, \cZ)$. Since the sensitivity model \eqref{eq:sensmodel} lets \(G_{ij}\) influence treatment assignment, conditioning on \(\cZ\) may induce dependence among \(G_{i1},\ldots,G_{in_i}\).
To account for such dependence, we consider general families of mixtures of product laws, allowing arbitrarily many mixture components. Such families, as is well known in the literature, provide a flexible way to capture arbitrary dependence under mild conditions \citep{mclachlan2000finite,kolda09,banerjee13}. 
Furthermore, as illustrated in the following examples, the mixture representation is naturally compatible with our sensitivity model setting. 

\begin{example}\label{Example-1}
   Let $\cP([1,\Gamma])$ denote the set of laws on $[1, \Gamma]$ and $\Omega_i := \left\{\bz_i\in\{0,1\}^{n_i}: \sum_{j=1}^{n_i}z_{ij}=1
\right\}$. Suppose that the coordinates of \(\bG_i\) are independent conditional on \(\mca\). By (\ref{eq:sensmodel}) and independence between treatment assignments given $(\mca,\bG)$, Bayes' rule implies that, for each
fixed \(\bz_i\in\Omega_i\),  $\P(\bG_i\in\cdot\mid \mca,\bZ_i=\bz_i)$ is a product law:
\[
\P(\bG_i\in\cdot\mid \mca,\cZ,\bZ_i=\bz_i)
=
\P(\bG_i\in\cdot\mid \mca,\bZ_i=\bz_i)=
\bigotimes_{j=1}^{n_i}Q_{ij}^{\bz_i},
\]
for some 
\(Q_{ij}^{\bz_i} \in \cP([1, \Gamma])\); see Section~\ref{app:mixture-generative} of the web-based
supporting material for the proof. Moreover, since
\begin{equation}\label{eq:law_decomp}
    \begin{aligned}
    \P(\bG_i\in \cdot \mid \mca,\cZ)
&=
\sum_{\bz_i\in \Omega_i} \;\P(\bG_i\in \cdot,\,\bZ_i = \bz_i \mid \mca,\cZ) \\
&= \sum_{\bz_i\in \Omega_i}\; \P(\bG_i\in \cdot\mid \mca,\cZ, \bZ_i=\bz_i)\;\P(\bZ_i=\bz_i\mid \mca,\cZ),
\end{aligned}
\end{equation}
and $\sum_{\bz_i\in \Omega_i} \P(\bZ_i=\bz_i\mid \mca,\cZ) = 1$,
\(\P(\bG_i\in\cdot\mid \mca,\cZ)\) can be viewed as a mixture of product laws, with mixing weights
\(\P(\bZ_i=\bz_i\mid \mca,\cZ)\). Conditional independence between the coordinates of \(\bG_i\) given $\mca$ arises naturally in many generative models; for instance, it would hold if one views the tuples $(Z_{ij}, G_{ij}, \bx_{ij}, r_{Tij}, r_{Cij})$ as $iid$ draws from a superpopulation.
\end{example}
\begin{example}
In Example \ref{Example-1}, dependence among \(G_{i1},\ldots,G_{in_i}\) arises after averaging over the possible assignment vectors \(\bz_i\in\Omega_i\). More generally, such dependence may arise through some additional latent confounding variable \(L_i\).
Suppose that there exists a latent variable \(L_i\) such that, conditional on \((\mca,\cZ,L_i=\ell)\), the coordinates of \(\bG_i\) are independent with
\[
\bG_i\mid \mca,\cZ,L_i=\ell
\sim
\bigotimes_{j=1}^{n_i} Q_{ij}^{\ell},
\]
where \(Q_{ij}^{\ell} \in \cP([1,\Gamma])\).
Let \(\Pi_i\) denote the conditional distribution of \(L_i\) given
\((\mca,\cZ)\).  Then, 
\[
\P(\bG_i\in\cdot\mid \mca,\cZ)
=
\int
\left(\bigotimes_{j=1}^{n_i} Q_{ij}^{\ell}\right)(\cdot)
\,d\Pi_i(\ell).
\] 
Thus \(\P(\bG_i\in\cdot\mid \mca,\cZ)\) can be viewed as a mixture of product laws, with mixing weight distribution given by the conditional law \(\Pi_i\) of \(L_i\).
\end{example}
We now formalize such mixture-of-product representations by specifying \(\cP_i\) directly as the class of all mixtures over a base class of product laws.
Let \(\mcl_i^{\otimes}\) denote a \emph{base set} of product laws on \([1,\Gamma]^{n_i}\), where each
\(Q_i\in\mcl_i^{\otimes}\) has the form $Q_i=\bigotimes_{j=1}^{n_i} Q_{ij}$,
with \(Q_{ij} \in \cP([1,\Gamma])\).
For any probability measure \(\Lambda_i\) supported on \(\mcl_i^{\otimes}\), define the
induced mixture law \(P_{\Lambda_i}\) on \([1,\Gamma]^{n_i}\) by
\[
P_{\Lambda_i}(\cdot) = 
\int_{\mcl_i^{\otimes}} Q_i(\cdot)\,d\Lambda_i(Q_i).
\]
The class of all mixtures  over $\mcl_i^\otimes$ is denoted by
\[
\operatorname{Mix}(\mcl_i^{\otimes})
:=
\left\{
P_{\Lambda_i}:
\Lambda_i \text{ is a probability measure on }\mcl_i^{\otimes}
\right\}.
\]
It contains every product law in
\(\mcl_i^{\otimes}\), by taking \(\Lambda_i\) to be a point mass, but it also
contains non-product laws obtained by mixing over different product laws.
Note that any
\(P_{\Lambda_i}\in\operatorname{Mix}(\mcl_i^{\otimes})\) can be
generated through a two-stage sampling scheme. First, draw a random product law
\(Q_i\sim\Lambda_i\). Then, conditional on \( Q_i=\bigotimes_{j=1}^{n_i} Q_{ij}\), generate $\bG_i\mid\mca,\cZ, Q_i
\sim
\bigotimes_{j=1}^{n_i} Q_{ij}.$


In addition to flexibility and interpretability, the proposed mixture class also enjoys appealing theoretical properties. 
The following theorem shows that the separable algorithm can be implemented without optimizing directly over all mixtures, by reducing the problem to the corresponding one over the base product class.

\begin{theorem}\label{thm:mix-product-mean}
Let \(\mcl_i^{\otimes}\) be a base
class of product laws, and define $\cP_i
:=
\operatorname{Mix}(\mcl_i^{\otimes}).$
Then
\begin{equation}\label{eq:sup_same}
\sup_{P_i\in\cP_i}\mu_i(P_i)
=
\sup_{Q_i\in\mcl_i^{\otimes}}\mu_i(Q_i).
\end{equation}
Let $(\mcl_i^{\otimes})^\star
:=
\operatorname*{arg\,max}_{Q_i\in\mcl_i^{\otimes}}\mu_i(Q_i)$ and $\cP_i^\star
:=
\operatorname*{arg\,max}_{P_i\in\cP_i}\mu_i(P_i).$
Then $\cP_i^\star$ $= \{P_{\Lambda_i}:
\Lambda_i\bigl((\mcl_i^{\otimes})^\star\bigr)=1\}.$
Moreover, for every \(P_{\Lambda_i}\in\cP_i^\star\), $\nu_i^2(P_{\Lambda_i})
=
\int_{(\mcl_i^{\otimes})^\star}
\nu_i^2(Q_i)\,d\Lambda_i(Q_i).$
Consequently,
\[
\sup_{P_i\in\cP_i^\star}\nu_i^2(P_i)
=
\sup_{Q_i\in(\mcl_i^{\otimes})^\star}\nu_i^2(Q_i).
\]
In particular, if the supremum on the right is attained at \(Q_i^\star\in(\mcl_i^{\otimes})^\star\), then the degenerate mixture at $Q_i^\star$ attains the supremum on the left. 
\end{theorem}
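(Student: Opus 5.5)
The plan rests on one observation: both $\mu_i$ and the first moment $\sum_j \varrho_{ij}(P_i) q_{ij}^2$ are linear in $P_i$, because they are expectations under $P_i$ of bounded measurable functions of $\bG_i$. Concretely, $\varrho_{ij}(\bG_i)=G_{ij}/\sum_\ell G_{i\ell}\in[0,1]$. For $P_{\Lambda_i}\in\operatorname{Mix}(\mcl_i^{\otimes})$, Fubini's theorem (assuming $\Lambda_i$ is defined on a $\sigma$-algebra making $Q_i\mapsto Q_i(B)$ measurable, as implicit in the definition of $P_{\Lambda_i}$) gives
\[
\varrho_{ij}(P_{\Lambda_i})=\int_{\mcl_i^{\otimes}}\varrho_{ij}(Q_i)\,d\Lambda_i(Q_i),
\qquad
\mu_i(P_{\Lambda_i})=\int_{\mcl_i^{\otimes}}\mu_i(Q_i)\,d\Lambda_i(Q_i).
\]
The same identity holds for $m_i(P):=\sum_j\varrho_{ij}(P)q_{ij}^2$. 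The first step of the proof is to record these identities.

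Next I would prove \eqref{eq:sup_same}. Point masses give the inequality $\ge$. For $\le$, set $s:=\sup_{Q_i\in\mcl_i^{\otimes}}\mu_i(Q_i)$, which is finite because $|\mu_i|\le\max_j|q_{ij}|$. The integral identity then gives $\mu_i(P_{\Lambda_i})\le s$.

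For the argmax characterization, first suppose $\Lambda_i\bigl((\mcl_i^{\otimes})^\star\bigr)=1$. Then the integrand equals $s$ $\Lambda_i$-a.s., so $\mu_i(P_{\Lambda_i})=s$. Conversely, suppose $\mu_i(P_{\Lambda_i})=s$. Then $\int (s-\mu_i(Q_i))\,d\Lambda_i=0$ with a nonnegative integrand, so $\mu_i(Q_i)=s$ $\Lambda_i$-a.s., i.e.\ $\Lambda_i\bigl((\mcl_i^{\otimes})^\star\bigr)=1$. If the argmax is empty, both sides are empty. This step needs $(\mcl_i^{\otimes})^\star$ to be measurable; it is the preimage of $\{s\}$ under the measurable map $Q_i\mapsto\mu_i(Q_i)$, so it is.

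For the variance, take $P_{\Lambda_i}\in\cP_i^\star$. Then
\[
\nu_i^2(P_{\Lambda_i})=m_i(P_{\Lambda_i})-s^2=\int\bigl(m_i(Q_i)-s^2\bigr)\,d\Lambda_i .
\]
On $(\mcl_i^{\otimes})^\star$ we have $\mu_i(Q_i)=s$, so the integrand equals $m_i(Q_i)-\mu_i(Q_i)^2=\nu_i^2(Q_i)$ $\Lambda_i$-a.s. This yields the stated integral formula. The key point is that the usual cross term vanishes: a mixture's variance normally exceeds the average of the component variances by the variance of the component means, and here all component means equal $s$.

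The supremum identity follows just as \eqref{eq:sup_same} did. Point masses at elements of $(\mcl_i^{\otimes})^\star$ give $\ge$. The integral formula bounds any $\nu_i^2(P_{\Lambda_i})$ by $\sup_{(\mcl_i^{\otimes})^\star}\nu_i^2$, giving $\le$. If that supremum is attained at $Q_i^\star$, the point mass $\delta_{Q_i^\star}$ lies in $\cP_i^\star$ and attains it.

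I expect the main obstacle to be the measure-theoretic bookkeeping rather than any computation: justifying the Fubini interchange and the measurability of $(\mcl_i^{\otimes})^\star$. I would handle both by taking the $\sigma$-algebra on $\mcl_i^{\otimes}$ to be the one generated by the evaluation maps $Q\mapsto Q(B)$, under which $Q\mapsto\int f\,dQ$ is measurable for every bounded measurable $f$.
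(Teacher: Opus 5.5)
Your proposal is correct and follows the paper's proof essentially step for step: the integral identity $\mu_i(P_{\Lambda_i})=\int\mu_i(Q_i)\,d\Lambda_i(Q_i)$, the nonnegative-integrand argument for the argmax characterization, and degenerate mixtures for the reverse inequalities. The only cosmetic differences are these. The paper invokes the law of total variance and notes that $\operatorname{Var}_{\Lambda_i}\{\mu_i(Q_i)\}=0$, whereas you expand through the linear second moment $m_i$, which is the same identity written out. You also make explicit the measurability conditions that the paper leaves implicit.
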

Theorem~\ref{thm:mix-product-mean} implies that, to apply the separable algorithm over \(\cP_i\), it suffices to first maximize \(\mu_i(Q_i)\) over
the product laws \(Q_i\in\mcl_i^{\otimes}\), and then, among the product laws attaining this
maximum, choose the one with the largest \(\nu_i^2(Q_i)\).
In the remainder of this section, we describe three mixture classes with
different choices of the base product class and characterize the resulting least favorable distributions.
\subsection{Mean-band class}
If $\mcl_i^{\otimes}$ were unrestricted, it would include point masses on arbitrary configurations of \(\bG_i\). The corresponding optimization would then be maximized by a point mass on the extreme configuration in \eqref{eq:rosen_topk}, thereby reproducing the conventional sensitivity analysis. 
To move beyond this degenerate worst case, we consider a \emph{mean-band mixture class}. This class is generated by mixtures of product laws under which each coordinate has mean bounded away from the endpoint values
\(1\) and \(\Gamma\). 
Specifically, for any 
\(g\in[0,1/2]\), we define the set of laws:
\[
\mcl(g)
:=
\left\{
Q\in\cP([1,\Gamma]):
\mu^-(g)\le \E_{G\sim Q}[G] \le \mu^+(g)
\right\},
\]
where $\mu^-(g) = 1 + (\Gamma-1)g$ and $\mu^+(g) := \Gamma - (\Gamma-1)g$ denote the lower and upper bounds on the mean of $G\sim Q \in\cP([1,\Gamma])$. 
Therefore, the set $\mcl(g)$ imposes no parametric form: it constrains the laws only through their first moments.
To provide additional intuition for the parameter $g$, we consider the rescaled random variable $\widetilde G := (G-1)/(\Gamma-1) \in [0,1]$ with its distribution $\widetilde Q\in\cP([0,1])$. Under this transformation, the set $\mcl(g)$ can be equivalently written as
\[
\mcl(g)
:=
\left\{
\widetilde Q\in\cP([0,1]):
g\le \E_{\widetilde G\sim \widetilde Q}[\widetilde G] \le 1-g
\right\},
\] where $g$ and $1-g$ are the lower and upper bounds on the mean of $\widetilde G$.
Therefore, \(g\) controls the strength of the mean-band restriction: \(g=0\) imposes no restriction, while larger values of \(g\) force the means of each coordinate farther from the endpoints.
As a special case, when $\widetilde G$ follows a Bernoulli distribution, the mean-band constraint with $g > 0$ introduces randomness in $\widetilde G$ by requiring the success probability to be bounded away from 0 and 1; see Section \ref{subsec:ber} for further discussion on the Bernoulli mixture class.

Imposing this mean-band constraint coordinatewise gives the base product class:
\[
\mcl_i^{\otimes}(g)
:=
\left\{
Q_i = \bigotimes_{j=1}^{n_i}Q_{ij}:
Q_{ij}\in\mcl(g),\ j=1,\ldots,n_i
\right\}.
\]
The mixture class generated by this product class is
 $\cP_i(g)
:=
\operatorname{Mix}\!\left(\mcl_i^{\otimes}(g)\right).$
Although the mean-band restriction is imposed before mixing, every
\(P_i\in\cP_i(g)\) inherits the corresponding marginal mean-band restriction:
\begin{equation}\label{eq:mean-band}
  \mu^-(g)
  \le
  \E_{P_i}\bigl[G_{ij}\mid \mca,\cZ\bigr]
  \le
  \mu^+(g),
  \qquad j=1,\ldots,n_i.
\end{equation}
In this sense, the mixture class \(\cP_i(g)\) is distributionally robust: it permits
arbitrary mixtures over product laws while constraining the component marginal
laws only through their first moments, without imposing a parametric form.

By Theorem~\ref{thm:mix-product-mean}, the optimization over \(\cP_i(g)\) reduces to optimization over
\(\mcl_i^{\otimes}(g)\).
The following result shows that, despite the infinite-dimensional nature of
\(\cP_i(g)\), a least favorable distribution can be chosen to have a simple
discrete form.

\begin{proposition}
\label{prop:mean-band}
The supremum in
\begin{equation}\label{eq:obj_cpg}
\sup_{Q_i\in\mcl_i^{\otimes}(g)}
\mu_i(Q_i)
\end{equation}
is attained by $Q_i^\star =\bigotimes_{j=1}^{n_i}Q_{ij}^\star
\in \mcl_i^{\otimes}(g)$ such that each \(Q_{ij}^\star\) is supported on at most two points in
\([1,\Gamma]\). In particular, for each \(j\), either

\emph{(i)} \(Q_{ij}^\star=\delta_{c_{ij}}\) for some
\(c_{ij}\in[\mu^-(g),\mu^+(g)]\), where \(\delta_{c_{ij}}\) denotes the point
mass at \(c_{ij}\), or

\emph{(ii)} \(Q_{ij}^\star\) is supported on exactly two points in
\([1,\Gamma]\) and
\[
\E_{Q_{ij}^\star}[G_{ij}]
\in
\{\mu^-(g),\mu^+(g)\}.
\]
This $Q_{i}^\star$, viewed as a degenerate mixture, attains the worst-case mean over \(\cP_i(g)\).
\end{proposition}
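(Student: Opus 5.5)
The proof has two stages. First, for a product law $Q_i=\bigotimes_j Q_{ij}$, the objective $\mu_i(Q_i)=\E_{Q_i}\bigl[\sum_j q_{ij}G_{ij}/\sum_\ell G_{i\ell}\bigr]$ is multilinear in the marginals. That is, it is affine in each $Q_{ij}$ when the other marginals are held fixed. Second, the constraint set $\mcl(g)$ for each coordinate is a moment-constrained set of laws on the compact interval $[1,\Gamma]$. Its extreme points are known to be laws supported on at most two points, namely point masses with mean in $[\mu^-(g),\mu^+(g)]$ and two-point laws whose mean sits exactly at an endpoint of the band. The plan is to combine these facts through coordinatewise replacement.

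\textbf{Step 1 (existence).} The map $\bG_i\mapsto \sum_j q_{ij}G_{ij}/\sum_\ell G_{i\ell}$ is continuous and bounded on $[1,\Gamma]^{n_i}$. Hence $Q_i\mapsto\mu_i(Q_i)$ is weakly continuous on product laws. Each $\mcl(g)$ is weakly compact, since it is a closed subset of $\cP([1,\Gamma])$: the mean constraint is closed because $G\mapsto G$ is bounded and continuous on the compact interval. The product set $\mcl_i^\otimes(g)$ is therefore compact, and the supremum in \eqref{eq:obj_cpg} is attained at some $\bar Q_i$.

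\textbf{Step 2 (coordinatewise reduction).} Fix $j$ and the maximizer's other marginals. Define
\[
h_j(x):=\E\Bigl[\frac{q_{ij}x+\sum_{k\ne j}q_{ik}G_{ik}}{x+\sum_{k\ne j}G_{ik}}\Bigr],
\]
where the expectation is over $G_{ik}\sim\bar Q_{ik}$ for $k\ne j$. Then $\mu_i=\int h_j\,dQ_{ij}$, and $h_j$ is continuous on $[1,\Gamma]$. Maximizing $\int h_j\,dQ$ over $Q\in\mcl(g)$ is a linear program over probability measures with the single moment constraint $\int x\,dQ\in[\mu^-(g),\mu^+(g)]$. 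I would solve it by Lagrangian duality, which is the main technical step. Take the concave envelope $\hat h_j$ of $h_j$ on $[1,\Gamma]$. By Jensen, $\int h_j\,dQ\le \hat h_j(m)$ where $m=\E_Q G$. Moreover, $\hat h_j(m)$ is attained by a law supported on at most two points with mean $m$, by Carathéodory in dimension one. Then maximize the concave function $\hat h_j$ over $m\in[\mu^-,\mu^+]$ and take a maximizer $m^\ast$.

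Two cases follow. If $\hat h_j(m^\ast)=h_j(m^\ast)$, a point mass $\delta_{m^\ast}$ works, giving case (i). Otherwise $h_j<\hat h_j$ at $m^\ast$, so $\hat h_j$ is affine near $m^\ast$. If $m^\ast$ were interior to the band, then either the affine piece has zero slope, in which case we may slide $m^\ast$ to a band endpoint and keep the value, or it has nonzero slope, which contradicts optimality. In both cases we obtain a two-point law with mean $\mu^-$ or $\mu^+$, giving case (ii); if sliding lands at a touching point instead, we are back in case (i). Replacing $\bar Q_{ij}$ by this law does not decrease $\mu_i$, and by optimality it keeps $\mu_i$ equal to the maximum.

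\textbf{Step 3 (iterate).} Perform the replacement in Step 2 for $j=1,\dots,n_i$ in turn. Each replacement preserves the maximal value, and it leaves the already-replaced marginals untouched. After $n_i$ steps every marginal has the claimed form, which yields $Q_i^\star$. Finally, Theorem~\ref{thm:mix-product-mean} shows that the degenerate mixture at $Q_i^\star$ attains $\sup_{P_i\in\cP_i(g)}\mu_i(P_i)$.

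I expect the main obstacle to be the careful case analysis in Step 2. It must show that a non-degenerate optimal law can always be chosen with mean exactly at a band endpoint, handling the flat-slope tie by sliding. It must also ensure that the two support points lie in $[1,\Gamma]$ and are genuinely distinct.
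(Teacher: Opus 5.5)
Your proof is correct. It shares the paper's outer structure: weak compactness gives existence, and then one marginal at a time is replaced against the others held fixed, with global optimality keeping the value at the maximum. The difference is in how you solve the one-dimensional problem $\sup_{Q\in\mcl(g)}\int h_j\,dQ$.

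The paper treats this objective as a continuous affine functional on the compact convex set $\mcl(g)$ and applies Bauer's maximum principle to get an extreme-point maximizer. A separate lemma then classifies the extreme points of $\mcl(g)$: point masses in the band, and two-point laws with mean exactly $\mu^-(g)$ or $\mu^+(g)$. The requirement that a nondegenerate optimizer have its mean at a band endpoint is therefore automatic, because a two-point law with interior mean is the midpoint of two reweighted laws on the same support.

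You instead solve the linear program directly. Jensen bounds the objective by $\hat h_j(\int x\,dQ)$, two-point laws attain $\hat h_j$, and a case analysis at the optimal mean decides between (i) and (ii). This route has three advantages:
\begin{itemize}
\item It avoids Choquet/Bauer theory.
\item It is constructive: the support points are contact points of $\hat h_j$ with $h_j$, and the optimal mean maximizes a concave scalar function.
\item It matches the Jensen and Edmundson--Madansky reasoning the paper later uses for the two-group class, where a concave $h_j$ yields a point mass and a convex one yields an endpoint Bernoulli law.
\end{itemize}

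The cost is bookkeeping that the extreme-point lemma absorbs:
\begin{itemize}
\item You need $\hat h_j$ to be continuous on $[1,\Gamma]$, including at $1$ and $\Gamma$ where $\hat h_j=h_j$, so that a maximizer $m^\ast$ over the band exists.
\item The two-point attainment is really Carath\'eodory for boundary points of the convex hull of the graph in $\mathbb{R}^2$, or a basic-solution argument for an LP with two equality constraints. It is not a one-dimensional statement.
\item The sliding step is cleanest if you take $m^\ast$ to be the \emph{largest} maximizer of $\hat h_j$ on the band. If $\hat h_j(m^\ast)>h_j(m^\ast)$, then $\hat h_j$ is affine on a neighborhood of $m^\ast$. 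If $m^\ast$ were interior to the band, that affine piece would need zero slope, which contradicts maximality of $m^\ast$. So either $\delta_{m^\ast}$ gives (i), or $m^\ast\in\{\mu^-(g),\mu^+(g)\}$ and the attaining two-point law gives (ii); its support points lie strictly on either side of $m^\ast$.
\end{itemize}
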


Proposition~\ref{prop:mean-band} reduces the optimization in
\eqref{eq:obj_cpg} to a search over product laws whose marginal distributions
are supported on at most two points.  The problem can be reformulated as a nonconvex optimization program; see Section~\ref{app:nonlinear_pro} of the web-based supplementary  material for the reformulation and implementation details.

\subsection{Two-group class}
Although the optimization in \eqref{eq:obj_cpg} is numerically tractable for small matched-set sizes \(n_i\), it does not generally yield a simple characterization. 
To obtain a more explicit characterization, we next consider a \emph{two-group mixture class} \(\cP_i^{\mathrm{2G}}(g)\) under the mean-band constraint.  This class is generated by product laws with at
most two distinct marginal laws, both belonging to the set \(\mcl(g)\). That is, the two-group base product class is
\[
\mcl_i^{\otimes,\mathrm{2G}}(g)
:=
\left\{
Q_i=\bigotimes_{j=1}^{n_i}Q_{ij}:
\text{there exist } Q_i^+,Q_i^-\in\mcl(g)
\text{ such that }
Q_{ij}\in\{Q_i^+,Q_i^-\}
\right\}.
\]
The corresponding mixture class is $\cP_i^{\mathrm{2G}}(g)
:=
\operatorname{Mix}\!\left(\mcl_i^{\otimes,\mathrm{2G}}(g)\right).$
Since $\mcl_i^{\otimes,\mathrm{2G}}(g)
\subseteq
\mcl_i^{\otimes}(g),$
we also have $\cP_i^{\mathrm{2G}}(g)
\subseteq
\cP_i(g).$

The two-group product class is motivated by the structure of the worst-case
configuration in the conventional sensitivity analysis \eqref{eq:rosen_topk},
where units within a matched set are partitioned into two groups: one assigned
\(G_{ij}=\Gamma\), and the other assigned \(G_{ij}=1\). The present product class keeps this two-group structure before mixing, but allows the two groups to
have arbitrary marginal distributions \(Q_i^+,Q_i^-\in\mcl(g)\), rather than
being fixed at the deterministic point masses \(\delta_\Gamma\) and \(\delta_1\).
Notably, the two-group restriction is imposed \emph{before} mixing. An element of the
mixture class \(\cP_i^{\mathrm{2G}}(g)\) need not itself be a product law, nor need it have only two distinct marginal distributions after
marginalizing over the mixing distribution.
Because \(\cP_i^{\mathrm{2G}}(g)\subseteq\cP_i(g)\), every law in this subclass
also inherits the marginal mean-band restriction \eqref{eq:mean-band}.

By Theorem~\ref{thm:mix-product-mean}, the optimization over
\(\cP_i^{\mathrm{2G}}(g)\) reduces to the corresponding optimization over
\(\mcl_i^{\otimes,\mathrm{2G}}(g)\). The following theorem shows that this
reduced problem admits a finite top-\(k\) characterization.

\begin{theorem}\label{thm:two-group}
Suppose that \(q_{i1}\ge\cdots\ge q_{in_i}\).
For each \(k\in\{1,\ldots,n_i-1\}\), let \(Q_i^{(k)}\) denote the product law
with marginals
\[
Q_{ij}^{(k)}
=
\delta_{\mu^+(g)},
\text{ for } j\le k, \mbox{ and }
Q_{ij}^{(k)}
=
\operatorname{Bern}_{1,\Gamma}(g),
\text{ for } j>k.
\]
Then the optimization over \(\mcl_i^{\otimes,\mathrm{2G}}(g)\) reduces to a
finite search:
\begin{equation}\label{eq:obj_2G}
    \sup_{Q_i\in\mcl_i^{\otimes,\mathrm{2G}}(g)}
\mu_i(Q_i)
=
\max_{1\le k\le n_i-1}
\mu_i\!\left(Q_i^{(k)}\right).
\end{equation}
Let $\mck_i
:=
\operatorname*{arg\,max}_{1\le k\le n_i-1}
\mu_i\!\left(Q_i^{(k)}\right)$, and choose $k_i^\star
\in
\operatorname*{arg\,max}_{k\in\mck_i}
\nu_i^2\!\left(Q_i^{(k)}\right).$
Then $Q_i^{(k_i^\star)}$, viewed as the degenerate mixture, is a least favorable distribution over \(\cP_i^{\mathrm{2G}}(g)\).
\end{theorem}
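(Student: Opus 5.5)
The plan is to reduce the two-group problem to a finite search. We first show that one marginal can be fixed at $\delta_{\mu^+(g)}$ and the other at $\operatorname{Bern}_{1,\Gamma}(g)$, then exploit the monotonicity structure that produces the top-$k$ form in the conventional analysis. At the end, Theorem~\ref{thm:mix-product-mean} lifts the conclusions from $\mcl_i^{\otimes,\mathrm{2G}}(g)$ to the mixture class $\cP_i^{\mathrm{2G}}(g)$, including the variance tie-break.

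\textbf{Step 1: reduction to the two extreme marginals.} Fix a partition of $\{1,\dots,n_i\}$ into a group $S$ carrying $Q_i^+$ and its complement carrying $Q_i^-$, with $|S|=k$. By exchangeability of units within a group, $\varrho_{ij}(Q_i)$ takes a common value $\rho_+$ on $S$ and a common value $\rho_-$ off $S$, with $k\rho_+ + (n_i-k)\rho_- = 1$. Therefore
\[
\mu_i = \bar q_S + (n_i-k)\rho_-\,(\bar q_{S^c}-\bar q_S),
\]
where $\bar q_S$ and $\bar q_{S^c}$ are the group averages of the scores. Maximizing $\mu_i$ over laws therefore means maximizing or minimizing the single scalar $\rho_-$, according to the sign of $\bar q_S-\bar q_{S^c}$.

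The quantity $\rho_-$ is an expectation of $G_{j}/(G_j+\sum_{\ell\ne j}G_\ell)$. This function is increasing and concave in the group-$S^c$ coordinates and decreasing and convex in the group-$S$ coordinates. Suppose $\bar q_S\ge \bar q_{S^c}$, so that we minimize $\rho_-$. The group-$S$ variables enter through a convex decreasing function, so by Jensen replacing $Q_i^+$ with the point mass at its mean decreases $\rho_-$. Since the function is decreasing, that mean should then be pushed up to $\mu^+(g)$.

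For the $S^c$ group we need care, because $Q_i^-$ is shared by $n_i-k$ i.i.d.\ coordinates. The argument has two parts:
\begin{itemize}
\item By concavity in each coordinate, the mean-preserving spread of each coordinate to $\{1,\Gamma\}$ lowers $\rho_-$. This holds because $\rho_-$ is multiaffine in the i.i.d.\ law of the group, and concave in each coordinate separately.
\item Among Bernoulli laws on $\{1,\Gamma\}$, $\rho_-$ is increasing in the success probability (monotonicity in $G$), so the success probability is pushed down to $g$. This yields $\operatorname{Bern}_{1,\Gamma}(g)$.
\end{itemize}
Symmetrically, if $\bar q_S<\bar q_{S^c}$ we swap the roles of the groups. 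This gives the same family of laws, with the high group being the other one.

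\textbf{Step 2: sorting argument.} With the marginals now fixed as $\delta_{\mu^+}$ on the high group $H$ and $\operatorname{Bern}_{1,\Gamma}(g)$ on the low group, the assignment probability of any high unit is at least that of any low unit. The reason is that $\mu^+(g)\ge$ every value in the support of the Bernoulli, when $g\le1/2$. A swap argument then applies: exchanging a high-group unit $j$ with a low-group unit $\ell$ where $q_{\ell}>q_j$ increases $\mu_i$ by $(\rho_H-\rho_L)(q_\ell-q_j)\ge0$, since the joint law is otherwise unchanged by symmetry. So $H$ can be taken to be $\{1,\dots,k\}$ after sorting the scores. Finally, $k=0$ and $k=n_i$ give $\mu_i=\bar q_i$, which is dominated by some $1\le k\le n_i-1$ because $q_{i1}>q_{in_i}$. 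This establishes \eqref{eq:obj_2G}.

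\textbf{Step 3: lifting to the mixture class.} By Theorem~\ref{thm:mix-product-mean}, the maximizers over $\cP_i^{\mathrm{2G}}(g)$ are exactly the mixtures concentrated on maximizing product laws, and their variances are averages of the component variances. It then suffices that the variance-maximizing law among product maximizers is one of the $Q_i^{(k)}$ with $k\in\mck_i$.

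\textbf{Main obstacle.} The hardest step is this last point, together with the rigorous justification of the extremal-law replacement in Step 1. The product laws attaining the maximum mean need not all be of the form $Q_i^{(k)}$; ties can occur, for example when equal scores allow other laws to reach the same mean. I would first try to show that any mean-maximizing product law must coincide with some $Q_i^{(k)}$ whenever the strict-inequality versions of the Jensen and monotonicity steps are active. In the tie cases, I would argue that the variance is also weakly improved by the same replacements. This amounts to writing $\nu_i^2$ as $\sum_j\varrho_{ij}(q_{ij}-\mu_i)^2$ and noting that, at fixed $\mu_i$, concentrating more mass on extreme scores raises this quantity. If needed, I would restrict the claim to "is a least favorable distribution", which only requires exhibiting attainment rather than uniqueness.
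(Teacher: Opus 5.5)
Your mean argument follows the paper's route. Exchangeability makes $\varrho_{ij}$ two-valued, and for a fixed split the objective is affine in a single scalar; your $(n_i-k)\rho_-$ is the paper's $1-\Psi_i$. Jensen collapses the high group to $\delta_{\mu^+(g)}$, the chord (Edmundson--Madansky) bound moves the low group onto $\{1,\Gamma\}$, and the means are pushed to the band edges. For the low group, the replacement must be done one coordinate at a time in the enlarged product class with temporarily distinct marginals, as the paper does. With a shared $Q_i^-$ the objective is not affine in $Q_i^-$.

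One justification in Step 2 is wrong. $\mu^+(g)$ does not dominate the support of $\operatorname{Bern}_{1,\Gamma}(g)$, since $\Gamma>\mu^+(g)$ whenever $g>0$. The conclusion $\rho_H\ge\rho_L$ is still true: convexity of $b\mapsto k\mu^+(g)/\{k\mu^+(g)+b\}$ gives $k\rho_H\ge k\mu^+(g)/\{k\mu^+(g)+(n_i-k)\mu^-(g)\}\ge k/n_i$. Alternatively, sort first as the paper does: relabel so that $\varrho_i^+\ge\varrho_i^-$, pass to the top-$k$ split for the given laws, and only then optimize the laws. This route never needs the inequality. The bound $\rho_H\ge 1/n_i$ is also what your dismissal of $k\in\{0,n_i\}$ actually rests on.

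The genuine gap is the variance tie-break, which is part of the statement. Here ``least favorable'' means the output of the separable algorithm. You must show that $\nu_i^2(Q_i^{(k_i^\star)})$ equals the supremum of $\nu_i^2$ over \emph{all} mean maximizers in $\mathcal{L}_i^{\otimes,\mathrm{2G}}(g)$, not only over the $Q_i^{(k)}$; Theorem~\ref{thm:mix-product-mean} then transfers this to the mixtures. Retreating to ``attainment of the maximal mean'' proves a weaker statement. The remark about concentrating mass on extreme scores is not an argument.

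The missing observation is that, for $\Gamma>1$ and $q_{i1}>q_{in_i}$, every inequality in your chain is strict. So the ties you worry about reduce to relabeling tied scores:
\begin{itemize}
\item For a top-$k$ split, $\bar q_{1:k}-\bar q_{k+1:n_i}>0$.
\item Take $C,D,A,B$ to be the relevant sums of the remaining coordinates. The maps $x\mapsto(x+C)/(x+C+D)$ (high-group coordinate, $D\ge1$) and $y\mapsto A/(A+B+y)$ (low-group coordinate, $A\ge1$) are strictly concave and strictly convex respectively, and strictly monotone. Hence Jensen is strict unless $Q_i^+$ is a point mass, the chord bound is strict unless $Q_i^-$ is supported on $\{1,\Gamma\}$, and both mean pushes are strict.
\item A law with $\varrho_i^+=\varrho_i^-$ gives $\mu_i=\bar q_i<\mu_i(Q_i^{(k)})$, again by strict Jensen.
\item When $\varrho_i^+>\varrho_i^-$, the swap strictly increases $\mu_i$ unless the exchanged scores are tied.
\end{itemize}
Hence every mean maximizer in $\mathcal{L}_i^{\otimes,\mathrm{2G}}(g)$ is some $Q_i^{(k)}$ with $k\in\mathcal{K}_i$, up to a permutation of indices with tied $q_{ij}$. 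Such a permutation changes neither $\mu_i$ nor $\nu_i^2$. So the variance supremum over mean maximizers is $\max_{k\in\mathcal{K}_i}\nu_i^2(Q_i^{(k)})$, attained at $k_i^\star$. The paper closes its proof with this same uniqueness-up-to-ties claim, stated tersely.
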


Theorem~\ref{thm:two-group} reduces the optimization over the
two-group class to a finite search over \(k=1,\ldots,n_i-1\). The least
favorable distribution retains the classical top-\(k\) structure of the conventional
sensitivity analysis: the first \(k_i^\star\) units have a point mass marginal law
\(\delta_{\mu^+(g)}\), while, for the remaining \(n_i-k_i^\star\) units,
\(G_{ij}=1\) with probability \(1-g\) and \(G_{ij}=\Gamma\) with probability
\(g\). When \(g=0\), this becomes the deterministic  configuration in \eqref{eq:rosen_topk}. We refer to the resulting stochastic sensitivity
analysis as the
\emph{two-group analysis}.

\begin{corollary}
Suppose \(n_i=2\). Then $\mcl_i^{\otimes,\mathrm{2G}}(g)
=
\mcl_i^{\otimes}(g).$ Consequently, Theorem~\ref{thm:two-group} characterizes an optimizer of \eqref{eq:obj_cpg} for the full mean-band class in matched pairs.
\end{corollary}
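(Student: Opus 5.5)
The corollary has two parts. The first is a set equality, $\mcl_i^{\otimes,\mathrm{2G}}(g)=\mcl_i^{\otimes}(g)$ when $n_i=2$. The second transfers the optimizer from Theorem~\ref{thm:two-group} to the full mean-band problem \eqref{eq:obj_cpg}.

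\textbf{Set equality.} The inclusion $\mcl_i^{\otimes,\mathrm{2G}}(g)\subseteq\mcl_i^{\otimes}(g)$ was already noted when the two-group class was introduced, since every marginal of a two-group product law lies in $\mcl(g)$. For the reverse inclusion, take any $Q_i=Q_{i1}\otimes Q_{i2}\in\mcl_i^{\otimes}(g)$ and set $Q_i^+:=Q_{i1}$ and $Q_i^-:=Q_{i2}$. Both belong to $\mcl(g)$ by definition of $\mcl_i^{\otimes}(g)$, and each $Q_{ij}$ lies in $\{Q_i^+,Q_i^-\}$. Hence $Q_i\in\mcl_i^{\otimes,\mathrm{2G}}(g)$. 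The argument works because a product law on two coordinates has at most two distinct marginals, so the two-group restriction is vacuous when $n_i=2$.

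\textbf{Transfer of the optimizer.} Because the base classes coincide, so do the mixture classes:
\[
\cP_i^{\mathrm{2G}}(g)=\operatorname{Mix}\bigl(\mcl_i^{\otimes,\mathrm{2G}}(g)\bigr)=\operatorname{Mix}\bigl(\mcl_i^{\otimes}(g)\bigr)=\cP_i(g).
\]
Theorem~\ref{thm:two-group}, applied with $n_i=2$, gives two conclusions. First, the supremum in \eqref{eq:obj_cpg}, which is now identical to the supremum in \eqref{eq:obj_2G}, equals $\mu_i(Q_i^{(1)})$, because $k$ can only equal $1$. Second, $Q_i^{(1)}$ is least favorable over $\cP_i^{\mathrm{2G}}(g)=\cP_i(g)$. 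This product law is $\delta_{\mu^+(g)}\otimes\operatorname{Bern}_{1,\Gamma}(g)$, taken after relabeling so that $q_{i1}\ge q_{i2}$.

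Since $\mathcal K_i=\{1\}$, the variance tie-break is trivial and $k_i^\star=1$. By Theorem~\ref{thm:mix-product-mean}, this degenerate mixture also maximizes the mean over $\cP_i(g)$ and has maximal variance among all maximizers. This completes the argument.

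I do not expect any real obstacle. The only point needing care is confirming that $Q_i^+$ and $Q_i^-$ in the definition of the two-group class are allowed to be arbitrary, possibly distinct, elements of $\mcl(g)$, which holds as that class is defined.
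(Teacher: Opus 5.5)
Your proof is correct and follows the same reasoning the paper relies on; the paper states the corollary without a separate proof. With two coordinates a product law has at most two distinct marginals, so the base classes (and hence the mixture classes) coincide. Theorem~\ref{thm:two-group} with $k=1$ then directly gives the optimizer of \eqref{eq:obj_cpg}, namely $\delta_{\mu^+(g)}\otimes\operatorname{Bern}_{1,\Gamma}(g)$.
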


\subsection{Bernoulli class}\label{subsec:ber}

We next consider the \emph{Bernoulli mixture class}
\(\cP_i^{\mathrm{Bern}}(g)\), which serves as another interpretable special case of the mean-band class. This class is generated by product laws whose marginals are endpoint Bernoulli distributions satisfying the
mean-band constraint. Specifically, the Bernoulli base product class is
\[
\mcl_i^{\otimes,\mathrm{Bern}}(g)
:=
\left\{
Q_i=\bigotimes_{j=1}^{n_i}\operatorname{Bern}_{1,\Gamma}(p_{ij}):
p_{ij}\in[g,1-g]
\right\}, 
\] 
where $\operatorname{Bern}_{1,\Gamma}(p)=
(1-p)\delta_1+p\delta_\Gamma$ denotes the endpoint Bernoulli distribution.
Unlike the two-group class, which restricts the number of distinct marginal
distributions within each product law, the Bernoulli product class
restricts each marginal distribution to be supported on the two endpoints
\(1\) and \(\Gamma\). The endpoint probabilities \(p_{ij}\), however, may vary
freely across \(j\), subject only to the mean-band constraint
\(p_{ij}\in[g,1-g]\). 
The corresponding mixture class is $\cP_i^{\mathrm{Bern}}(g)
:=
\operatorname{Mix}\!\left(\mcl_i^{\otimes,\mathrm{Bern}}(g)\right).$
Since $\mcl_i^{\otimes,\mathrm{Bern}}(g)
\subseteq
\mcl_i^{\otimes}(g),$
we also have $\cP_i^{\mathrm{Bern}}(g)
\subseteq
\cP_i(g).$

The Bernoulli product class is also motivated by the worst-case configuration in
the conventional sensitivity analysis \eqref{eq:rosen_topk}, where each
\(G_{ij}\) is fixed at one of the two endpoints, \(1\) or \(\Gamma\). The
present class keeps this endpoint structure, but replaces deterministic point
masses by 
\(\operatorname{Bern}_{1,\Gamma}(p_{ij})\) distributions, with
\(p_{ij}\in[g,1-g]\). When \(g=0\), the class contains the deterministic
endpoint configurations used in the conventional worst case. When \(g>0\),
such deterministic endpoint configurations are ruled out, because each endpoint
probability must be bounded away from both \(0\) and \(1\).
Again, the Bernoulli restriction is imposed before mixing. An element of the
mixture class \(\cP_i^{\mathrm{Bern}}(g)\) need not itself be a product
law; after marginalizing over the mixing distribution, the
coordinates may be dependent. Nevertheless, each coordinate remains supported
on \(\{1,\Gamma\}\), and each one-dimensional marginal remains an endpoint
Bernoulli distribution with parameter in \([g,1-g]\).

By Theorem~\ref{thm:mix-product-mean}, the optimization over
\(\cP_i^{\mathrm{Bern}}(g)\) reduces to the corresponding optimization over
\(\mcl_i^{\otimes,\mathrm{Bern}}(g)\). The following theorem shows that this
reduced problem again admits a finite top-\(k\) characterization.
\begin{theorem}\label{thm:ber}
Suppose that \(q_{i1}\ge\cdots\ge q_{in_i}\).
For each \(k\in\{1,\ldots,n_i-1\}\), let \(Q_i^{(k)}\) denote the
product law with marginals
\[
Q_{ij}^{(k)}
=
\operatorname{Bern}_{1,\Gamma}(1-g)
\quad\text{for } j\le k,\qquad Q_{ij}^{(k)}
=
\operatorname{Bern}_{1,\Gamma}(g)
\quad\text{for } j>k.
\]
Then the optimization over \(\mcl_i^{\otimes,\mathrm{Bern}}(g)\) reduces to a
finite search:
\[
\sup_{Q_i\in\mcl_i^{\otimes,\mathrm{Bern}}(g)}
\mu_i(Q_i)
=
\max_{1\le k\le n_i-1}
\mu_i\!\left(Q_i^{(k)}\right).
\]
Let $\mck_i
:=
\operatorname*{arg\,max}_{1\le k\le n_i-1}
\mu_i\!\left(Q_i^{(k)}\right).$
Choose $k_i^\star
\in
\operatorname*{arg\,max}_{k\in\mck_i}
\nu_i^2\!\left(Q_i^{(k)}\right).$
Then $Q_i^{(k_i^\star)}$, viewed as the degenerate mixture, is a least favorable distribution over \(\cP_i^{\mathrm{Bern}}(g)\).
\end{theorem}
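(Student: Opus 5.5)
The plan is to study $\mu_i(Q_i)$ as a function of the vector $\bp_i=(p_{i1},\ldots,p_{in_i})\in[g,1-g]^{n_i}$, where $Q_i=\bigotimes_j \operatorname{Bern}_{1,\Gamma}(p_{ij})$.

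\textbf{Step 1 (reduction to vertices).} By independence of the coordinates under $Q_i$,
\[
\mu_i(Q_i)=\E_{Q_i}\Bigl[\sum_{j} \frac{G_{ij}}{\sum_{\ell} G_{i\ell}}\,q_{ij}\Bigr]
\]
is a multilinear polynomial in $\bp_i$. Indeed, for each fixed $j$ the expectation is affine in $p_{ij}$, because it averages the two values $G_{ij}=1$ and $G_{ij}=\Gamma$ with weights $1-p_{ij}$ and $p_{ij}$. A multilinear function on a box attains its maximum at a vertex. Hence the supremum is attained with every $p_{ij}\in\{g,1-g\}$, and the problem reduces to a search over the subsets $S=\{j:p_{ij}=1-g\}$.

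\textbf{Step 2 (the optimal subset is a top-$k$ set).} I would use an exchange argument. Suppose $j\notin S$, $j'\in S$ and $q_{ij}>q_{ij'}$. Swapping the probabilities of $j$ and $j'$ leaves the joint law of the multiset $\{G_{i\ell}\}$ unchanged, so it suffices to compare the two configurations conditional on the realization of the remaining coordinates. Conditional on $(G_{ij},G_{ij'})$ and the other coordinates, the swap changes the contribution $(G_{ij}q_{ij}+G_{ij'}q_{ij'})/D$, where $D$ is the total, which is symmetric in the pair. Averaging over the pair, with independent laws $\operatorname{Bern}(a)$ on $j$ and $\operatorname{Bern}(b)$ on $j'$, where $a=g$ and $b=1-g$ before the swap, the difference reduces to configurations where exactly one of the pair equals $\Gamma$. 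These have probability $a(1-b)$ for $j$ and $(1-a)b$ for $j'$. The gain from the swap is therefore
\[
\bigl[(1-a)b-a(1-b)\bigr]\,\frac{(\Gamma-1)(q_{ij}-q_{ij'})}{D}\;\ge 0,
\]
since $b\ge a$ when $g\le 1/2$. So moving the larger probability to the larger score weakly increases the mean. Consequently some maximizer has $S=\{1,\ldots,k\}$ for some $0\le k\le n_i$.

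\textbf{Step 3 (excluding $k=0$ and $k=n_i$).} These two configurations are exchangeable, so $\mu_i=\bar q_i$, the average score. For $k=1$, the exchange gain relative to exchangeable marginals is nonnegative, because a covariance-type inequality shows that raising $p_{i1}$ weakly increases the mean; this uses $\partial\mu_i/\partial p_{i1}\ge0$, derived from $q_{i1}\ge$ every weighted average of the scores. The symmetric argument with $k=n_i-1$ handles $k=n_i$. Hence restricting to $1\le k\le n_i-1$ loses nothing, which gives the finite-search identity.

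\textbf{Step 4 (least favorable law over the mixture class).} Apply Theorem~\ref{thm:mix-product-mean}. Every mean-maximizing mixture is supported on mean-maximizing product laws, and its variance is an average of their variances. Its supremum therefore equals the maximal variance among maximizing product laws, attained by a degenerate mixture.

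\textbf{Main obstacle.} The delicate point in Step 4 is that a maximizing product law need not be of the form $Q_i^{(k)}$: it could have ties or non-vertex $p$'s. I would show that any maximizer can be moved to some $Q_i^{(k)}$ with $k\in\mck_i$ without decreasing the variance. I would attempt this with the same multilinearity, applied to the second moment on the face where the mean is constant, but I expect this to need care. The cleanest route is to note that on the maximizing face the mean is constant, so the variance equals $\E[q^2]-\mu^2$. The term $\E[q^2]$ is again multilinear, so on that face, which is a sub-box defined by the ties, it is maximized at a vertex; combined with the exchange argument from Step 2, that vertex can be taken of top-$k$ form.
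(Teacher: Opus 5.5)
Your Steps 1--3 track the paper's proof: coordinatewise affinity of $\mu_i$ in the $p_{ij}$, the pairwise swap with gain $(1-2g)(\Gamma-1)(q_{ia}-q_{ib})/(\Gamma+1+S)$, and the sign of $h_1(\Gamma)-h_1(1)$ for the boundary cases. They do establish the finite-search identity for the supremum, and your weak inequality $\partial\mu_i/\partial p_{i1}\ge 0$ suffices for that. The gap is in the variance tie-break of Step 4.

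\textbf{The set of mean maximizers need not be a sub-box.} Your claim that it is a sub-box ``defined by the ties'' is false in general. The argmax of a coordinatewise affine function can be a union of faces (think of $1-(1-p_1)(1-p_2)$), and this happens here. Take $n_i=4$, $\Gamma=2$, $g=0.1$, and $q_i=(1,c,c,0)$ with $c=2575/4062\approx 0.634$, chosen so that $\mu_i(Q_i^{(1)})=\mu_i(Q_i^{(2)})$. Then $\mu_i(Q_i^{(3)})<\mu_i(Q_i^{(2)})$. The mean maximizers are exactly the points with $p_{i1}=0.9$, $p_{i4}=0.1$, and either $p_{i2}=0.1$ or $p_{i3}=0.1$, which is an L-shaped set. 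So maximizing $W_i:=\sum_j\varrho_{ij}(Q_i)q_{ij}^2$ ``on that face'' is not a justified step.

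The repair is the convex-combination argument of Lemma~\ref{lem:multiaffine-endpoints}. Take $\mathbf p$ maximizing $W_i$ over the compact set of mean maximizers. Then $\mu_i(\mathbf p)$ and $W_i(\mathbf p)$ are the same convex combination, with product weights, of their values at the vertices of the smallest face containing $\mathbf p$. Hence every vertex with positive weight is a mean maximizer, and at least one of them has $W_i\ge W_i(\mathbf p)$.

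\textbf{The rearrangement needs strict inequalities.} Moving that vertex into top-$k$ form ``by the exchange argument of Step 2'' only controls the mean; a swap can change $W_i$. What makes the rearrangement work is strictness. For $g<1/2$, $\Gamma>1$ and $q_{ia}>q_{ib}$, the swap gain is strictly positive, so a mean-maximizing vertex has no inversion across a strict inequality. The remaining inversions involve tied scores, and those swaps change neither $\mu_i$ nor $W_i$.

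Likewise, keeping $k$ in $\{1,\ldots,n_i-1\}$ needs the strict version of your Step 3. Because $q_{i1}>q_{in_i}$ and every $G_{i\ell}>0$, we have $h_1(\Gamma)-h_1(1)>0$. So $Q_i^{(0)}$, and symmetrically $Q_i^{(n_i)}$, is never a mean maximizer when $g<1/2$. With only $\ge 0$, nothing prevents $Q_i^{(0)}$ from tying $Q_i^{(1)}$ in mean while having larger variance than every $Q_i^{(k)}$ with $k\in\mathcal K_i$. In that case the tie-breaker would select a law outside the family $\{Q_i^{(k)}:1\le k\le n_i-1\}$. The case $g=1/2$ is trivial and should be set aside separately.
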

Theorem~\ref{thm:ber} reduces the optimization over the
Bernoulli class to a finite search over \(k=1,\ldots,n_i-1\). The least favorable distribution again has a top-\(k\) structure: the first \(k_i^\star\) units take
the upper endpoint \(\Gamma\) with probability \(1-g\), while the remaining
units take \(\Gamma\) with probability \(g\).
When \(g=0\), this reduces to the deterministic configuration in
\eqref{eq:rosen_topk}. We refer to the
resulting stochastic sensitivity analysis as the \emph{Bernoulli analysis}.
To illustrate, consider matched pairs with \(n_i=2\) and \(q_{i1}\ge q_{i2}\). Then Theorem~\ref{thm:ber} yields \(p_{i1}^\star=1-g\) and \(p_{i2}^\star=g\), so $\P(G_{i1}=\Gamma)=1-g$ and $\P(G_{i2}=\Gamma)=g.$
Therefore,
\[
\P\!\left(\frac{G_{i1}}{G_{i2}}=\Gamma\right)=(1-g)^2,
\qquad
\P\!\left(\frac{G_{i1}}{G_{i2}}=1\right)=2g(1-g),
\qquad
\P\!\left(\frac{G_{i1}}{G_{i2}}=\Gamma^{-1}\right)=g^2.
\]
The worst-case configuration $G_{i1}=\Gamma\cdot G_{i2}$ occurs with probability $(1-g)^2$. At \(g=0\), this probability is one, recovering the
deterministic worst-case alignment.

In the motivating example of Section~\ref{subsec:mt_example}, one may write
\(G_{ij}=\Gamma^{U_{ij}}\), where \(U_{ij}=1\) indicates that subject \(j\) in
pair \(i\) carries the risk allele and \(U_{ij}=0\) otherwise. Then the Bernoulli parameter has the direct interpretation $p_{ij}=\P(U_{ij}=1).$
Thus \(g\le p_{ij}\le 1-g\) means that the probability of carrying the risk
allele is bounded away from both \(0\) and \(1\). 
Under the Bernoulli analysis, the least favorable distribution preserves the adversarial
alignment of the conventional sensitivity analysis, but makes that alignment
stochastic rather than deterministic.  In a discordant matched pair, for example, the subject who died of lung cancer carries the risk allele with probability \(1-g\), while the matched subject who did not die carries it with probability \(g\). 

Although \(\mcl_i^{\otimes,\mathrm{Bern}}(g)\) allows arbitrary
\(p_{ij}\in[g,1-g]\), Theorem~\ref{thm:ber} shows that a least favorable
product law uses only the two endpoint probabilities \(1-g\) and \(g\).
Thus, imposing a two-group restriction on the Bernoulli product class does
not change its worst-case mean.  Since the two-group
Bernoulli subclass is contained in \(\mcl_i^{\otimes,\mathrm{2G}}(g)\),
\begin{equation}\label{eq:2G_geq_ber}
    \sup_{Q_i\in \mcl_i^{\otimes,\mathrm{Bern}}(g)} \mu_i(Q_i) = \sup_{Q_i\in \mcl_i^{\otimes,\mathrm{Bern}}(g) \,\bigcap\, \mcl_i^{\otimes,\mathrm{2G}}(g)} \mu_i(Q_i) 
\le
\sup_{Q_i\in \mcl_i^{\otimes,\mathrm{2G}}(g)} \mu_i(Q_i).
\end{equation}
Thus, the two-group analysis yields a weakly more conservative worst-case mean
than the Bernoulli analysis. Section~\ref{sec:power} further compares the two classes.

\begin{remark}
The main developments that follow focus on the two-group and Bernoulli
analyses. Other parametric subclasses are also possible. As an illustration,
Section~\ref{subsec:stoch-beta} of the web-based supplementary material considers a
rescaled Beta subclass and shows that, despite its smooth parametric form, it
can approximate the supremal value in \eqref{eq:obj_2G} arbitrarily closely.
\end{remark}
\section{How stochasticity modifies the robustness frontier}\label{sec:power}
\subsection{Overview and setup}\label{subsec:design_setup}
The preceding sections develop stochastic sensitivity analyses indexed by the stochastic parameter \(g\). We now examine how the conclusions of the sensitivity analysis change as \(g\) moves from zero to small positive values. 
Because positive \(g\) imposes additional stochasticity restrictions on the distributions for hidden confounders, one should expect rejection of the null hypothesis to persist for larger values of \(\Gamma\).
Here we investigate the size of this increase when only a small amount of stochasticity is imposed.
To study this, we consider a \emph{favorable} setting in which a genuine treatment effect is present and there is no unmeasured confounding \citep{Rosenbaum01062010}, so that treatment assignment within each matched set is completely randomized.
The sensitivity analyses are nevertheless evaluated at sensitivity levels \(\Gamma\ge 1\), as if hidden bias of strength \(\Gamma\) were possible. 
In this favorable setting, the probability of rejection measures whether the sensitivity analysis can still detect the treatment effect after allowing for hidden bias of strength \(\Gamma\). 
We study this from two perspectives. 
First, we compare the conventional sensitivity analysis with the two-group and the Bernoulli analyses developed in Section~\ref{sec:classes} through the notion of \emph{design sensitivity} \citep{ros04}. 
Second, we fix \(\Gamma\) and study the relaxation parameter \(g\), asking how much departure from deterministic worst-case alignment is needed for rejection to persist.

Throughout this section we restrict attention to matched pairs, so that
$n_i=2$. To obtain explicit numerical comparisons, we consider a simple paired-difference generative
model. Let $D_i$ denote the treated-minus-control difference in the $i$-th pair and $\bar q_i=(q_{i1}+q_{i2})/{2}$, then 
\begin{equation}\label{eq:gen_model1}
        T_i-\bar q_i=\frac{D_i}{2},
    \qquad
    |T_i-\bar q_i|=\frac{|D_i|}{2}.
\end{equation}
In the favorable situation where there is an effect $\tau >0$ and the bias is absent, suppose $D_i=\tau+\varepsilon_i$,
where $\varepsilon_i$ are i.i.d. from some distribution to be specified below. 
\subsection{Design sensitivity}
\label{subsec:sim_design}

\citet{ros04} introduced design sensitivity to assess the limiting performance
of test statistics in the conventional sensitivity analysis. We use the same concept
to compare the conventional sensitivity analysis with the stochastic sensitivity
analyses.

Under the generative model described in the previous subsection, the design
sensitivity \(\widetilde\Gamma\) of a sensitivity analysis is the value of $\Gamma$
with the following property: for
\(\Gamma<\widetilde\Gamma\), the sensitivity analysis rejects Fisher's sharp
null with probability tending to one, whereas for
\(\Gamma>\widetilde\Gamma\), the probability of rejection tends to zero.
That is, \(\widetilde{\Gamma}\) summarizes how much hidden bias a study can sustain in the asymptotic sense.
Larger values of \(\widetilde\Gamma\) therefore indicate greater insensitivity to hidden bias. Under the generative model above, \(\widetilde{\Gamma}\) depends on the effect size \(\tau\).
For the stochastic analysis, the design sensitivity also depends on the relaxation parameter \(g\), and the choice of mixture class being optimized over. We write $\widetilde\Gamma(\tau;g)$ explicitly for the dependence, whereas the conventional sensitivity analysis corresponds to \(g=0\). Closed-form expressions for these design sensitivities under the
generative model above are given in Section~\ref{app:ds} of the web-based
supporting material.

\begin{figure}[t]
\centering
\begin{subfigure}{0.49\textwidth}
  \centering
  \includegraphics[width=\linewidth]{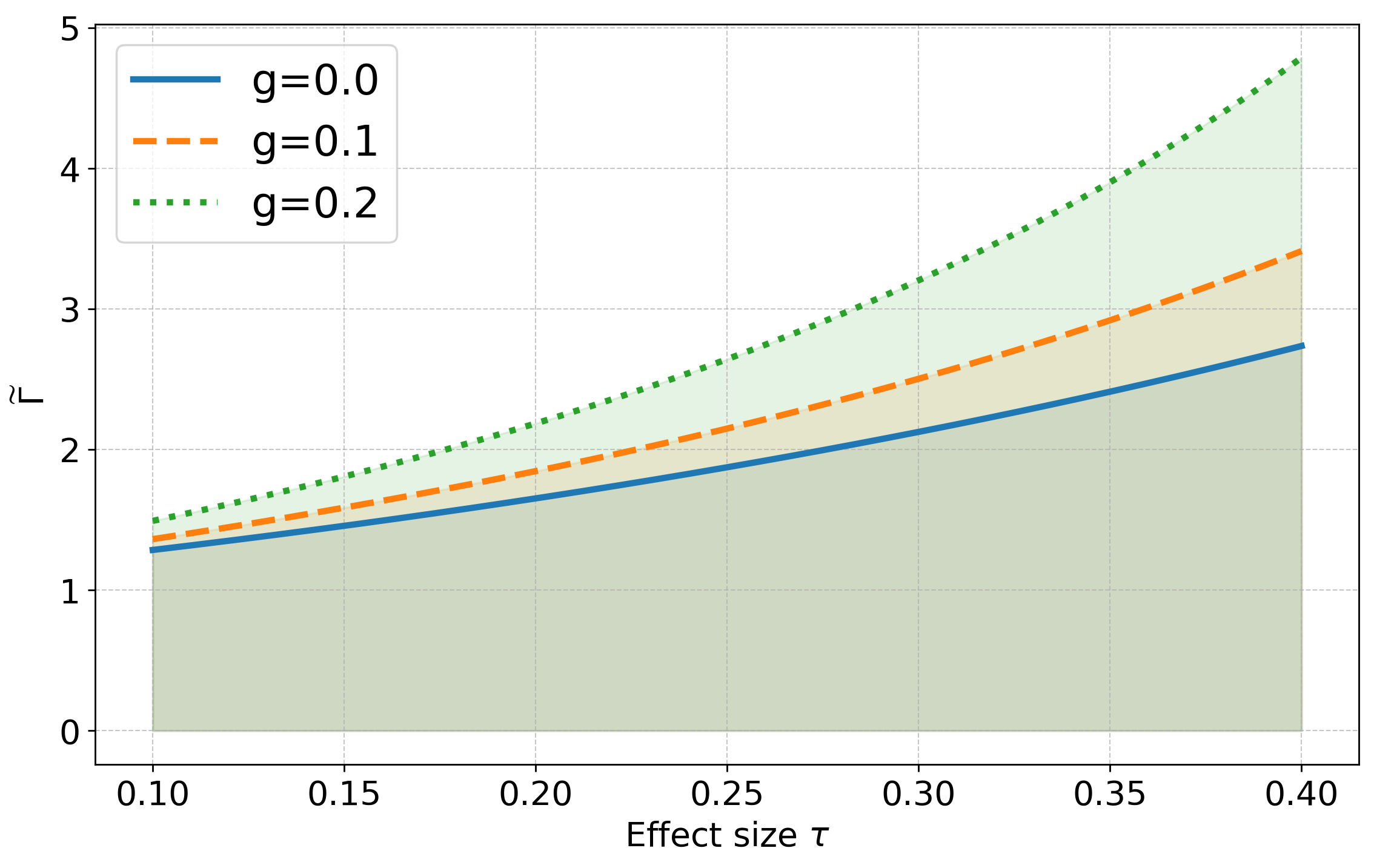}
  \caption{Gaussian noise (two-group)}
\end{subfigure}\hfill
\begin{subfigure}{0.49\textwidth}
  \centering
  \includegraphics[width=\linewidth]{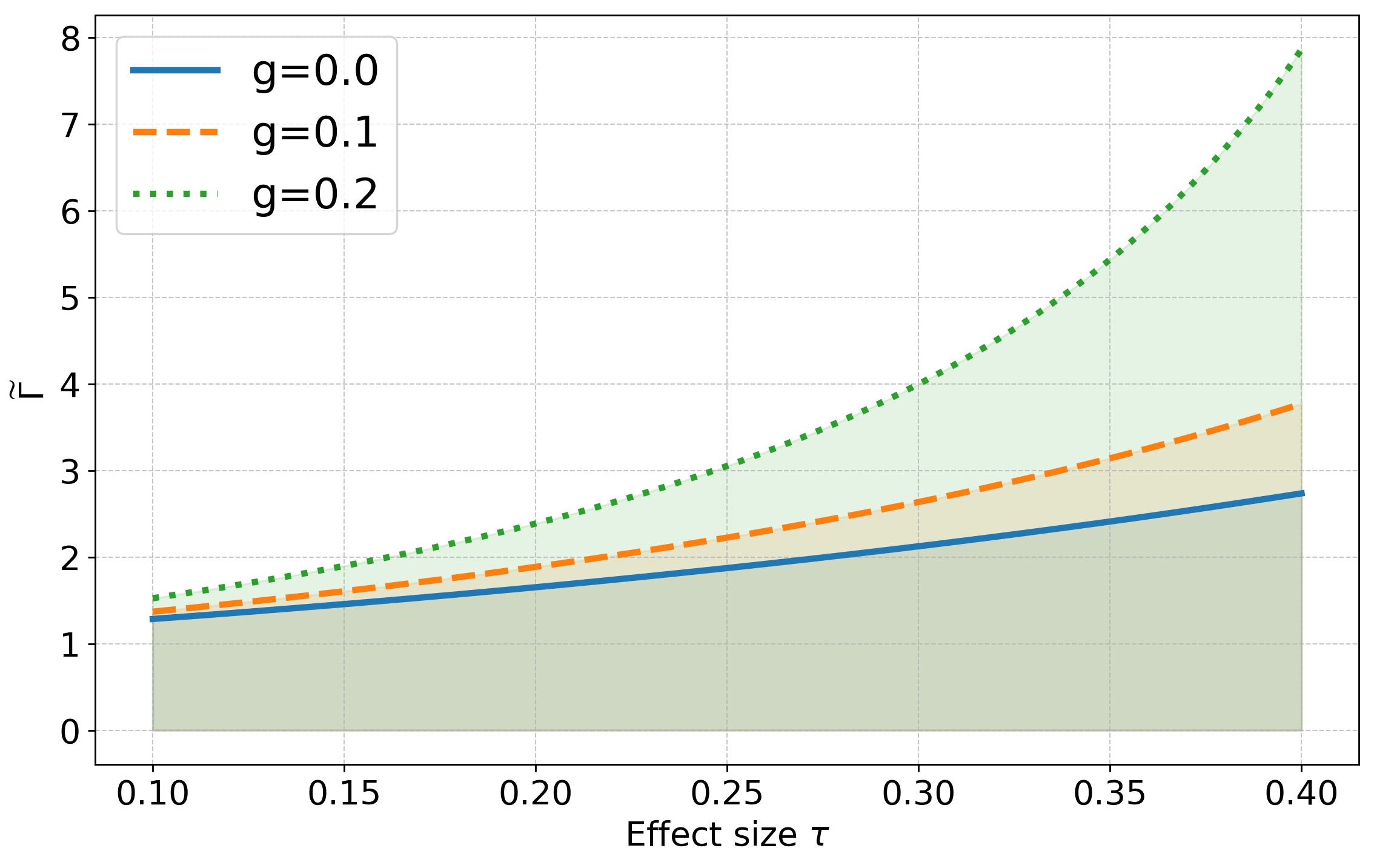}
  \caption{Gaussian noise (Bernoulli)}
\end{subfigure}
\medskip
\begin{subfigure}{0.49\textwidth}
  \centering
  \includegraphics[width=\linewidth]{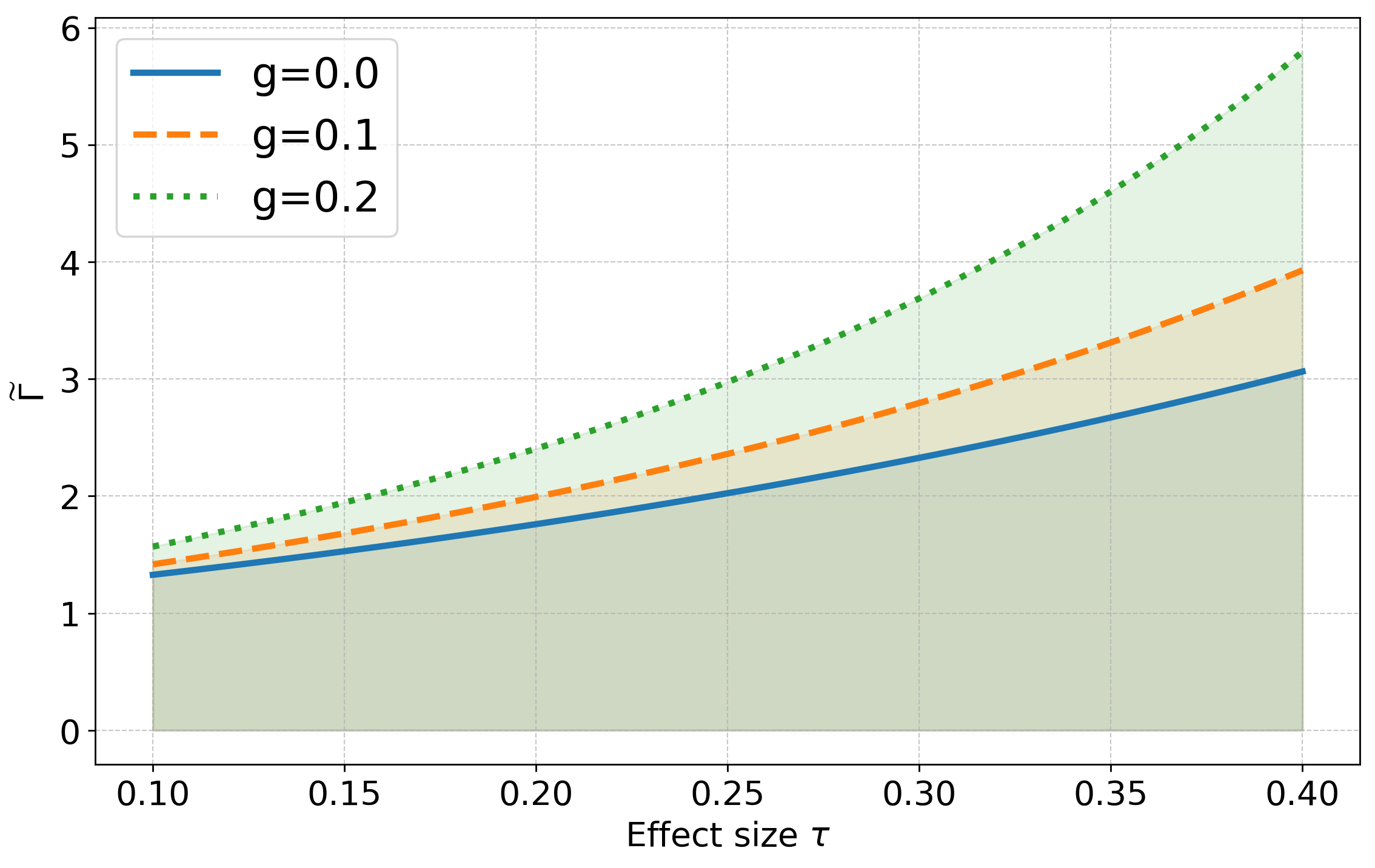}
  \caption{\(t_4\) noise (two-group)}
\end{subfigure}\hfill
\begin{subfigure}{0.49\textwidth}
  \centering
  \includegraphics[width=\linewidth]{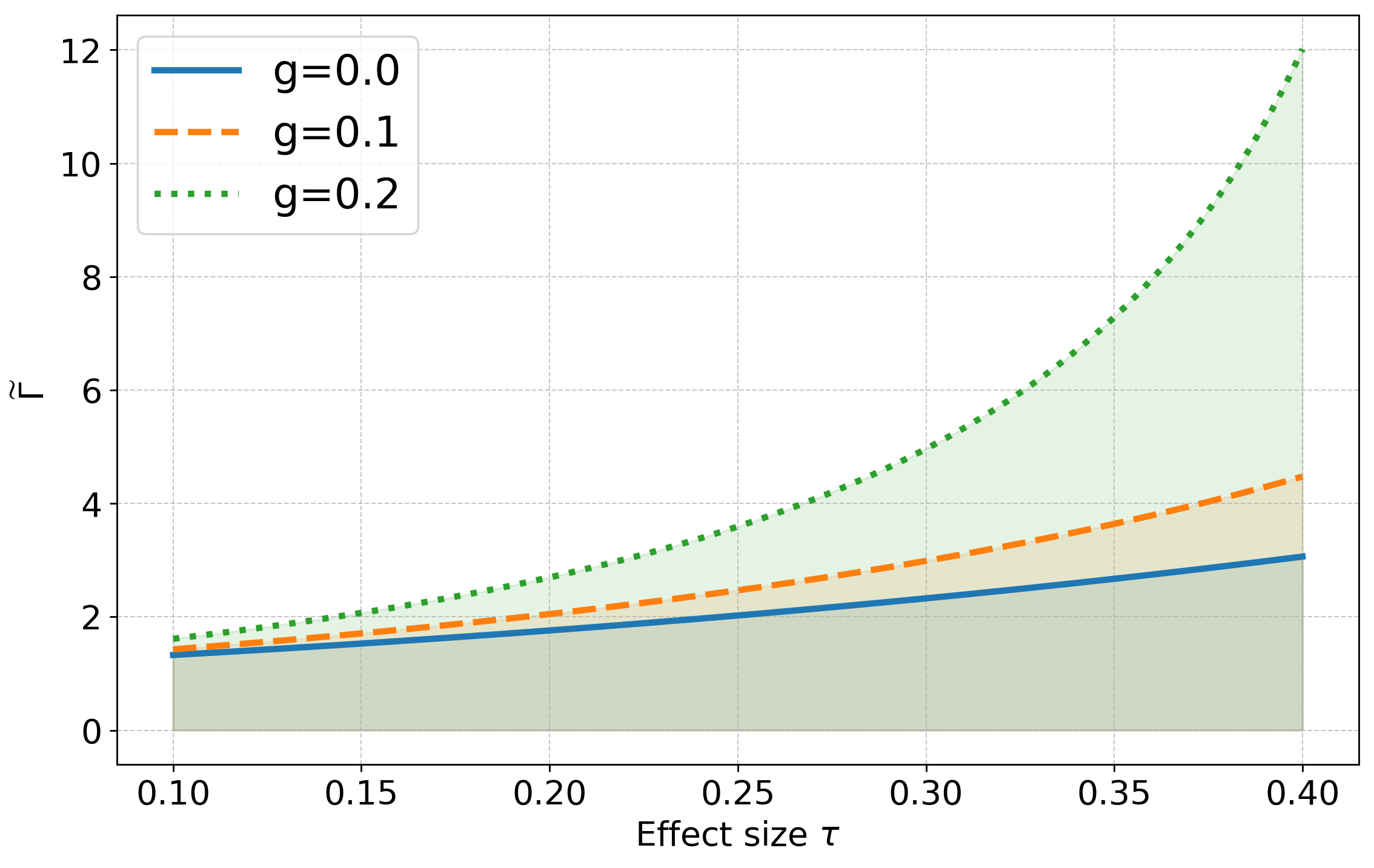}
  \caption{\(t_4\) noise (Bernoulli)}
\end{subfigure}

\caption{Design sensitivity \(\tilde{\Gamma}(\tau;g)\) for different noise distributions, classes of conditional laws for the unobserved confounder, and degrees of stochasticity. Shaded regions indicate rejection of Fisher's sharp null.}
\label{fig:ds}
\end{figure}

Figure~\ref{fig:ds} plots \(\widetilde{\Gamma}(\tau;g)\) for the two stochastic sensitivity analyses, together with the conventional sensitivity analysis, over \(\tau\in(0,0.4]\), \(g\in\{0, 0.1,0.2\}\), and two distributions for \(\varepsilon_i\): Gaussian noise and \(t_4\) noise, both with mean \(0\) and variance \(1\). Across all panels, design sensitivity increases with \(\tau\), reflecting the fact that stronger treatment effects can withstand larger hidden bias. 
For every fixed \(\tau\), the curve for \(g=0.2\) lies above the curve for \(g=0.1\), and both lie above the curve for \(g=0\). This is expected: increasing \(g\) tightens the mean-band restriction and therefore narrows the candidate class of distributions for the hidden confounders. The resulting worst case is less adversarial, so rejection persists for larger values of \(\Gamma\). More revealing within Figure~\ref{fig:ds} is the magnitude of this change: even small positive values of \(g\) can substantially increase the design sensitivity. For example, under Gaussian noise with \(\tau=0.25\), the conventional sensitivity analysis has design sensitivity \(\widetilde\Gamma=1.87,\) meaning that the probability of rejecting the sharp null tends to zero beyond this level. In contrast, when \(g=0.2\), the two-group analysis has design sensitivity \(\widetilde\Gamma=2.64\), while the Bernoulli analysis has design sensitivity \(\widetilde\Gamma=3.05\). 
Thus, in this setting, imposing \(g=0.2\) changes the robustness statement: relative to \(g=0\) of the conventional sensitivity analysis, rejection persists to larger values of \(\Gamma\). Additionally, comparing the two stochastic analyses, the Bernoulli analysis yields larger design sensitivities than the two-group analysis, consistent with \eqref{eq:2G_geq_ber}: for the same value of \(g\), the Bernoulli analysis uses a smaller least favorable expectation and hence gives a less adversarial sensitivity analysis.

\subsection{How much stochasticity is needed?}
In the previous subsection, for each stochastic sensitivity analysis, we  fix \(g\) and ask how large \(\Gamma\) can be before we fail to reject. We now take the complementary view: for a fixed value of \(\Gamma\), how large must \(g\) be for rejection to persist?

For a stochastic sensitivity analysis and a fixed sensitivity parameter \(\Gamma\), let $\tilde{g}$ be the value of $g$ with the following property: for
\(g>\tilde{g}\), the sensitivity analysis rejects Fisher's sharp
null with probability tending to one, whereas for
\(g<\tilde{g}\), the probability of rejection tends to zero.
Thus, \(\tilde{g}\) is the smallest stochasticity restriction under which the stochastic analysis can sustain hidden bias of size \(\Gamma\).
If \(\tilde{g}=0\), then the conventional sensitivity analysis already detects the treatment effect at that value of \(\Gamma\). If \(\tilde{g}>0\), then the conventional sensitivity analysis fails to reject, whereas the corresponding stochastic sensitivity analysis can still reject provided \(g>\tilde{g}\). Under the above generative model, $\tilde{g}$ depends on the effect size $\tau$. We therefore write \(\tilde{g}(\tau;\Gamma)\) to make explicit the dependence.

\begin{figure}[t]
\centering
\begin{subfigure}{0.49\textwidth}
  \centering
  \includegraphics[width=\linewidth]{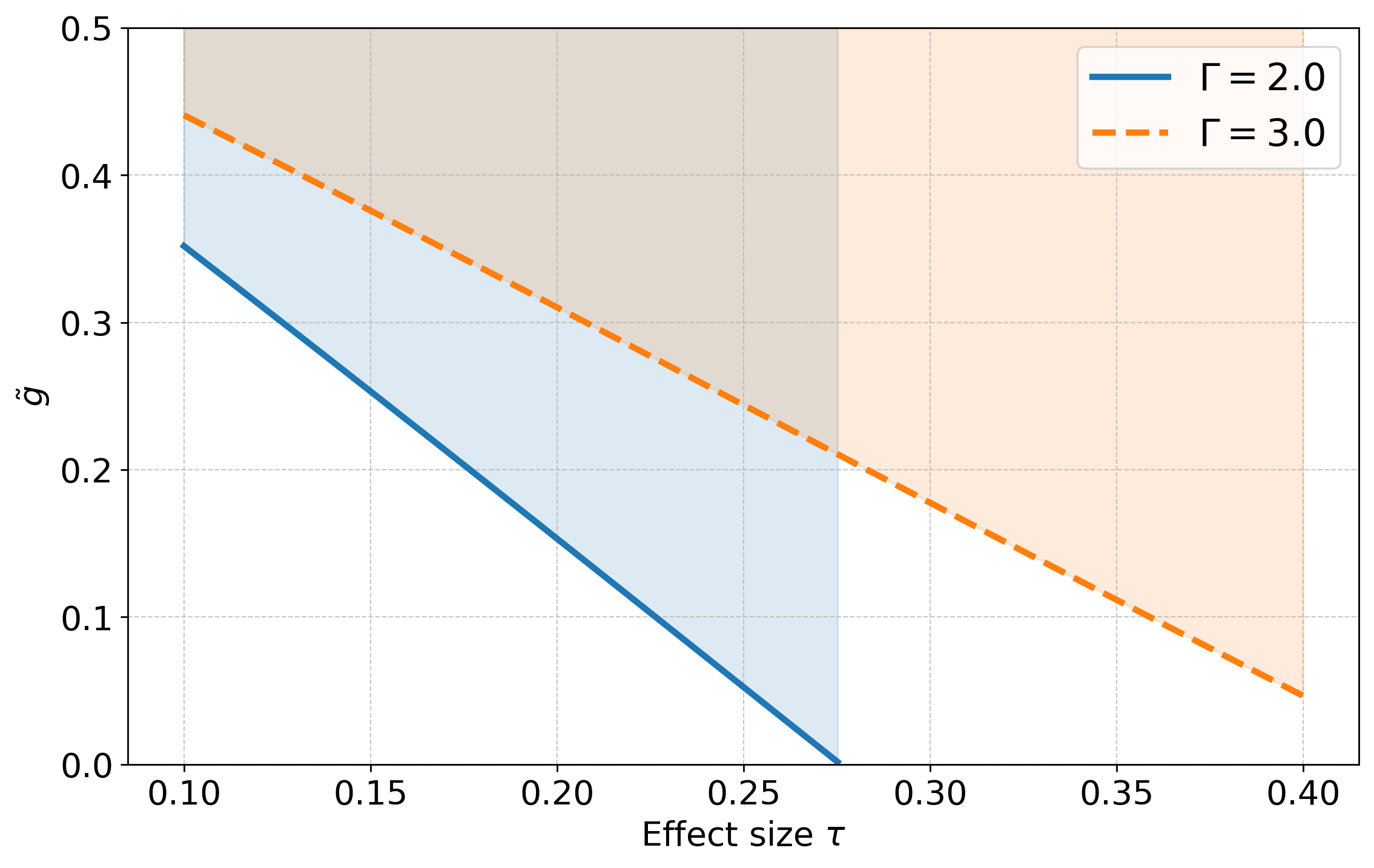}
  \caption{Gaussian noise (two-group)}
\end{subfigure}\hfill
\begin{subfigure}{0.49\textwidth}
  \centering
  \includegraphics[width=\linewidth]{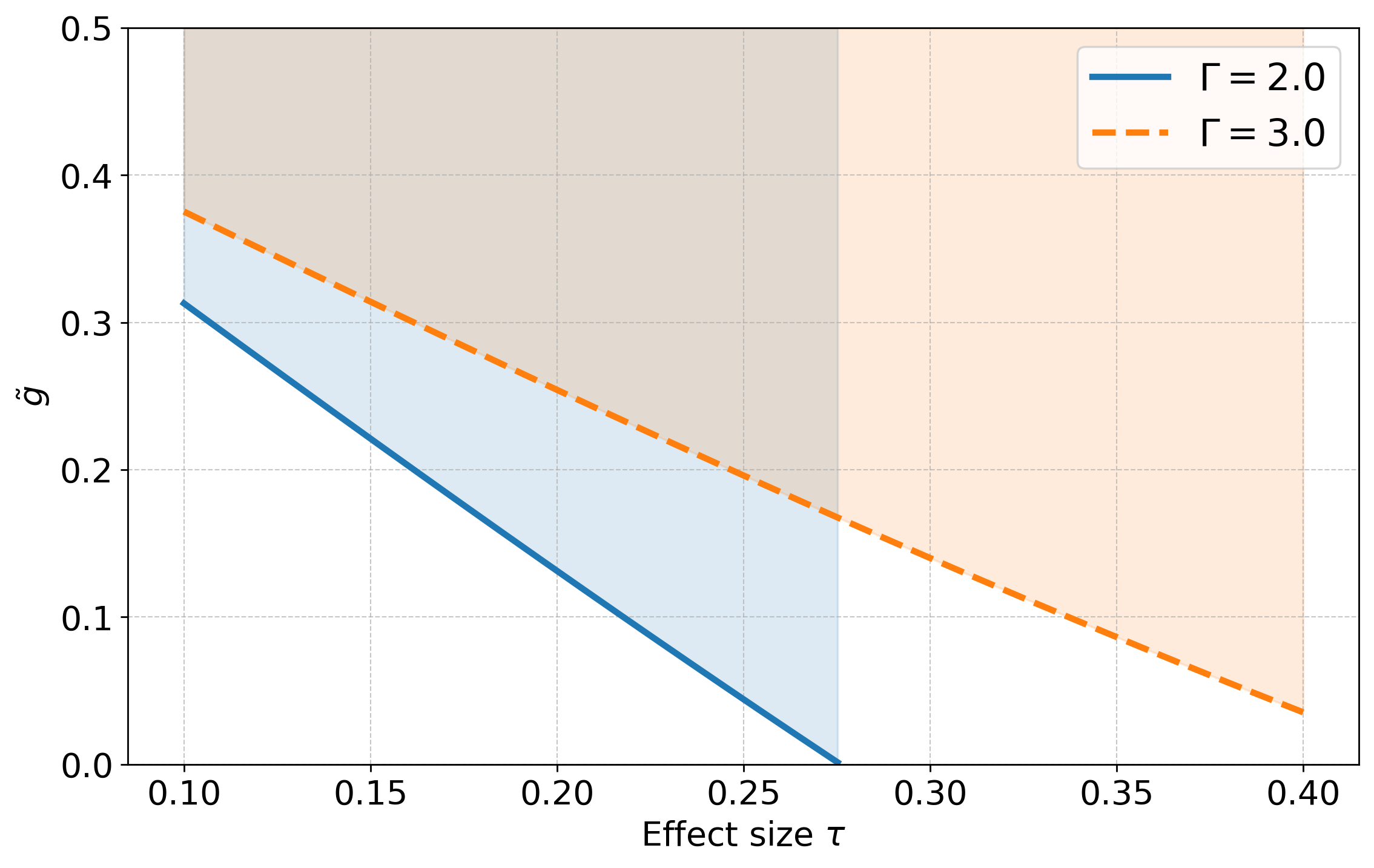}
  \caption{Gaussian noise (Bernoulli)}
\end{subfigure}

\medskip

\begin{subfigure}{0.49\textwidth}
  \centering
  \includegraphics[width=\linewidth]{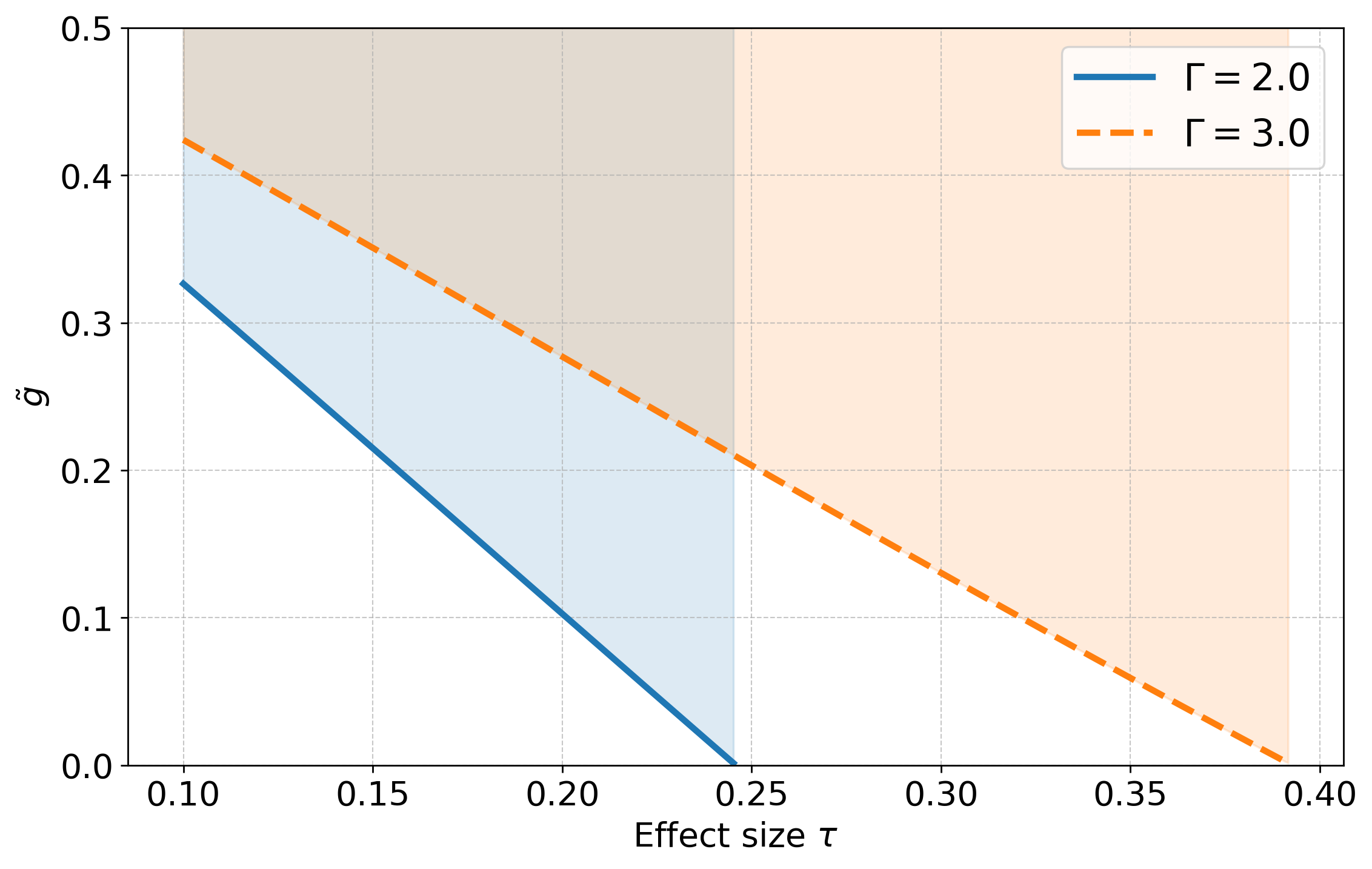}
  \caption{\(t_4\) noise (two-group)}
\end{subfigure}\hfill
\begin{subfigure}{0.49\textwidth}
  \centering
  \includegraphics[width=\linewidth]{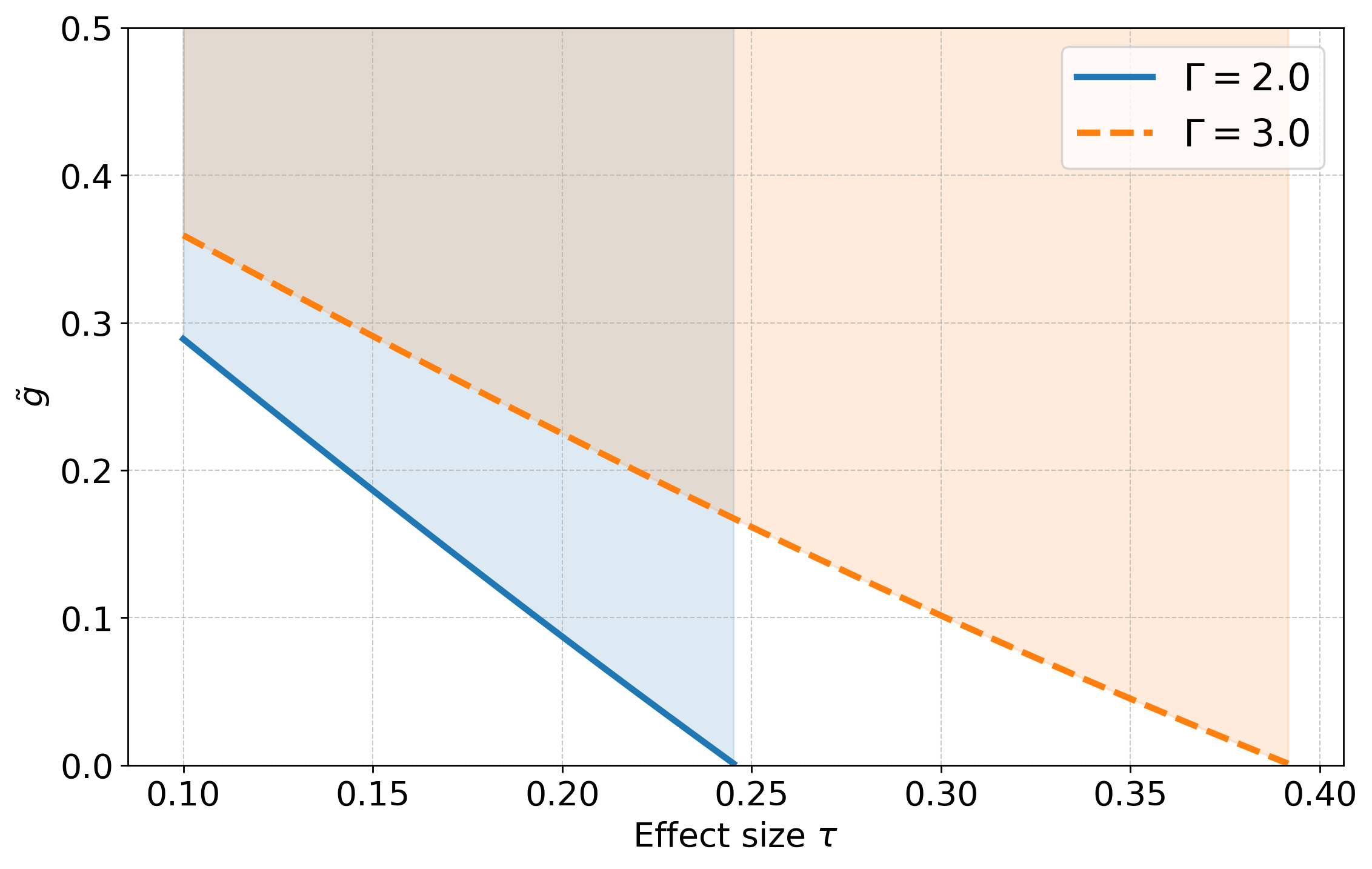}
  \caption{\(t_4\) noise (Bernoulli)}
\end{subfigure}

\caption{\(\tilde{g}(\tau;\Gamma)\) for different noise distributions, classes of conditional laws for the unobserved confounder, and degrees of stochasticity.  Shaded regions indicate rejection of Fisher's sharp null.}
\label{fig:gcrit}
\end{figure}

Figure~\ref{fig:gcrit} plots \(\tilde{g}(\tau;\Gamma)\) for the two stochastic sensitivity analyses, over \(\tau\in(0.1,0.4]\), \(\Gamma\in\{2, 3\}\), and two distributions for \(\varepsilon_i\): Gaussian noise and \(t_4\) noise, both with mean \(0\) and variance \(1\). 
Across both stochastic sensitivity analyses, the threshold \(\tilde{g}(\tau;\Gamma)\) decreases as the effect size \(\tau\) increases, showing that stronger treatment effects require less stochastic relaxation of the conventional sensitivity analysis in order to be detected.
For example, under the Gaussian noise and \(\Gamma=2\), \(\tilde{g}\) for two-group analysis decreases from about \(0.35\) at \(\tau=0.1\) to about \(0.05\) at \(\tau=0.25\), and reaches zero at approximately \(\tau=0.28\). 
Thus, when the effect size is modest, rejection may require only a small positive value of \(g\), whereas the conventional sensitivity analysis rejects only when the effect size is sufficiently large.
The threshold is generally smaller for the Bernoulli analysis than for the two-group analysis, again reflecting that the Bernoulli analysis is less conservative; see \eqref{eq:2G_geq_ber}. The exception occurs when
\(\tilde g(\tau;\Gamma)=0\), in which case the two analyses necessarily agree,
because at \(g=0\) both reduce to the conventional sensitivity analysis. For instance, under the effect size $\tau = 0.28$ with Gaussian noise and \(\Gamma=2\), both analyses reach \(\tilde g(\tau;\Gamma)=0\).

\section{Data illustrations}\label{sec:application}
\subsection{Reanalysis of Hammond's smoking study}\label{subsec:hammond_app}
In the smoking study of \citet{ham64}, introduced in Section~\ref{subsec:mt_example}, there were 122 discordant matched pairs in which exactly one subject died of lung cancer. Among these discordant pairs, 110 deaths occurred among smokers and 12 occurred among nonsmokers.  We reanalyze these data using McNemar's test statistic under the two-group analysis and the Bernoulli analysis developed in Section~\ref{sec:classes}. For each relaxation parameter \(g\in[0,0.1]\), we compute the corresponding sensitivity value \(\hat{\Gamma}\) \citep{zhao19}, defined as the smallest value of \(\Gamma\) at which the given sensitivity analysis no longer rejects Fisher's sharp null at level \(\alpha=0.05\).

\begin{figure}[htbp]
\centering
\includegraphics[width=0.5\textwidth]{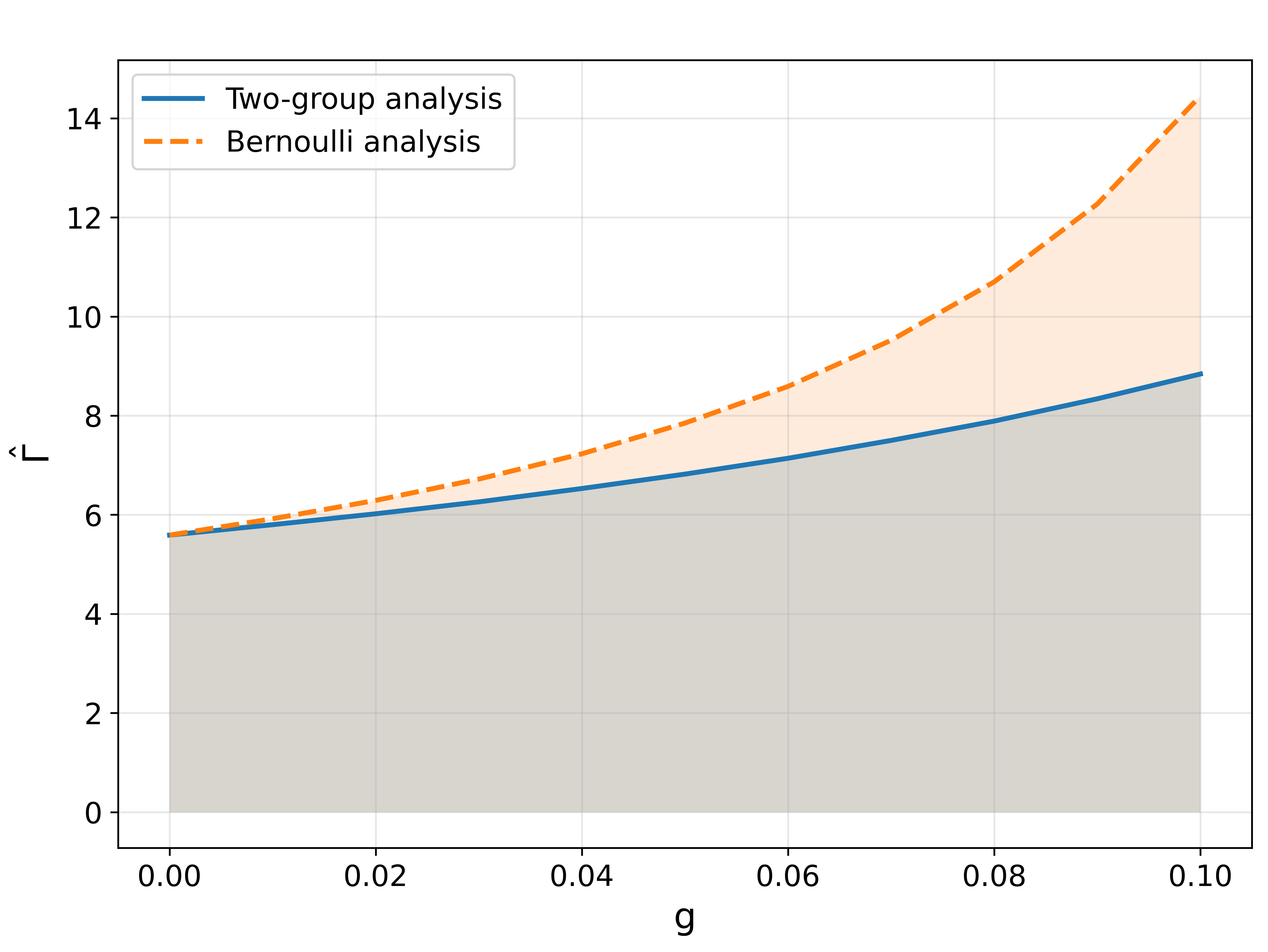}
\caption{Sensitivity values for different classes of conditional laws for the unobserved confounder, and degrees of stochasticity. Shaded regions indicate rejection of Fisher’s sharp null.}
\label{fig:hammond_svals}
\end{figure}

Figure~\ref{fig:hammond_svals} shows the results. Hammond's study is already highly insensitive under the conventional sensitivity analysis: at \(g=0\), both stochastic analyses coincide with the conventional sensitivity analysis and yield a sensitivity value of \(5.59\). Yet, by allowing only a modest departure from deterministic worst-case alignment, the gains in robustness are substantial.
Under the two-group analysis, the sensitivity value increases monotonically from $5.59$ at $g=0$ to $8.84$ at $g=0.1$. Under the Bernoulli analysis, the increase is much larger, reaching $14.45$ at $g=0.1$. 

The Bernoulli analysis is particularly interpretable in this example. As discussed in Section~\ref{subsec:mt_example}, a plausible hidden confounder is whether a subject carries a risk allele at the relevant variant. Such an allele may affect both smoking behavior and lung-cancer risk, but the available scientific evidence does not suggest perfect alignment between carrier status and death. Under the least favorable conditional law in the Bernoulli analysis, in a discordant pair, the subject who died of lung cancer carries the risk allele with probability \(1-g\), while the matched subject carries it with probability \(g\). Thus, when \(g=0.1\), the Bernoulli analysis still permits a highly adverse stochastic confounder, under which the subject who died of lung cancer carries the risk allele with probability \(0.9\) and the matched subject with probability \(0.1\). Even under such a strongly adverse specification, the sensitivity value increases from \(5.59\) to \(14.45\). 
Thus, under this Bernoulli interpretation, the observed association is difficult to attribute solely to a stochastic hidden confounder of the form considered here.

\subsection{Reanalysis of Binge drinking study}\label{subsec:binge}
We illustrate our methods with a reanalysis of the NHANES binge drinking study in \citet{ros23evidence}, based on the 2017--2020 National Health and Nutrition Examination Survey (NHANES) \citep{Akinbami2022}. The study was motivated by prior evidence that heavy episodic alcohol consumption may increase blood pressure \citep{Roerecke2017}. Unlike \citet{ros23evidence}, which used two control groups, we focus on the comparison of frequent binge drinkers and never-bingers and therefore use a different matching design.

Following \citet{ros23evidence}, we compare frequent binge drinkers to never-bingers on three outcomes: systolic blood pressure (SBP), diastolic blood pressure (DBP), and a pre-specified weighted combination: $(\mathrm{DBP}/10.7)+(\mathrm{SBP}/14.7).$
To reduce bias from measured covariates, we form matched sets using variable-ratio matching implemented through restricted full matching \citep{han04} on nine pre-treatment baseline covariates.  The resulting matched sample contains 1,382 individuals, including 206 treated units, with each treated unit matched to up to ten controls. Details on the
covariates and matching procedure are provided in
Section~\ref{app:binge-match} of the web-based supporting material.
As shown in Figure~\ref{fig:balance}, covariate balance improved substantially: large pre-matching standardized differences were reduced to near zero after matching, with all covariates falling well within the conventional balance threshold.
We test Fisher’s sharp null using a Huber M-score sum statistic \citep{hub81} with outer trimming constant 2.5.

\begin{figure}[t]
  \centering
  \includegraphics[width=0.8\linewidth]{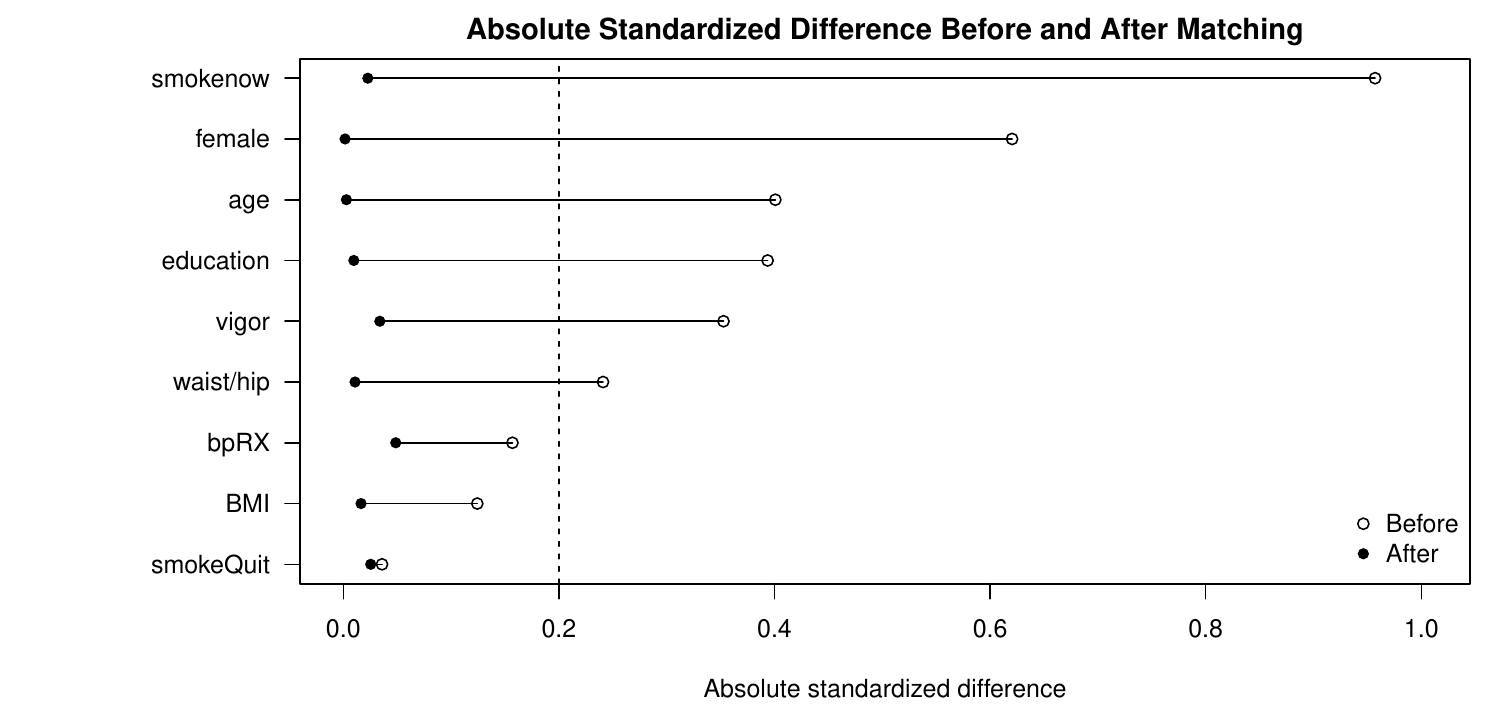}
  \caption{Covariate imbalances before and after matching. The vertical reference line indicates a threshold of 0.2, which is often regarded as the maximal allowable absolute standardized difference \citep{sil01}.}
  \label{fig:balance}
\end{figure}
We compare three sensitivity analyses: the conventional sensitivity analysis, the two-group analysis and the Bernoulli analysis using \(g \in \{0.1, 0.2\}\). For each outcome and each analysis, we report the corresponding sensitivity value \(\widehat{\Gamma}\).
Table~\ref{tab:sensvals_one_table} shows how the sensitivity value changes when
one moves from the conventional sensitivity analysis at \(g=0\) to the
stochastic analyses with positive \(g\). For the weighted-combination outcome, for example, the conventional sensitivity analysis ceases to reject at \(\widehat{\Gamma}=2.37\), whereas the two-group analysis continues to reject up to \(\widehat{\Gamma}=2.94\) when setting \(g=0.1\) and \(\widehat{\Gamma}=4.27\) when setting \(g=0.2\). The corresponding sensitivity values under the Bernoulli analysis are \(\widehat{\Gamma}=3.02\) and \(\widehat{\Gamma}=4.86\).
\begin{table}[h]
\centering
\begin{tabular}{lccccc}
\toprule
& \multicolumn{1}{c}{Conventional}
& \multicolumn{2}{c}{Two-group}
& \multicolumn{2}{c}{Bernoulli} \\
\cmidrule(lr){2-2}\cmidrule(lr){3-4}\cmidrule(lr){5-6}
Outcome & $g = 0$ & $g=0.1$ & $g=0.2$ & $g=0.1$ & $g=0.2$ \\
\midrule
Systolic blood pressure (SBP)  & 2.10 & 2.52 & 3.42 & 2.57 & 3.74 \\
Diastolic blood pressure (DBP) & 2.26 & 2.77 & 3.91 & 2.84 & 4.38 \\
Weighted combination           & 2.37 & 2.94 & 4.27 & 3.02 & 4.86 \\
\bottomrule
\end{tabular}
\caption{Sensitivity values \(\widehat{\Gamma}\) for three outcomes under three types of sensitivity  analyses.}
\label{tab:sensvals_one_table}
\end{table}

As discussed in Section~\ref{sec:classes}, as \(g\) increases, the analyses rule out progressively more extreme confounder distributions, are therefore less conservative, leading to larger sensitivity values. Moreover, Equation~\eqref{eq:2G_geq_ber} implies that the two-group analysis is weakly more conservative than the Bernoulli analysis, so it is unsurprising that the Bernoulli analysis yields slightly larger sensitivity values in Table~\ref{tab:sensvals_one_table}. 
Overall, the application shows that a modest relaxation of deterministic worst-case alignment can materially change the robustness conclusion.

\section{Discussion}
This paper proposes a stochastic sensitivity analysis for matched observational
studies that complements the conventional sensitivity analysis by
adding a second parameter \(g\), which restricts how concentrated the hidden
confounder distribution may be near its most adverse configurations. The framework separates two aspects of hidden bias: \(\Gamma\) controls the magnitude of bias, while \(g\) controls the extent to which the hidden confounder may align with the most adverse configuration.


Our results focus on the two-group and Bernoulli analyses, which impose additional structure on the mean-band class. The mean-band class \(\mcl_i^{\otimes}(g)\), by contrast, imposes only marginal mean restrictions and allows the coordinates within a matched set to have distinct marginal distributions. Proposition~\ref{prop:mean-band} shows that the optimization over this class is not intractable: an optimizer can always be chosen with at most two support points for each
marginal distribution. What remains open is to characterize these two-point
optimizers more explicitly. A natural next step is therefore to determine when
the two-group solution is already optimal for the full mean-band class, when
more general two-point marginals are needed, and how to compute the resulting
optimizer efficiently for larger matched sets.

It would also be natural to allow the stochasticity parameter to vary across
matched sets. In the current formulation, \(g\) is global: every matched set is
subject to the same restriction on how close the confounder distribution may come to the deterministic worst-case alignment. This specification is
convenient, but real studies may contain matched sets with different degrees of
adverse confounding. In some sets, the hidden confounder may behave nearly like the deterministic worst-case configuration considered by the conventional sensitivity analysis; in others, the hidden confounder may vary more stochastically and be less tightly aligned with the most adverse configuration.
A heterogeneous extension could replace the scalar \(g\) by set-specific parameters \(g_i\), or by a mixture structure in which some fraction of matched sets
remain unrestricted with \(g_i=0\), while the remaining sets satisfy \(g_i\ge g_0>0\). This would permit analyses of the following form:
what if a specified fraction of matched sets are subject to the conventional analysis, while the rest obey a stochasticity restriction? Such a model could broaden the interpretability of the method in applications where the degree of adverse alignment is believed to vary across
matched sets.
It would also raise new questions about how to aggregate the least favorable distributions
across matched sets with different stochasticity levels.



\begingroup
%
\setlength{\bibsep}{0pt}
\bibliographystyle{apalike}
\bibliography{ref}
\endgroup

\newpage
\appendix

\begin{center}
\Huge{\textbf{Supplementary Material}}
\end{center}

This supplementary material contains the technical proofs and additional derivations supporting the main text. Specifically, Section~\ref{app:proof-justify} proves the exact finite-sample result for two-point statistics and the asymptotic upper-tail
validity result for general statistics. Section~\ref{app:proof-classes}
proves the characterization of least favorable distributions for the
mean-band, two-group, and Bernoulli classes. Section~\ref{app:more-classes}
gives additional results from Section~\ref{sec:classes}, including the proof of the claim in Example~\ref{Example-1}, a nonlinear
programming formulation for the full mean-band problem and a rescaled Beta subclass. Section~\ref{app:ds}
derives the design sensitivity formulas used in Section~\ref{sec:power} of
the main text.

\bigskip
\section{Proofs of results in Section~\ref{sec:justify-mean}}\label{app:proof-justify}
\subsection{Proof of Proposition~\ref{prop:sum-fsd}}

\begin{proof}[Proof of Proposition~\ref{prop:sum-fsd}]
Define
\[
H_i := \sum_{j=1}^{n_i} Z_{ij}\,\1\{q_{ij}=a_{i1}\}\in\{0,1\},
\]
so that
\[
T_i = a_{i2} + (a_{i1}-a_{i2})H_i \in \{a_{i2},a_{i1}\}.
\]
For any \(P_i\in{\cP}_i\), write
\[
p_i(P_i):=\P_{P_i}(T_i=a_{i1})=\P_{P_i}(H_i=1).
\]
If \(a_{i1}=a_{i2}\), then \(T_i\) is constant under every \(P_i\). We therefore
consider only sets with \(a_{i1}>a_{i2}\). For such sets,
\[
\mu_i(P_i)
=
a_{i2} + (a_{i1}-a_{i2})\,p_i(P_i).
\]
Therefore, because \(P_i^\ast\) maximizes \(\mu_i(P_i)\) over \({\cP}_i\), we have
\begin{equation}\label{eq:pi-order}
p_i(P_i^\ast)\ge p_i(P_i)
\qquad\text{for every }P_i\in{\cP}_i.
\end{equation}

Now fix an arbitrary product distribution \(P=\bigotimes_{i=1}^I P_i\in{\cP}\).
Let \(U_1,\dots,U_I\) be independent \(\mathrm{Unif}(0,1)\) variables. Define
\[
\widetilde T_i^\ast
:=
a_{i1}\,\1\{U_i\le p_i(P_i^\ast)\}
+
a_{i2}\,\1\{U_i>p_i(P_i^\ast)\},
\]
and
\[
\widetilde T_i
:=
a_{i1}\,\1\{U_i\le p_i(P_i)\}
+
a_{i2}\,\1\{U_i>p_i(P_i)\}.
\]
Then \(\widetilde T_i^\ast\) has the same distribution as \(T_i\) under \(P_i^\ast\), while \(\widetilde T_i\) has the same
distribution as \(T_i\) under \(P_i\). Moreover, by \eqref{eq:pi-order},
\[
\widetilde T_i^\ast \ge \widetilde T_i
\qquad\text{almost surely for each }i.
\]
Hence, defining
\[
\widetilde T^\ast:=\sum_{i=1}^I \widetilde T_i^\ast,
\qquad
\widetilde T:=\sum_{i=1}^I \widetilde T_i,
\]
we have
\[
\widetilde T^\ast\ge \widetilde T
\qquad\text{almost surely}.
\]
Since \(\widetilde T_i^\ast\) are independent and match the marginal
distributions of $T_i$ under \(P^\ast\), and the
\(\widetilde T_i\)'s are independent and match the marginal distributions
under \(P\), it follows that
\[
\widetilde T^\ast \stackrel{d}{=} T
\quad\text{under } P^\ast,
\qquad
\widetilde T \stackrel{d}{=} T
\quad\text{under } P.
\]
Therefore, for every \(a\in\mathbb R\),
\[
\P_{P^\ast}(T\ge a)
=
\P(\widetilde T^\ast\ge a)
\ge
\P(\widetilde T\ge a)
=
\P_P(T\ge a).
\]
Since \(P\in{\cP}\) was arbitrary, taking the supremum over \(P\in{\cP}\)
proves \eqref{eq:tail-prob-bound}.
\end{proof}
\subsection{Proof of Theorem~\ref{thm:asym-tail}}

\begin{proof}[Proof of Theorem~\ref{thm:asym-tail}]
Throughout the proof we condition on \((\mca,\cZ)\) and
suppress this conditioning from the notation.

Let
\[
m_I:=\sum_{i=1}^I \mu_i,
\qquad
m_I^\ast:=\sum_{i=1}^I \mu_i^\ast,
\qquad
\sigma_I^2:=\sum_{i=1}^I \nu_i^2,
\qquad
(\sigma_I^\ast)^2:=\sum_{i=1}^I(\nu_i^\ast)^2.
\]
The threshold is
\[
a_I=m_I+c\sigma_I.
\]

By Assumption~\ref{assump:asym-separable}(i),
\[
\sigma_I^2
=
\sum_{i=1}^I\nu_i^2
\ge I\underline\nu^2,
\qquad
(\sigma_I^\ast)^2
=
\sum_{i=1}^I(\nu_i^\ast)^2
\ge I\underline\nu^2.
\]
Also, by the moment bound in Assumption~\ref{assump:asym-separable}(i),
\[
\sum_{i=1}^I
\E_P\left[
|T_i-\mu_i|^{2+\zeta}
\right]
\le MI.
\]
Therefore the Lyapunov ratio under \(P\) satisfies
\[
\frac{
\sum_{i=1}^I
\E_P\left[
|T_i-\mu_i|^{2+\zeta}
\right]
}{
\sigma_I^{2+\zeta}
}
\le
\frac{MI}{(I\underline\nu^2)^{1+\zeta/2}}
=
\frac{M}{\underline\nu^{2+\zeta}}I^{-\zeta/2}
\longrightarrow 0.
\]
The same argument gives
\[
\frac{
\sum_{i=1}^I
\E_{P^\ast}\left[
|T_i-\mu_i^\ast|^{2+\zeta}
\right]
}{
(\sigma_I^\ast)^{2+\zeta}
}
\longrightarrow 0.
\]
Hence, by the Lyapunov central limit theorem,
\[
\frac{T-m_I}{\sigma_I}\Rightarrow N(0,1)
\quad\text{under }P,
\qquad
\frac{T-m_I^\ast}{\sigma_I^\ast}\Rightarrow N(0,1)
\quad\text{under }P^\ast.
\]

For every \(\epsilon>0\), there exists \(I_1\) such that, for all
\(I\ge I_1\),
\begin{equation}\label{eq:tail-under-P-corrected}
\P_P(T\ge a_I)
=
\P_P\left(
\frac{T-m_I}{\sigma_I}\ge c
\right)
\le
1-\Phi(c)+\epsilon/2,
\end{equation}
and
\begin{equation}\label{eq:tail-under-Pstar-corrected}
\P_{P^\ast}\left(T\ge m_I^\ast+c\sigma_I^\ast\right)
=
\P_{P^\ast}\left(
\frac{T-m_I^\ast}{\sigma_I^\ast}\ge c
\right)
\ge
1-\Phi(c)-\epsilon/2.
\end{equation}

We show that, for all sufficiently large \(I\),
\begin{equation}\label{eq:threshold-domination-corrected}
a_I=m_I+c\sigma_I
\le
m_I^\ast+c\sigma_I^\ast.
\end{equation}
Equivalently,
\begin{equation}\label{eq:threshold-domination-average-corrected}
\frac{1}{I}\sum_{i=1}^I(\mu_i^\ast-\mu_i)
\ge
\frac{c}{\sqrt I}
\left\{
\sqrt{\frac{1}{I}\sum_{i=1}^I\nu_i^2}
-
\sqrt{\frac{1}{I}\sum_{i=1}^I(\nu_i^\ast)^2}
\right\}.
\end{equation}

Let
\[
x_I:=\frac{1}{I}\sum_{i=1}^I\nu_i^2,
\qquad
y_I:=\frac{1}{I}\sum_{i=1}^I(\nu_i^\ast)^2.
\]
If \(x_I\le y_I\), then the right-hand side of
\eqref{eq:threshold-domination-average-corrected} is nonpositive, while
the left-hand side is nonnegative because \(\mu_i^\ast\ge \mu_i\) for
every \(i\). Thus the desired inequality holds.

It remains to consider \(x_I>y_I\). Since \(u\mapsto \sqrt u\) is
concave on \((0,\infty)\),
\[
\sqrt{x_I}-\sqrt{y_I}
\le
\frac{x_I-y_I}{2\sqrt{y_I}}.
\]
By Assumption~\ref{assump:asym-separable}(i), \(y_I\ge \underline\nu^2\),
so
\begin{equation}\label{eq:sqrt-diff-corrected}
\sqrt{x_I}-\sqrt{y_I}
\le
\frac{x_I-y_I}{2\underline\nu}.
\end{equation}

Now
\[
x_I-y_I
=
\frac{1}{I}\sum_{i=1}^I
\left\{\nu_i^2-(\nu_i^\ast)^2\right\}.
\]
For \(i\notin A_I(P)\), we have
\(\nu_i^2-(\nu_i^\ast)^2\le0\). Hence
\[
x_I-y_I
\le
\frac{1}{I}\sum_{i\in A_I(P)}
\left\{\nu_i^2-(\nu_i^\ast)^2\right\}.
\]
By Assumption~\ref{assump:asym-separable}(ii),
\[
\nu_i^2-(\nu_i^\ast)^2\le \bar\nu^2
\qquad
\text{for every }i\in A_I(P).
\]
Therefore
\begin{equation}\label{eq:variance-diff-corrected}
x_I-y_I
\le
\bar\nu^2\,\pi_I(P).
\end{equation}
Combining \eqref{eq:sqrt-diff-corrected} and
\eqref{eq:variance-diff-corrected},
\[
\sqrt{x_I}-\sqrt{y_I}
\le
\frac{\bar\nu^2}{2\underline\nu}\pi_I(P).
\]
Consequently, the right-hand side of
\eqref{eq:threshold-domination-average-corrected} is at most
\[
\frac{c\bar\nu^2}{2\underline\nu\sqrt I}\pi_I(P).
\]

By Assumption~\ref{assump:asym-separable}(ii),
\[
\frac{1}{I}\sum_{i=1}^I(\mu_i^\ast-\mu_i)
\ge
\delta\,\pi_I(P)
\]
for all sufficiently large \(I\). Choose \(I_2\) large enough that
\[
\frac{c\bar\nu^2}{2\underline\nu\sqrt I}
\le
\delta
\qquad
\text{for all } I\ge I_2.
\]
Then, for all sufficiently large \(I\),
\eqref{eq:threshold-domination-average-corrected} holds. Hence
\eqref{eq:threshold-domination-corrected} holds.

For all sufficiently large \(I\), by
\eqref{eq:threshold-domination-corrected},
\[
\P_{P^\ast}(T\ge a_I)
\ge
\P_{P^\ast}(T\ge m_I^\ast+c\sigma_I^\ast).
\]
Using \eqref{eq:tail-under-Pstar-corrected} and
\eqref{eq:tail-under-P-corrected}, we obtain
\[
\P_{P^\ast}(T\ge a_I)
\ge
1-\Phi(c)-\epsilon/2
\ge
\P_P(T\ge a_I)-\epsilon.
\]
This proves \eqref{eq:asym-tail-bound}.

Now suppose Fisher's sharp null \(H_F\) holds and \(P\) is the true distribution of the
hidden confounders. Fix \(\alpha\in(0,1/2)\), and define
\[
z_\alpha:=\Phi^{-1}(1-\alpha)>0.
\]
Therefore
\[
\{p^\ast(T_{\obs})\le \alpha\}
=
\left\{
T_{\obs}\ge m_I^\ast+z_\alpha\sigma_I^\ast
\right\}.
\]

Apply the threshold comparison above with \(c=z_\alpha\). Since
\(z_\alpha>0\), for all sufficiently large \(I\),
\[
m_I^\ast+z_\alpha\sigma_I^\ast
\ge
m_I+z_\alpha\sigma_I.
\]
Hence
\[
\P_P\{p^\ast(T_{\obs})\le\alpha\}
\le
\P_P\!\left\{
T_{\obs}\ge m_I+z_\alpha\sigma_I
\right\}.
\]
By the central limit theorem under \(P\),
\[
\P_P\!\left\{
T_{\obs}\ge m_I+z_\alpha\sigma_I
\right\}
=
\P_P\!\left\{
\frac{T_{\obs}-m_I}{\sigma_I}\ge z_\alpha
\right\}
\longrightarrow
1-\Phi(z_\alpha)
=
\alpha.
\]
Therefore
\[
\limsup_{I\to\infty}
\P_P\!\left\{
p^\ast(T_{\obs})\le\alpha
\mid \mca,\cZ
\right\}
\le \alpha.
\]
This proves the type-I error statement.
\end{proof}

\section{Proofs of results in Section~\ref{sec:classes}}\label{app:proof-classes}
\subsection{Proof of Theorem~\ref{thm:mix-product-mean}}
\begin{proof}[Proof of Theorem~\ref{thm:mix-product-mean}]
For any \(P_{\Lambda_i}\in\cP_i\),
\[
\mu_i(P_{\Lambda_i})
=
\int_{\mcl_i^{\otimes}}\mu_i(Q_i)\,d\Lambda_i(Q_i)
\le
\sup_{Q_i\in\mcl_i^{\otimes}}\mu_i(Q_i).
\]
Taking the supremum over \(P_{\Lambda_i}\in\cP_i\) gives
\[
\sup_{P_i\in\cP_i}\mu_i(P_i)
\le
\sup_{Q_i\in\mcl_i^{\otimes}}\mu_i(Q_i).
\]
The reverse inequality follows because every \(Q_i\in\mcl_i^{\otimes}\) can be
viewed as the degenerate mixture. Hence
\[
\sup_{P_i\in\cP_i}\mu_i(P_i)
=
\sup_{Q_i\in\mcl_i^{\otimes}}\mu_i(Q_i).
\]

Let
\[
M_i
:=
\sup_{Q_i\in\mcl_i^{\otimes}}\mu_i(Q_i).
\]
If \(\Lambda_i((\mcl_i^{\otimes})^\star)=1\), then
\[
\mu_i(P_{\Lambda_i})
=
\int_{\mcl_i^{\otimes}}\mu_i(Q_i)\,d\Lambda_i(Q_i)
=
M_i,
\]
so \(P_{\Lambda_i}\in\cP_i^\star\).

Conversely, suppose \(P_{\Lambda_i}\in\cP_i^\star\). Then
\[
0
=
M_i-\mu_i(P_{\Lambda_i})
=
\int_{\mcl_i^{\otimes}}
\{M_i-\mu_i(Q_i)\}\,d\Lambda_i(Q_i).
\]
The integrand is nonnegative everywhere on \(\mcl_i^{\otimes}\). Therefore it
must be zero \(\Lambda_i\)-almost surely, which implies
\[
\Lambda_i\bigl((\mcl_i^{\otimes})^\star\bigr)=1.
\]
This proves the characterization of \(\cP_i^\star\).

We now prove the variance assertion. Let
\(\widetilde Q_i\sim\Lambda_i\) denote the random product law in the
mixture representation. Conditional on \(\widetilde Q_i=Q_i\), the
set-specific statistic has mean \(\mu_i(Q_i)\) and variance
\(\nu_i^2(Q_i)\). Therefore, by the law of total variance,
\[
\nu_i^2(P_{\Lambda_i})
=
\int_{\mcl_i^{\otimes}}
\nu_i^2(Q_i)\,d\Lambda_i(Q_i)
+
\operatorname{Var}_{\Lambda_i}\{\mu_i(Q_i)\}.
\]
If \(P_{\Lambda_i}\in\cP_i^\star\), then \(\Lambda_i\) assigns probability one
to \((\mcl_i^{\otimes})^\star\). Hence
\[
\mu_i(Q_i)
=
\sup_{\widetilde Q_i\in\mcl_i^{\otimes}}\mu_i(\widetilde Q_i)
\qquad
\text{for } \Lambda_i\text{-almost every } Q_i.
\]
Thus \(\mu_i(Q_i)\) is constant \(\Lambda_i\)-almost surely, so
\[
\operatorname{Var}_{\Lambda_i}\{\mu_i(Q_i)\}=0.
\]
Moreover, since \(\Lambda_i\bigl((\mcl_i^{\otimes})^\star\bigr)=1\),
\[
\int_{\mcl_i^{\otimes}}
\nu_i^2(Q_i)\,d\Lambda_i(Q_i)
=
\int_{(\mcl_i^{\otimes})^\star}
\nu_i^2(Q_i)\,d\Lambda_i(Q_i).
\]
Therefore
\[
\nu_i^2(P_{\Lambda_i})
=
\int_{(\mcl_i^{\otimes})^\star}
\nu_i^2(Q_i)\,d\Lambda_i(Q_i).
\]

It follows that
\[
\nu_i^2(P_{\Lambda_i})
\le
\sup_{Q_i\in(\mcl_i^{\otimes})^\star}\nu_i^2(Q_i)
\qquad
\text{for every }P_{\Lambda_i}\in\cP_i^\star.
\]
Taking the supremum over \(P_{\Lambda_i}\in\cP_i^\star\) gives
\[
\sup_{P_i\in\cP_i^\star}\nu_i^2(P_i)
\le
\sup_{Q_i\in(\mcl_i^{\otimes})^\star}\nu_i^2(Q_i).
\]
The reverse inequality follows because every
\(Q_i\in(\mcl_i^{\otimes})^\star\) can be viewed as a degenerate mixture and
therefore belongs to \(\cP_i^\star\). Hence
\[
\sup_{P_i\in\cP_i^\star}\nu_i^2(P_i)
=
\sup_{Q_i\in(\mcl_i^{\otimes})^\star}\nu_i^2(Q_i).
\]
The final assertion follows immediately when the right-hand supremum is
attained at \(Q_i^\star\).
\end{proof}
\subsection{Proof of Proposition~\ref{prop:mean-band}}
\begin{lemma}\label{lm:extreme_L}
A distribution \(Q\in\mcl(g)\) is an extreme point of \(\mcl(g)\) if and only if
either

\smallskip

\noindent\textup{(i)} \(Q=\delta_c\) for some \(c\in[1,\Gamma]\cap[\mu^-(g),\mu^+(g)]\), or

\smallskip

\noindent\textup{(ii)} \(Q\) is supported on exactly two points in \([1,\Gamma]\) and
\[
\int x\,dQ(x)\in\{\mu^-(g),\mu^+(g)\}.
\]
\end{lemma}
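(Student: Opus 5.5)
We prove both directions of Lemma~\ref{lm:extreme_L} by viewing \(\mcl(g)\) as the convex set of probability measures on \([1,\Gamma]\) cut out by the two linear inequalities \(\mu^-(g)\le \int x\,dQ\le\mu^+(g)\). Throughout, write \(m(Q):=\int x\,dQ(x)\). The guiding fact is the classical one: an extreme point of a set of measures defined by \(r\) linear moment constraints has at most \(r+1\) support points. Only the active constraints count here, so a law with interior mean should be a Dirac mass, and a law on the boundary of the band should have at most two atoms.

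\textbf{Necessity.} Let \(Q\) be extreme in \(\mcl(g)\).

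First suppose \(\mu^-(g)<m(Q)<\mu^+(g)\) and \(Q\) is not a point mass. Then there is a Borel set \(B\) with \(0<Q(B)<1\). Write \(Q=\lambda Q_B+(1-\lambda)Q_{B^c}\), where \(\lambda=Q(B)\) and \(Q_B,Q_{B^c}\) are the normalized restrictions. Set \(Q_t:=(1-t)Q+tQ_B\) and \(Q'_t:=(1-t)Q+tQ_{B^c}\). For small \(t>0\) both have means in the open band, and \(Q\) lies strictly between them: indeed \(Q=\lambda' Q_t+(1-\lambda')Q'_t\) for suitable \(\lambda'\), since \(Q\) lies on the segment joining \(Q_B\) and \(Q_{B^c}\). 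This contradicts extremality. Hence in this case \(Q=\delta_c\) with \(c\) in the band, which is case (i).

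Next suppose \(m(Q)\in\{\mu^-(g),\mu^+(g)\}\) and \(Q\) has at least three points of support, meaning the support cannot be covered by two points. Partition \([1,\Gamma]\) into three Borel sets \(B_1,B_2,B_3\) of positive mass. Write \(f\) for the bounded function taking value \(\alpha_k\) on \(B_k\), and choose \((\alpha_1,\alpha_2,\alpha_3)\neq 0\) solving the two linear equations \(\int f\,dQ=0\) and \(\int xf\,dQ=0\). Two equations in three unknowns always have a nonzero solution. For small \(\varepsilon\), the measures \((1\pm\varepsilon f)\,dQ\) are probability measures with the same mean \(m(Q)\), hence lie in \(\mcl(g)\). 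They are distinct and average to \(Q\), a contradiction. So \(Q\) has at most two support points. If it has one point, it is \(\delta_c\) with \(c=m(Q)\) in the band, which is case (i). Otherwise it is case (ii).

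\textbf{Sufficiency.}

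For case (i), suppose \(\delta_c=\lambda Q_1+(1-\lambda)Q_2\) with \(\lambda\in(0,1)\). Then both \(Q_1\) and \(Q_2\) are supported in \(\{c\}\), so \(Q_1=Q_2=\delta_c\). Hence every point mass in the band is extreme.

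For case (ii), let \(Q=p\delta_a+(1-p)\delta_b\) with \(m(Q)=\mu^+(g)\); the case \(\mu^-(g)\) is symmetric. Suppose \(Q=\lambda Q_1+(1-\lambda)Q_2\) with \(Q_1,Q_2\in\mcl(g)\). Absolute continuity forces \(Q_1,Q_2\) to be supported on \(\{a,b\}\). Because \(\lambda m(Q_1)+(1-\lambda)m(Q_2)=\mu^+(g)\) and each mean is at most \(\mu^+(g)\), both means must equal \(\mu^+(g)\). A law on two fixed points is determined by its mean, so \(Q_1=Q_2=Q\).

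\textbf{Main obstacle.} The delicate step is the perturbation argument for general, possibly non-atomic, \(Q\) in the boundary case: one has to produce three disjoint sets of positive mass whenever the support has at least three points. This follows by taking three support points and disjoint small neighbourhoods around them, each of positive \(Q\)-mass, with the remainder absorbed into one of the three sets. One must also keep the perturbation \(1\pm\varepsilon f\) nonnegative, which holds because \(f\) is bounded.
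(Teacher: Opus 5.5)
Your proof is correct and follows essentially the paper's route: the perturbation $(1\pm\varepsilon f)\,dQ$ on three disjoint sets of positive mass, which preserves total mass and mean, excludes laws with three or more support points, and the absolute-continuity argument gives sufficiency in both cases (i) and (ii). The only difference is how the necessity direction is organized. You dispose of every non-Dirac law with interior mean at once by mixing toward the normalized restrictions $Q_B$ and $Q_{B^c}$, which is the paper's weight perturbation of a two-point law extended to general $Q$, and you use the three-set argument only for boundary means. The paper instead applies the three-set argument to all $Q$ and then perturbs the weights of two-point laws with interior mean.
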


\begin{proof}
Suppose first that \(Q\) has support containing at least three points. Then there exist
pairwise disjoint sets \(A_1,A_2,A_3\subseteq[1,\Gamma]\) such that
\(Q(A_r)>0\) for \(r=1,2,3\). Since the three vectors
\[
\left(Q(A_r),\int_{A_r}x\,dQ(x)\right)\in\mathbb R^2,
\qquad r=1,2,3,
\]
are linearly dependent, there exists
\((c_1,c_2,c_3)\neq(0,0,0)\) such that
\[
\sum_{r=1}^3 c_r\,Q(A_r)=0,
\qquad
\sum_{r=1}^3 c_r\int_{A_r}x\,dQ(x)=0.
\]
Define
\[
u(x):=\sum_{r=1}^3 c_r\,1_{A_r}(x).
\]
Then
\[
\int u(x)\,dQ(x)=0,
\qquad
\int x\,u(x)\,dQ(x)=0.
\]
For sufficiently small \(\varepsilon>0\), the functions
\(1\pm\varepsilon u(x)\) are nonnegative \(Q\)-a.s. Define
\[
Q^\pm(B):=\int_B \bigl(1\pm\varepsilon u(x)\bigr)\,dQ(x),
\qquad
B\subseteq[1,\Gamma].
\]
Since \((c_1,c_2,c_3)\neq 0\) and \(Q(A_r)>0\) for each \(r\), the
function \(u\) is not zero \(Q\)-almost surely. Hence, for sufficiently small
\(\varepsilon>0\), the measures \(Q^+\) and \(Q^-\) are distinct. Moreover, 
\[
\int x\,dQ^\pm(x)=\int x\,dQ(x).
\]
Hence \(Q^\pm\in\mcl(g)\) and
\[
Q=\frac12Q^+ + \frac12Q^-.
\]
Thus \(Q\) is not extreme. Therefore every extreme point of \(\mcl(g)\) has support
size at most two.

Next suppose
\[
Q=\lambda\delta_a+(1-\lambda)\delta_b,
\qquad 1\le a<b\le \Gamma,
\]
with \(\lambda\in(0,1)\), and
\[
\mu^-(g)<\lambda a+(1-\lambda)b<\mu^+(g).
\]
For sufficiently small \(\eta>0\), both \(\lambda\pm\eta\) lie in \([0,1]\), and both
means
\[
(\lambda\pm\eta)a+(1-\lambda\mp\eta)b
\]
still lie in \([\mu^-(g),\mu^+(g)]\). Thus the two distinct distributions
\[
Q^\pm
:=
(\lambda\pm\eta)\delta_a+(1-\lambda\mp\eta)\delta_b
\]
belong to \(\mcl(g)\), and
\[
Q=\frac12Q^+ + \frac12Q^-.
\]
So a two-point distribution with interior mean is not extreme.

Conversely, every point mass \(\delta_c\) with
\(c\in[1,\Gamma]\cap[\mu^-(g),\mu^+(g)]\) is extreme. Indeed, suppose
\[
\delta_c=tQ_1+(1-t)Q_2
\]
for some \(t\in(0,1)\) and \(Q_1,Q_2\in\mcl(g)\). Since \(\delta_c\) assigns no mass
to \([1,\Gamma]\setminus\{c\}\), the same is true of \(Q_1\) and \(Q_2\). Hence
\(Q_1=Q_2=\delta_c\).

Now let
\[
Q=\lambda\delta_a+(1-\lambda)\delta_b,
\qquad 1\le a<b\le \Gamma,
\]
with \(\lambda\in(0,1)\), and suppose
\[
\int x\,dQ(x)=\mu^-(g).
\]
The case \(\int x\,dQ(x)=\mu^+(g)\) is identical. Suppose
\[
Q=tQ_1+(1-t)Q_2
\]
for some \(t\in(0,1)\) and \(Q_1,Q_2\in\mcl(g)\). Since \(Q\) assigns no mass to
\([1,\Gamma]\setminus\{a,b\}\), the same is true of \(Q_1\) and \(Q_2\). Moreover,
\[
\mu^-(g)
=
\int x\,dQ(x)
=
t\int x\,dQ_1(x)+(1-t)\int x\,dQ_2(x),
\]
while feasibility implies
\[
\int x\,dQ_1(x)\ge \mu^-(g),
\qquad
\int x\,dQ_2(x)\ge \mu^-(g).
\]
Hence both means must equal \(\mu^-(g)\). But among probability distributions
supported on \(\{a,b\}\), the mean uniquely determines the mixing weight. Therefore
\(Q_1=Q_2=Q\), so \(Q\) is extreme.

This proves the lemma.
\end{proof}

\begin{proof}[Proof of Proposition~\ref{prop:mean-band}]
For \(\bx_i=(x_{i1},\ldots,x_{in_i})\in[1,\Gamma]^{n_i}\), define
\[
h_i(\bx_i)
:=
\frac{\sum_{j=1}^{n_i}q_{ij}x_{ij}}
     {\sum_{\ell=1}^{n_i}x_{i\ell}},
\]
so that $\mu_i(P_i)=\int h_i(\bx_i)\,dP_i(\bx_i).$

First, the supremum in \eqref{eq:obj_cpg} is attained. The set
\(\mcl(g)\) is weakly compact because it is a closed subset of
\(\mathcal P([1,\Gamma])\), and \([1,\Gamma]\) is compact. Hence
\(\prod_{j=1}^{n_i}\mcl(g)\) is compact. The map
\[
(Q_{i1},\ldots,Q_{in_i})
\mapsto
\int h_i(\bx_i)\,d\bigotimes_{j=1}^{n_i}Q_{ij}(\bx_i)
\]
is continuous because \(h_i\) is bounded and continuous on
\([1,\Gamma]^{n_i}\). Therefore a maximizer exists. Let $Q_i^\star
=
\bigotimes_{j=1}^{n_i}Q_{ij}^\star
\in \mcl_i^{\otimes}(g)$
be a maximizer of \eqref{eq:obj_cpg}.

We now show that the marginal laws \(Q_{ij}^\star\) may be chosen to
have the stated two-point form. Now fix \(j\in\{1,\ldots,n_i\}\), and fix the distributions
\[
Q_{i1}^\star,\ldots,Q_{i,j-1}^\star,Q_{i,j+1}^\star,\ldots,Q_{in_i}^\star.
\] 
For
\(x\in[1,\Gamma]\), define
\[
\phi_{ij}(x)
:=
\int
h_i(x_{i1},\ldots,x_{i,j-1},x,x_{i,j+1},\ldots,x_{in_i})
\prod_{\ell\ne j}Q_{i\ell}^\star(dx_{i\ell}).
\]
This is the conditional expectation of \(h_i(\bG_i)\) given \(G_{ij}=x\),
with the other coordinates integrated out under the fixed marginal laws.
Then, conditional on the other marginal laws, optimizing the \(j\)th marginal
amounts to the affine problem
\[
\sup_{Q\in\mcl(g)} \int \phi_{ij}(x)\,dQ(x).
\]
Because \(Q_i^\star\) is globally optimal, its \(j\)th marginal \(Q_{ij}^\star\) is a
maximizer of this one-dimensional problem. Since \(\mcl(g)\) is compact and convex and the objective is affine in \(Q\), by Bauer’s maximum principle, 
there exists an optimizer that is an extreme point of \(\mcl(g)\).

Replacing \(Q_{ij}^\star\) by such an extreme-point maximizer does
not decrease the objective.
Applying this argument successively for \(j=1,\ldots,n_i\), we obtain an optimizer,
still denoted by
\[
Q_i^\star=\bigotimes_{j=1}^{n_i}Q_{ij}^\star,
\]
such that every marginal \(Q_{ij}^\star\) is an extreme point of \(\mcl(g)\). This is a one-dimensional instance of the classical theory of extreme points of moment sets; see \cite{winkler88}.

By Lemma~\ref{lm:extreme_L}, each \(Q_{ij}^\star\) is supported on at most two points in \([1,\Gamma]\).
More precisely, either
\[
Q_{ij}^\star=\delta_c
\quad\text{for some }c\in[\mu^-(g),\mu^+(g)],
\]
or \(Q_{ij}^\star\) is supported on exactly two points and satisfies
\[
\E_{Q_{ij}^\star}[G_{ij}]
=
\int x\,dQ_{ij}^\star(x)
\in\{\mu^-(g),\mu^+(g)\}.
\]
Finally, by Theorem~\ref{thm:mix-product-mean}, the same product law
\(Q_i^\star\), viewed as the degenerate mixture, attains the worst-case mean over
\(\cP_i(g)\).
\end{proof}

\subsection{Proof of Theorem~\ref{thm:two-group}}
In this section we prove Theorem~\ref{thm:two-group}. 
We use Lemmas~\ref{lm:extreme_L}-\ref{lm:extreme-distributions-mean-band} to characterize the maximizers of 
\[
\sup_{Q_i\in\mcl_i^{\otimes,\mathrm{2G}}(g)}
\mu_i(Q_i).
\]
For a fixed product law \(Q_i = \bigotimes_{j=1}^{n_i}\, Q_{ij}\),
\[
\varrho_{ij}(Q_i):=\E_{Q_i}\!\bigl[\varrho_{ij}(\bG_i)\bigr],
\qquad
\mu_i(Q_i)=\sum_{j=1}^{n_i}\varrho_{ij}(Q_i)\,q_{ij}.
\]
Under the two-group restriction there exist two marginal distributions \(Q_i^+\) and \(Q_i^-\) such that each coordinate distribution \(Q_{ij}\) is equal to either \(Q_i^+\) or \(Q_i^-\). Let
\[
\mcm_i^+ := \{j:Q_{ij}=Q_i^+\},
\qquad
\mcm_i^- := [n_i]\setminus \mcm_i^+,
\qquad
k:=|\mcm_i^+|.
\]

The one-group cases \(k=0\) and \(k=n_i\) yield
\(\varrho_{ij}(Q_i)=1/n_i\) for all \(j\), and hence
\[
\mu_i(Q_i)=\frac{1}{n_i}\sum_{j=1}^{n_i}q_{ij}.
\]
These cases do not affect the supremum. Indeed, for any nontrivial
\(k\in\{1,\ldots,n_i-1\}\), the feasible set with a top-\(k\) partition
contains the one-group choice \(Q_i^+=Q_i^-\), which gives the same value
\((1/n_i)\sum_j q_{ij}\). Therefore the optimized value over nontrivial
two-group partitions is at least as large as the one-group value.

\begin{lemma}
\label{lm:two-valued-rho}
Fix \(Q_i^+\), \(Q_i^-\), and a partition \(\mcm_i^+\). Then \(\varrho_{ij}(Q_i)\) takes at most two values:
\[
\varrho_{ij}(Q_i)=\varrho_i^+(Q_i)\quad (j\in\mcm_i^+),
\qquad
\varrho_{ij}(Q_i)=\varrho_i^-(Q_i)\quad (j\in\mcm_i^-).
\]
Moreover, for fixed \((Q_i^+,Q_i^-)\), the pair \((\varrho_i^+(Q_i),\varrho_i^-(Q_i))\) depends on the partition only through \(|\mcm_i^+|\). 
\end{lemma}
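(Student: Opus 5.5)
The plan is a direct symmetry argument based on exchangeability. Fix \(Q_i^+\), \(Q_i^-\) and the partition \(\mcm_i^+\) with \(k=|\mcm_i^+|\). Under \(Q_i=\bigotimes_j Q_{ij}\), the coordinates \(G_{i1},\ldots,G_{in_i}\) are independent, with \(G_{ij}\sim Q_i^+\) for \(j\in\mcm_i^+\) and \(G_{ij}\sim Q_i^-\) otherwise.

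Take two indices \(j,j'\) in the same group and let \(\sigma\) be the transposition swapping them. The map \(\bx_i\mapsto \bx_i\circ\sigma\) leaves \(Q_i\) invariant, because it permutes independent coordinates that share the same marginal law. The denominator \(\sum_\ell G_{i\ell}\) is symmetric, and \(G_{ij}\) becomes \(G_{ij'}\) under the swap. Hence \(\varrho_{ij}(\bG_i)\) and \(\varrho_{ij'}(\bG_i)\) have the same distribution under \(Q_i\). Since the ratios are bounded in \([1/(n_i\Gamma),\Gamma/n_i]\), the expectations exist, and \(\varrho_{ij}(Q_i)=\varrho_{ij'}(Q_i)\). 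This gives the two common values \(\varrho_i^\pm(Q_i)\). If one group is empty, only one value occurs, which is consistent with the claim of at most two values.

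For dependence only through \(k\), write \(\varrho_i^+(Q_i)=\E\bigl[X/(X+S^+_{k-1}+S^-_{n_i-k})\bigr]\). Here \(X\sim Q_i^+\), \(S^+_{k-1}\) is a sum of \(k-1\) independent \(Q_i^+\) draws, and \(S^-_{n_i-k}\) is a sum of \(n_i-k\) independent \(Q_i^-\) draws, all mutually independent. This joint law is determined by \((Q_i^+,Q_i^-,k)\) alone and not by which indices form \(\mcm_i^+\). The same holds for \(\varrho_i^-\). Equivalently, any two partitions of equal size are related by a permutation \(\pi\) of \([n_i]\), and pushing forward by \(\pi\) maps one product law to the other while preserving the symmetric denominator.

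There is no real obstacle. The only care needed is to state the invariance of the product law under within-group permutations precisely and to note integrability, which holds because the ratios are bounded. As a by-product, one can record the identity \(k\varrho_i^+ + (n_i-k)\varrho_i^- = 1\), which follows because the \(\varrho_{ij}\) sum to one. This identity will be useful for the later top-\(k\) reduction.
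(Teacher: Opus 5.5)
Your proof is correct and follows essentially the same route as the paper. Within-group transpositions leave the product law invariant while swapping the corresponding treatment probabilities, and two partitions of equal size are related by a permutation of indices that preserves the symmetric denominator. Your explicit representation of $\varrho_i^+(Q_i)$ through the independent sums $S^+_{k-1}$ and $S^-_{n_i-k}$ is a more concrete version of the paper's relabeling argument showing that only the partition size $k$ matters.
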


\begin{proof}
Under the two-group restriction, the coordinates \(\{G_{ij}:j\in\mcm_i^+\}\) are i.i.d. with distribution \(Q_i^+\), the coordinates \(\{G_{ij}:j\in\mcm_i^-\}\) are i.i.d. with distribution \(Q_i^-\), and the two collections are independent. Since
\[
\varrho_{ij}(\bG_i)=\frac{G_{ij}}{\sum_{\ell=1}^{n_i}G_{i\ell}},
\]
swapping two indices within the same group leaves the joint distribution of \(\bG_i\) unchanged and swaps the corresponding treatment probabilities. Hence \(\varrho_{ij}(Q_i)\) is constant on \(\mcm_i^+\), and likewise constant on \(\mcm_i^-\), proving the first claim.

Now fix \((Q_i^+,Q_i^-)\) and two partitions of \([n_i]\) having the same cardinality. One partition is obtained from the other by a permutation of indices. Because the distribution of \(\bG_i\) is invariant under such a relabeling, the common values \(\varrho_i^+(Q_i)\) and \(\varrho_i^-(Q_i)\) depend on the partition only through the cardinality.
\end{proof}
Without loss of generality, we may assume \(\varrho_i^+(Q_i)\ge \varrho_i^-(Q_i)\) (relabeling the two groups if necessary).
\begin{lemma}
\label{lm:Stoch_topk}
Fix \(Q_i^+\), \(Q_i^-\), and consider all partitions \(\mcm_i^+\subset[n_i]\) of size \(k=|\mcm_i^+|\). If \(q_{i1}\ge \cdots \ge q_{in_i}\), then, among all partitions of size \(k\), \(\mu_i(Q_i)\) is maximized by
\(\mcm_i^+=\{1,\ldots,k\}\).
\end{lemma}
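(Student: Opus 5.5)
By Lemma~\ref{lm:two-valued-rho}, once \(Q_i^+\), \(Q_i^-\) and the size \(k=|\mcm_i^+|\) are fixed, the induced treatment probabilities take only two values. They equal \(\varrho_i^+\) on \(\mcm_i^+\) and \(\varrho_i^-\) on \(\mcm_i^-\), and neither value depends on which indices form \(\mcm_i^+\). The objective therefore splits as
\[
\mu_i(Q_i)=\varrho_i^+\sum_{j\in\mcm_i^+}q_{ij}+\varrho_i^-\sum_{j\in\mcm_i^-}q_{ij}
=\varrho_i^-\sum_{j=1}^{n_i}q_{ij}+\bigl(\varrho_i^+-\varrho_i^-\bigr)\sum_{j\in\mcm_i^+}q_{ij}.
\]
The first term is the same for every partition of size \(k\). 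The coefficient \(\varrho_i^+-\varrho_i^-\) is nonnegative by the convention adopted after Lemma~\ref{lm:two-valued-rho}. It is also common to all partitions of size \(k\), again by that lemma.

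So maximizing \(\mu_i(Q_i)\) over partitions of size \(k\) reduces to maximizing \(\sum_{j\in\mcm_i^+}q_{ij}\) over all \(k\)-subsets of \([n_i]\). Because \(q_{i1}\ge\cdots\ge q_{in_i}\), the sum of any \(k\) entries is at most the sum of the \(k\) largest. This is the elementary rearrangement fact: the \(m\)-th largest element of any \(k\)-subset is at most \(q_{im}\). Hence \(\mcm_i^+=\{1,\ldots,k\}\) attains the maximum. If \(\varrho_i^+=\varrho_i^-\), all partitions give the same value, so the claim holds trivially.

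The one delicate point is the convention \(\varrho_i^+\ge\varrho_i^-\), since the labels of the two groups could be swapped. I would handle it as follows. Fix the unordered pair \(\{Q_i^+,Q_i^-\}\) and the size \(k\), and call "\(+\)" whichever group has the larger common probability. The values \(\varrho_i^\pm\) depend only on the laws and on \(k\), not on the particular index set. This ordering is therefore the same across all partitions of size \(k\), so the comparison above is legitimate.
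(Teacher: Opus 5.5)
Your proposal is correct and follows the paper's proof essentially verbatim. Both use Lemma~\ref{lm:two-valued-rho} to write $\mu_i(Q_i)=\varrho_i^-\sum_{j}q_{ij}+(\varrho_i^+-\varrho_i^-)\sum_{j\in\mcm_i^+}q_{ij}$, note that the coefficient is nonnegative, and take the $k$ largest scores. Your added remark is a useful justification of the paper's ``without loss of generality'' relabeling, which the paper leaves implicit. The remark is that $\varrho_i^\pm$, and hence the sign of $\varrho_i^+-\varrho_i^-$, depend only on the laws and on $k$, so they are the same across all partitions of a given size.
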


\begin{proof}
By Lemma~\ref{lm:two-valued-rho},
\[
\mu_i(Q_i)
=
\sum_{j\in\mcm_i^+}\varrho_i^+(Q_i)q_{ij}
+
\sum_{j\in\mcm_i^-}\varrho_i^-(Q_i)q_{ij}
=
\varrho_i^-(Q_i)\sum_{j=1}^{n_i}q_{ij}
+
\bigl(\varrho_i^+(Q_i)-\varrho_i^-(Q_i)\bigr)\sum_{j\in\mcm_i^+}q_{ij}.
\]
The first term does not depend on the partition, and the coefficient of the second term is nonnegative. Hence, among all subsets of size \(k\), the value of \(\mu_i(Q_i)\) is maximized by choosing the \(k\) largest \(q_{ij}\)'s, namely by taking \(\mcm_i^+=\{1,\ldots,k\}\).
\end{proof}

\begin{lemma}
\label{lm:max_rho_plus}
Fix \(k\in\{1,\ldots,n_i-1\}\) and the top-\(k\) partition \(\mcm_i^+(k):=\{1,\ldots,k\}\). Let
\[
\Psi_i(k;Q_i^+,Q_i^-)
:=
\E_{Q_i}\!\left[
  \frac{\sum_{j=1}^k G_{ij}}{\sum_{j=1}^{n_i} G_{ij}}
\right]
=
 k\,\varrho_i^+(Q_i).
\]
Then any maximizer of \(\Psi_i(k;Q_i^+,Q_i^-)\) over admissible \((Q_i^+,Q_i^-)\) is also a maximizer of \(\mu_i(Q_i)\) among two-group distributions with \(|\mcm_i^+|=k\).
\end{lemma}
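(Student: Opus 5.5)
The plan is to express $\mu_i(Q_i)$, for the fixed top-$k$ partition, as an affine function of $\varrho_i^+(Q_i)$ whose slope has a known sign; maximizing $\Psi_i$ is then the same as maximizing $\mu_i$. By Lemma~\ref{lm:two-valued-rho}, every $j\le k$ has treatment probability $\varrho_i^+(Q_i)$ and every $j>k$ has $\varrho_i^-(Q_i)$. Because $\sum_{j=1}^{n_i}\varrho_{ij}(\bG_i)=1$ pointwise, taking expectations gives
\[
k\,\varrho_i^+(Q_i)+(n_i-k)\,\varrho_i^-(Q_i)=1 ,
\]
so $\varrho_i^-(Q_i)=\{1-\Psi_i(k;Q_i^+,Q_i^-)\}/(n_i-k)$. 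Thus both group probabilities are determined by the single scalar $\Psi_i$.

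Next, substitute this into the expression for $\mu_i(Q_i)$ from the proof of Lemma~\ref{lm:Stoch_topk}. Write $S_k:=\sum_{j\le k}q_{ij}$ and $S_k^c:=\sum_{j>k}q_{ij}$. Then
\[
\mu_i(Q_i)=\frac{\Psi_i}{k}\,S_k+\frac{1-\Psi_i}{n_i-k}\,S_k^c
=\frac{S_k^c}{n_i-k}+\Psi_i\left(\frac{S_k}{k}-\frac{S_k^c}{n_i-k}\right).
\]
The slope is the average of the top $k$ scores minus the average of the remaining scores. Since $q_{i1}\ge\cdots\ge q_{in_i}$, this slope is nonnegative, so $\mu_i$ is a nondecreasing affine function of $\Psi_i$ over all admissible pairs $(Q_i^+,Q_i^-)$ with the top-$k$ partition fixed.

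The conclusion follows directly: a maximizer of $\Psi_i$ yields a value of $\mu_i$ at least as large as that of any other admissible pair. By Lemma~\ref{lm:Stoch_topk}, the top-$k$ partition is already optimal among all partitions of size $k$, so such a maximizer maximizes $\mu_i$ over all two-group distributions with $|\mcm_i^+|=k$.

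The main care point is the orientation convention. The ordering $\varrho_i^+\ge\varrho_i^-$ was imposed by relabeling, but the identity above does not use it. The relabeling only fixes which group sits on indices $1,\ldots,k$, and both $(Q_i^+,Q_i^-)$ and the swapped pair $(Q_i^-,Q_i^+)$ on the top-$k$ set are among the admissible pairs. So I will state explicitly that the maximization of $\Psi_i$ ranges over all ordered pairs in $\mcl(g)^2$ placed on $\{1,\dots,k\}$. A zero slope (all scores tied across the split) is harmless, because then $\mu_i$ is constant in $\Psi_i$.
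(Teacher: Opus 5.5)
Your proof is correct and is the paper's argument: the identity $k\varrho_i^+(Q_i)+(n_i-k)\varrho_i^-(Q_i)=1$ makes $\mu_i(Q_i)$ a nondecreasing affine function of $\Psi_i=k\varrho_i^+(Q_i)$ on the fixed top-$k$ partition. Your extra appeal to Lemma~\ref{lm:Stoch_topk} is the step the paper leaves to the proof of Theorem~\ref{thm:two-group}, but one correction to your closing remark: the convention $\varrho_i^+\ge\varrho_i^-$ is not needed for the identity, yet it is exactly what licenses that appeal, because it decides which group is the size-$k$ group (it does not merely fix placement). Without it the extended claim fails; e.g.\ with $n_i=3$, $k=1$, $q_i=(1,1,0)$, $g=0$, $\Gamma=2$, taking $\delta_1$ on unit $3$ as the size-one group and $\delta_\Gamma$ on units $1,2$ gives $\mu_i=4/5$, exceeding the value $3/4$ of the $\Psi_i$-maximizer.
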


\begin{proof}
Since \(\sum_{j=1}^{n_i}\varrho_{ij}(\bG_i)=1\) almost surely, Lemma~\ref{lm:two-valued-rho} gives
\[
 k\,\varrho_i^+(Q_i)+(n_i-k)\,\varrho_i^-(Q_i)=1.
\]
Hence
\[
\mu_i(Q_i)
=
\frac{1}{n_i-k}\sum_{j=k+1}^{n_i} q_{ij}
+
 k\,\varrho_i^+(Q_i)
\left(
\frac{1}{k}\sum_{j=1}^{k} q_{ij}
-
\frac{1}{n_i-k}\sum_{j=k+1}^{n_i} q_{ij}
\right).
\]
Because \(q_{i1}\ge\cdots\ge q_{in_i}\), the coefficient multiplying \(k\varrho_i^+(Q_i)\) is nonnegative. Therefore any choice of \((Q_i^+,Q_i^-)\) that maximizes \(\Psi_i(k;Q_i^+,Q_i^-)\) also maximizes \(\mu_i(Q_i)\) for this fixed value of \(k\).
\end{proof}

\begin{lemma}
\label{lm:extreme-distributions-mean-band}
Fix \(k\in\{1,\ldots,n_i-1\}\). Over all admissible pairs \((Q_i^+,Q_i^-)\), the functional \(\Psi_i(k;Q_i^+,Q_i^-)\) is maximized at
\[
Q_i^{+\star}=\delta_{\mu^+(g)},
\qquad
Q_i^{-\star}=(1-g)\delta_1+g\delta_\Gamma.
\]
\end{lemma}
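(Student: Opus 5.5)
The plan is to handle the two groups separately, exploiting the shape of the map
\[
f(s,r):=\frac{s}{s+r},\qquad s>0,\ r>0,
\]
so that \(\Psi_i(k;Q_i^+,Q_i^-)=\E[f(S,R)]\) with \(S:=\sum_{j\le k}G_{ij}\) and \(R:=\sum_{j>k}G_{ij}\). Under a two-group product law, \(S\) is a sum of \(k\) i.i.d.\ draws from \(Q_i^+\), \(R\) is a sum of \(n_i-k\) i.i.d.\ draws from \(Q_i^-\), and \(S\) and \(R\) are independent. Elementary calculus gives two facts. For fixed \(r\), \(s\mapsto f(s,r)\) is increasing and concave, since \(\partial_s f=r/(s+r)^2\) is positive and decreasing. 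For fixed \(s\), \(r\mapsto f(s,r)\) is decreasing and convex. Because \(R\) is a sum, the convexity also holds in each individual coordinate \(G_{ij}\), \(j>k\), when the other coordinates are held fixed.

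\emph{Step 1 (the \(Q_i^+\) group).} Fix any admissible \(Q_i^-\) and condition on \(R\). By Jensen's inequality applied to the concave map \(s\mapsto f(s,R)\), and then monotonicity in \(s\),
\[
\E[f(S,R)\mid R]\le f(\E S,R)=f(k\,m^+,R)\le f(k\mu^+(g),R),
\]
where \(m^+:=\E_{Q_i^+}[G]\le\mu^+(g)\). Equality holds throughout when \(Q_i^+=\delta_{\mu^+(g)}\), and this law lies in \(\mcl(g)\). Hence, for every \(Q_i^-\), the choice \(Q_i^{+\star}=\delta_{\mu^+(g)}\) is optimal. The bound does not depend on \(Q_i^-\), so the problem reduces to maximizing \(\E[\varphi(R)]\) over \(Q_i^-\in\mcl(g)\), where
\[
\varphi(r):=\frac{k\mu^+(g)}{k\mu^+(g)+r}
\]
is convex and decreasing.

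\emph{Step 2 (reduction of \(Q_i^-\) to endpoint Bernoulli).} Fix \(Q_i^-\) with mean \(m^-\), and let \(p:=(m^--1)/(\Gamma-1)\). Since \(m^-\in[\mu^-(g),\mu^+(g)]\), we have \(p\in[g,1-g]\). Take a coordinate \(j>k\) and condition on the other coordinates. The map \(x\mapsto\varphi(x+\text{rest})\) is convex on \([1,\Gamma]\), so it lies below its chord through the endpoints \(1\) and \(\Gamma\). Integrating the chord against \(Q_i^-\) gives exactly the value obtained by replacing \(G_{ij}\) with an independent draw from \(\operatorname{Bern}_{1,\Gamma}(p)\), because the chord is affine and only the mean enters. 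Independence of the coordinates makes this replacement legitimate. Performing it successively for \(j=k+1,\ldots,n_i\) never decreases the objective. At the end every coordinate in the lower group is i.i.d.\ \(\operatorname{Bern}_{1,\Gamma}(p)\), so the two-group restriction is restored.

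\emph{Step 3 (choice of \(p\)).} For \(p\le p'\), \(\operatorname{Bern}_{1,\Gamma}(p')\) stochastically dominates \(\operatorname{Bern}_{1,\Gamma}(p)\), hence the corresponding sums satisfy \(R_p\le_{\mathrm{st}}R_{p'}\). Since \(\varphi\) is decreasing, \(\E\varphi(R_p)\) is nonincreasing in \(p\), and the maximum over \(p\in[g,1-g]\) is attained at \(p=g\). This gives \(Q_i^{-\star}=(1-g)\delta_1+g\delta_\Gamma\), whose mean is \(\mu^-(g)\), so it is admissible. Combining Steps 1--3 proves the lemma.

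The main obstacle I expect is Step 2: the chord (Bernoulli) replacement has to respect the constraint that all lower-group coordinates share the same law \(Q_i^-\). The sequential, one-coordinate-at-a-time replacement with a common mean resolves this. The argument only needs that each single-coordinate conditional objective is convex and that the coordinates are independent; both hold for the product laws considered here.
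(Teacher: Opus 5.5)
Your proof is correct and follows essentially the paper's route. The paper likewise uses Jensen plus monotonicity to collapse the upper group to $\delta_{\mu^+(g)}$, and the chord (Edmundson--Madansky) bound to replace each lower-group coordinate by an endpoint Bernoulli with the same mean, then pushes that mean down to $\mu^-(g)$. The differences are cosmetic. You apply Jensen to the whole sum $S$ at once rather than coordinate by coordinate, and you use stochastic dominance in the common $p$ instead of the paper's observation that the chord bound is affine and decreasing in the mean. This lets you keep the two-group structure throughout, whereas the paper temporarily enlarges the class to heterogeneous marginals.
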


\begin{proof}
Write
\[
A_k:=\sum_{j=1}^k G_{ij},
\qquad
B_k:=\sum_{j=k+1}^{n_i} G_{ij},
\qquad
\Psi_i(k;Q_i^+,Q_i^-)=\E_{Q_i}\!\left[\frac{A_k}{A_k+B_k}\right].
\]
We optimize the plus and minus distributions separately.
In the following argument, we temporarily allow the marginal distributions within a group to differ. This only enlarges the feasible set, so any upper bound obtained in this enlarged product class applies to the two-group subclass. The final optimal distribution has common marginals within each group and therefore belongs to \(\mcl_i^{\otimes,\mathrm{2G}}(g)\).

First consider a plus-group coordinate \(j\le k\). Conditional on all coordinates except \(G_{ij}\), write
\[
C_j:=\sum_{\ell\le k,\,\ell\ne j}G_{i\ell},
\qquad
D_j:=\sum_{\ell>k}G_{i\ell}.
\]
Then
\[
x\longmapsto \frac{x+C_j}{x+C_j+D_j}
\]
is increasing and concave on \([1,\Gamma]\). Let \(m_+:=\E_{Q_i^+}[G_{ij}]\in[\mu^-(g),\mu^+(g)]\). By Jensen's inequality,
\[
\E_{Q_i}\!\left[\left.\frac{G_{ij}+C_j}{G_{ij}+C_j+D_j}\,\right|\,C_j,D_j\right]
\le
\frac{m_+ + C_j}{m_+ + C_j + D_j}
\le
\frac{\mu^+(g) + C_j}{\mu^+(g) + C_j + D_j}.
\]
Thus replacing \(G_{ij}\) by the constant \(\mu^+(g)\) weakly increases \(\Psi_i(k;Q_i^+,Q_i^-)\). Repeating this argument for every \(j\le k\) yields
\[
\Psi_i(k;Q_i^+,Q_i^-)
\le
\Psi_i\bigl(k;\delta_{\mu^+(g)},Q_i^-\bigr).
\]

Now consider a minus-group coordinate \(j>k\). Conditional on all coordinates except \(G_{ij}\), write
\[
A_j:=\sum_{\ell\le k}G_{i\ell},
\qquad
B_j:=\sum_{\ell>k,\,\ell\ne j}G_{i\ell}.
\]
Then
\[
y\longmapsto \frac{A_j}{A_j+B_j+y}
\]
is decreasing and convex on \([1,\Gamma]\). Let \(m_-:=\E_{Q_i^-}[G_{ij}]\in[\mu^-(g),\mu^+(g)]\). For any \(Y\sim Q_i^-\), by the Edmundson--Madansky inequality \citep{madansky59},
\[
\E\!\left[\left.\frac{A_j}{A_j+B_j+Y}\,\right|\,A_j,B_j\right]
\le
\frac{\Gamma-m_-}{\Gamma-1}\cdot \frac{A_j}{A_j+B_j+1}
+
\frac{m_- -1}{\Gamma-1}\cdot \frac{A_j}{A_j+B_j+\Gamma}.
\]
The right-hand side is attained by the endpoint Bernoulli distribution on \(\{1,\Gamma\}\) with mean \(m_-\). Hence replacing each minus-group coordinate by that endpoint Bernoulli distribution weakly increases \(\Psi_i\). Moreover, the right-hand side is affine and decreasing in \(m_-\), so it is maximized at the smallest admissible mean, namely \(m_-=\mu^-(g)\). Since $\mu^-(g)=1+(\Gamma-1)g,$
the corresponding endpoint Bernoulli distribution is exactly
\[
(1-g)\delta_1+g\delta_\Gamma.
\]
Therefore
\[
\Psi_i\bigl(k;\delta_{\mu^+(g)},Q_i^-\bigr)
\le
\Psi_i\bigl(k;\delta_{\mu^+(g)},(1-g)\delta_1+g\delta_\Gamma\bigr),
\]
which proves the claim.
\end{proof}

\begin{proof}[Proof of Theorem~\ref{thm:two-group}]
Fix \(k\in\{1,\ldots,n_i-1\}\). By Lemma~\ref{lm:Stoch_topk}, among all
two-group product distributions with \(|\mcm_i^+|=k\), it suffices to consider the
top-\(k\) partition \(\mcm_i^+(k)=\{1,\ldots,k\}\). By
Lemma~\ref{lm:extreme-distributions-mean-band}, within this class there is a
maximizer of \(\Psi_i(k;Q_i^+,Q_i^-)\), and hence by
Lemma~\ref{lm:max_rho_plus} also of \(\mu_i(Q_i)\), given by $Q_i^{(k)}$.
Since \(k\) ranges over the finite set \(\{1,\ldots,n_i-1\}\), it follows that
\[
\sup_{Q_i\in\mcl_i^{\otimes,\mathrm{2G}}(g)}
\mu_i(Q_i)
=
\max_{1\le k\le n_i-1}
\mu_i\!\left(Q_i^{(k)}\right).
\]
By Theorem~\ref{thm:mix-product-mean}, the same value is the worst-case
mean over \(\cP_i^{\mathrm{2G}}(g)\), and a maximizing product law can be
viewed as a degenerate mixture.

It remains to identify the variance tie-breaker. The preceding argument shows
that, up to permutations of indices with tied \(q_{ij}\)'s, every
mean-maximizing product law in
\(\mcl_i^{\otimes,\mathrm{2G}}(g)\) is one of the finitely many laws
\(\{Q_i^{(k)}:k\in\mck_i\}\). Therefore, by the definition of
\(k_i^\star\),
\[
\nu_i^2(Q_i^{(k_i^\star)})
=
\max_{Q_i\in(\mcl_i^{\otimes,\mathrm{2G}}(g))^\star}
\nu_i^2(Q_i).
\]
The variance assertion over the mixture class then follows from
Theorem~\ref{thm:mix-product-mean}.
\end{proof}

\subsection{Proof of Theorem~\ref{thm:ber}}

\begin{lemma}\label{lem:multiaffine-endpoints}
Let \(F\) and \(W\) be real-valued functions on \([g,1-g]^d\) that are
affine in each coordinate separately. Then the maximum of \(F\) over
\([g,1-g]^d\) is attained at an endpoint vector, that is, at a vector
\(\mathbf p\) satisfying
\[
p_j\in\{g,1-g\},\qquad j=1,\ldots,n.
\]
Moreover, among the maximizers of \(F\), a maximizer of \(W\) can also be
chosen to be an endpoint vector.
\end{lemma}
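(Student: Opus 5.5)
The plan is coordinate-wise endpoint replacement, using multi-affinity. Fix any maximizer $\mathbf p^\circ\in[g,1-g]^d$ of $F$; one exists because $F$ is multi-affine, hence a polynomial and continuous on the compact box. Process the coordinates one at a time. Freeze all coordinates except $p_j$. The map $t\mapsto F(p_1,\ldots,p_{j-1},t,p_{j+1},\ldots,p_d)$ is affine on $[g,1-g]$, so its maximum is attained at $t=g$ or $t=1-g$. If the slope is nonzero, optimality of the current vector forces $p_j$ to already be the maximizing endpoint. If the slope is zero, we may move $p_j$ to either endpoint without changing $F$. In both cases we obtain a maximizer whose $j$th coordinate lies in $\{g,1-g\}$, and the earlier coordinates are unchanged. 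After $d$ steps we reach an endpoint maximizer, which proves the first claim.

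For the tie-breaking claim, let $S:=\argmax F$. This set is nonempty and compact, and $W$ is continuous, so $W$ attains its maximum on $S$ at some $\mathbf p^\dagger$. We run the same coordinate sweep starting from $\mathbf p^\dagger$, now tracking the pair $(F,W)$.

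At coordinate $j$, fix the others. The restricted function $t\mapsto F(\cdot,t,\cdot)$ is affine.
\begin{itemize}
\item If it has nonzero slope, then $p^\dagger_j$ is already the unique maximizing endpoint, and nothing needs to change.
\item If it has zero slope, then every $t\in[g,1-g]$ keeps the vector inside $S$. The restricted map $t\mapsto W(\cdot,t,\cdot)$ is affine as well, so one of the two endpoints gives a value of $W$ at least $W(\mathbf p^\dagger)$. Since $\mathbf p^\dagger$ already maximizes $W$ over $S$, this endpoint gives equality. Replace $p^\dagger_j$ by that endpoint.
\end{itemize}
Induction over $j=1,\ldots,d$ produces an endpoint vector that lies in $S$ and attains $\max_S W$.

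The only delicate point is this second step. Restricted to $S$, $W$ need not be affine in the joint sense, and $S$ need not be a box. The argument avoids both issues because it moves only one coordinate at a time, along a segment on which $F$ is constant, and $W$ is affine along every such axis-parallel segment.

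In the application to Theorem~\ref{thm:ber}, $F$ corresponds to $\mu_i$ and $W$ to $\nu_i^2$ restricted to the maximizers. Under a product law, $\varrho_{ij}(Q_i)=\E[\varrho_{ij}(\bG_i)]$ is multi-affine in $(p_{i1},\ldots,p_{in_i})$, because it is an expectation over independent Bernoulli coordinates. So $\mu_i$ is multi-affine, and $\nu_i^2=\sum_j\varrho_{ij}q_{ij}^2-\mu_i^2$ is too when only coordinates along which $\mu_i$ is constant are moved. That is the only affinity actually used in the second step, so the lemma is invoked with this understanding.
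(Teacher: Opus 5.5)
Your proof is correct. The first claim follows the paper's coordinate-by-coordinate argument. You additionally make explicit that the sweep must start from a global maximizer, and you justify that such a maximizer exists by continuity on the compact box. The paper's phrasing leaves this implicit, and it matters: a coordinate sweep from an arbitrary starting point can stop at a non-optimal vertex.

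For the tie-breaking claim your route differs from the paper's. The paper takes $\mathbf p$ maximizing $W$ over $\operatorname*{arg\,max} F$. It then uses the multilinear interpolation to write both $F(\mathbf p)$ and $W(\mathbf p)$ as the same convex combination of their values at the vertices obtained by sending each non-endpoint coordinate of $\mathbf p$ to $g$ or $1-g$. Optimality of $\mathbf p$ forces every vertex carrying positive weight into $\operatorname*{arg\,max} F$, and one of these vertices has $W$-value at least $W(\mathbf p)$.

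Your sweep is the iterated one-dimensional version of this argument: each step moves a single coordinate along an axis-parallel segment on which $F$ is constant.

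\begin{itemize}
\item The paper's version handles all interior coordinates at once and is shorter.
\item Yours needs only that $W$ be affine along axis-parallel segments on which $F$ is constant. As you observe, this lets the lemma be applied to $\nu_i^2$ directly.
\end{itemize}

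In the proof of Theorem~\ref{thm:ber}, the paper avoids that issue differently. It applies the lemma to the globally multi-affine $W_i=\sum_j\varrho_{ij}(Q_i)q_{ij}^2$ and then uses that $\mu_i$ is constant on the set of mean maximizers.
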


\begin{proof}
The first claim follows by optimizing one coordinate at a time. Fix all
coordinates except \(p_j\). Since \(F\) is affine in \(p_j\), its maximum over
\([g,1-g]\) is attained at one of the endpoints \(g\) or \(1-g\). Repeating
this argument for \(j=1,\ldots,d\) gives an endpoint vector that maximizes
\(F\).

For the second claim, let \(\mathbf p\) be a maximizer of \(F\) that also
maximizes \(W\) among all maximizers of \(F\). If all coordinates of
\(\mathbf p\) are endpoints, there is nothing to prove. Otherwise, because
\(F\) is affine in each coordinate separately, \(F(\mathbf p)\) can be written
as a convex combination of the values of \(F\) at the endpoint vectors obtained
by replacing each non-endpoint coordinate of \(\mathbf p\) by either \(g\) or
\(1-g\). Since \(\mathbf p\) maximizes \(F\), every endpoint vector appearing
with positive weight in this convex combination must also maximize \(F\).

The same convex-combination representation holds for \(W(\mathbf p)\), since
\(W\) is also affine in each coordinate separately. Therefore \(W(\mathbf p)\)
is a convex combination of the values of \(W\) at endpoint vectors that also
maximize \(F\). At least one of these endpoint vectors has \(W\)-value no
smaller than \(W(\mathbf p)\). Since \(\mathbf p\) was chosen to maximize
\(W\) among the maximizers of \(F\), that endpoint vector also maximizes
\(W\) among the maximizers of \(F\).
\end{proof}

\begin{proof}[Proof of Theorem~\ref{thm:ber}]
Recall that $q_{i1}\ge q_{i2}\ge \cdots \ge q_{in_i}$ and \(q_{i1}>q_{in_i}\).
Under the Bernoulli subclass, we may write
\[
G_{ij}=1+(\Gamma-1)B_{ij},
\qquad
B_{ij}\stackrel{\mathrm{ind}}{\sim}\mathrm{Bernoulli}(p_{ij}),
\qquad
p_{ij}\in[g,1-g],
\]
for \(j=1,\dots,n_i\). Hence
\[
\mu_i(Q_i)
=
\E\!\left[
\frac{\sum_{j=1}^{n_i}\{1+(\Gamma-1)B_{ij}\}q_{ij}}
     {\sum_{j=1}^{n_i}\{1+(\Gamma-1)B_{ij}\}}
\right].
\]

For \(\mathbf b_i=(b_{i1},\dots,b_{in_i})\in\{0,1\}^{n_i}\), define
\[
H_i(\mathbf b_i)
:=
\frac{\sum_{j=1}^{n_i}\{1+(\Gamma-1)b_{ij}\}q_{ij}}
     {\sum_{j=1}^{n_i}\{1+(\Gamma-1)b_{ij}\}}.
\]
Then
\[
\mu_i(Q_i)
=
\sum_{\mathbf b_i\in\{0,1\}^{n_i}}
H_i(\mathbf b_i)\prod_{j=1}^{n_i}
p_{ij}^{\,b_{ij}}(1-p_{ij})^{1-b_{ij}}.
\]
Thus \(\mu_i(Q_i)\) is continuous on the compact set \([g,1-g]^{n_i}\) and
is affine in each coordinate \(p_{ij}\) separately. By
Lemma~\ref{lem:multiaffine-endpoints}, a mean maximizer can be chosen with
\[
p_{ij}^\star\in\{g,1-g\},\qquad j=1,\ldots,n_i.
\]

It remains to show that the larger value \(1-g\) can be assigned to the larger scores. Let
\[
\mathbf p_i=(p_{i1},\dots,p_{in_i})\in\{g,1-g\}^{n_i},
\]
and suppose that for some \(a<b\),
\[
p_{ia}=g,\qquad p_{ib}=1-g.
\]
Let \(\mathbf p_i'\) be obtained by swapping these two coordinates:
\[
p_{ia}'=1-g,\qquad p_{ib}'=g,\qquad p_{i\ell}'=p_{i\ell}\ \text{for }\ell\neq a,b.
\]
We claim that \(\mu_i(Q_i')\ge \mu_i(Q_i)\).

Condition on \(\{B_{i\ell}:\ell\neq a,b\}\), and define
\[
S_{i,-ab}:=\sum_{\ell\neq a,b} G_{i\ell},
\qquad
A_{i,-ab}:=\sum_{\ell\neq a,b} G_{i\ell}q_{i\ell},
\]
together with
\[
h_{ab}(x,y)
:=
\frac{xq_{ia}+yq_{ib}+A_{i,-ab}}{x+y+S_{i,-ab}},
\qquad x,y\in\{1,\Gamma\}.
\]
Given \(\{B_{i\ell}:\ell\neq a,b\}\), the only configurations affected by the swap are
\((G_{ia},G_{ib})=(\Gamma,1)\) and \((1,\Gamma)\). Therefore,
\[
\E_{\mathbf p_i'}\!\bigl[H_i(\mathbf B_i)\mid \{B_{i\ell}:\ell\neq a,b\}\bigr]
-
\E_{\mathbf p_i}\!\bigl[H_i(\mathbf B_i)\mid \{B_{i\ell}:\ell\neq a,b\}\bigr]
=
(1-2g)\bigl\{h_{ab}(\Gamma,1)-h_{ab}(1,\Gamma)\bigr\}.
\]
Since
\[
h_{ab}(\Gamma,1)-h_{ab}(1,\Gamma)
=
\frac{(\Gamma-1)(q_{ia}-q_{ib})}{\Gamma+1+S_{i,-ab}}
\ge 0,
\]
and \(1-2g\ge 0\), it follows that
\[
\E_{\mathbf p_i'}\!\bigl[H_i(\mathbf B_i)\mid \{B_{i\ell}:\ell\neq a,b\}\bigr]
\ge
\E_{\mathbf p_i}\!\bigl[H_i(\mathbf B_i)\mid \{B_{i\ell}:\ell\neq a,b\}\bigr].
\]
Taking expectations over \(\{B_{i\ell}:\ell\neq a,b\}\) gives
\[
\mu_i(Q_i')\ge \mu_i(Q_i).
\]

Thus, whenever a larger score is paired with \(g\) and a smaller score is paired with \(1-g\), swapping the two cannot decrease the objective. Repeating this pairwise swap argument yields an optimizer of top-\(k\) form:
\[
p_{ij}^\star=1-g \quad \text{for } j\le k,
\qquad
p_{ij}^\star=g \quad \text{for } j>k,
\]
for some \(k\in\{0,\dots,n_i\}\).

It remains to rule out the edge cases \(k=0\) and \(k=n_i\). First suppose \(g<1/2\). If \(k=0\), then \(p_{ij}^\star=g\) for all \(j\). Conditioning on \(\{B_{i2},\dots,B_{in_i}\}\), define
\[
S_{i,-1}:=\sum_{\ell=2}^{n_i} G_{i\ell},
\qquad
A_{i,-1}:=\sum_{\ell=2}^{n_i} G_{i\ell}q_{i\ell},
\]
and
\[
h_1(x):=\frac{xq_{i1}+A_{i,-1}}{x+S_{i,-1}},
\qquad x\in\{1,\Gamma\}.
\]
Then
\[
h_1(\Gamma)-h_1(1)
=
\frac{(\Gamma-1)\sum_{\ell=2}^{n_i}G_{i\ell}(q_{i1}-q_{i\ell})}
     {(\Gamma+S_{i,-1})(1+S_{i,-1})}
>0,
\]
because \(q_{i1}>q_{in_i}\) and all \(G_{i\ell}>0\). Hence increasing \(p_{i1}\) from \(g\) to \(1-g\) strictly increases \(\mu_i(Q_i)\), contradicting optimality. So \(k\neq 0\).

Similarly, if \(k=n_i\), then \(p_{ij}^\star=1-g\) for all \(j\). Conditioning on \(\{B_{i1},\dots,B_{i,n_i-1}\}\), define
\[
S_{i,-n_i}:=\sum_{\ell=1}^{n_i-1} G_{i\ell},
\qquad
A_{i,-n_i}:=\sum_{\ell=1}^{n_i-1} G_{i\ell}q_{i\ell},
\]
and
\[
h_{n_i}(x):=\frac{xq_{i,n_i}+A_{i,-n_i}}{x+S_{i,-n_i}},
\qquad x\in\{1,\Gamma\}.
\]
Then
\[
h_{n_i}(\Gamma)-h_{n_i}(1)
=
\frac{(\Gamma-1)\sum_{\ell=1}^{n_i-1}G_{i\ell}(q_{i,n_i}-q_{i\ell})}
     {(\Gamma+S_{i,-n_i})(1+S_{i,-n_i})}
<0,
\]
so decreasing \(p_{i,n_i}\) from \(1-g\) to \(g\) strictly increases \(\mu_i(Q_i)\), again contradicting optimality. Therefore \(k\neq n_i\).

If \(g=1/2\), then \(g=1-g\), so every coordinate equals \(1/2\) and the displayed top-\(k\) form holds trivially for any \(k\in\{1,\dots,n_i-1\}\).

Thus, in all cases, an optimizer can be chosen so that, for some \(k\in\{1,\dots,n_i-1\}\),
\[
p_{ij}^\star=1-g \quad \text{for } j\le k,
\qquad
p_{ij}^\star=g \quad \text{for } j>k.
\]
Therefore
\[
\sup_{Q_i\in\mcl_i^{\otimes,\mathrm{Bern}}(g)}
\mu_i(Q_i)
=
\max_{1\le k\le n_i-1}\mu_i(Q_i^{(k)}),
\]
where \(Q_i^{(k)}\) has marginals
\(\operatorname{Bern}_{1,\Gamma}(1-g)\) for \(j\le k\) and
\(\operatorname{Bern}_{1,\Gamma}(g)\) for \(j>k\). By Theorem~\ref{thm:mix-product-mean}, the same value is the worst-case
mean over \(\cP_i^{\mathrm{Bern}}(g)\), and a maximizing product law can be
viewed as a degenerate mixture.

It remains to identify the variance tie-breaker. The preceding argument shows
that the maximum of \(\mu_i(Q_i)\) is attained by at least one top-\(k\) law: \(Q_i^{(k)}\). We now show that the variance tie-breaker can also be
resolved among these laws.

For a Bernoulli product law \(Q_i\), let
\[
W_i(Q_i):=
\sum_{j=1}^{n_i}\varrho_{ij}(Q_i)q_{ij}^2.
\]
The same finite-sum representation, with \(q_{ij}^2\) in place of \(q_{ij}\),
shows that \(W_i(Q_i)\) is also affine in each coordinate \(p_{ij}\)
separately. Since
\[
\nu_i^2(Q_i)=W_i(Q_i)-\mu_i(Q_i)^2,
\]
and \(\mu_i(Q_i)\) is constant over the mean-maximizing class, maximizing
\(\nu_i^2(Q_i)\) among mean maximizers is equivalent to maximizing
\(W_i(Q_i)\) among mean maximizers. By Lemma~\ref{lem:multiaffine-endpoints},
a variance-maximizing mean maximizer can be chosen with
\[
p_{ij}\in\{g,1-g\},\qquad j=1,\ldots,n_i.
\]
The swap argument above then shows that any endpoint law that maximizes the
mean can be rearranged into top-\(k\) form without decreasing the mean. Since a
variance-maximizing mean maximizer is, in particular, a mean maximizer, it has
no inversion across a strict inequality \(q_{ia}>q_{ib}\). Any remaining
inversion can only occur among tied scores, and swapping tied scores changes
neither \(\mu_i(Q_i)\) nor \(W_i(Q_i)\), hence neither \(\nu_i^2(Q_i)\).
Therefore a variance-maximizing mean maximizer can be chosen in top-\(k\)
form. Therefore a variance-maximizing mean maximizer can be chosen from
\(\{Q_i^{(k)}:k\in\mck_i\}\). By the definition of \(k_i^\star\),
\[
\nu_i^2(Q_i^{(k_i^\star)})
=
\max_{Q_i\in(\mcl_i^{\otimes,\mathrm{Bern}}(g))^\star}
\nu_i^2(Q_i).
\]
Theorem~\ref{thm:mix-product-mean} then implies that \(Q_i^{(k_i^\star)}\), viewed as a degenerate mixture, is least favorable.
\end{proof}

\section{More results from Section~\ref{sec:classes}}\label{app:more-classes}
\subsection{Product law of \(\bG_i\) given a fixed treatment assignment}\label{app:mixture-generative}
We justify the claim that $\P(\bG_i\in\cdot\mid \mca,\cZ,\bZ_i=\bz_i)$ is a product law in Section~\ref{sec:classes}. Fix a
matched set \(i\), and suppose that the coordinates of \(\bG_i\) are
independent conditional on \(\mca\). Thus, for a generic realization
\(\bxi_i=(\xi_{i1},\ldots,\xi_{in_i})\),
\[
\P(\bG_i\in d\bxi_i\mid \mca)
=
\bigotimes_{j=1}^{n_i} P_{ij}(d\xi_{ij})
\]
for some one-dimensional laws \(P_{ij}\) on \([1,\Gamma]\).

Let $\lambda_{ij}:=\exp\{\kappa(\bx_{ij})\}.$
Before conditioning on the matched design, assume, as in the  sensitivity model \citep{ros87}, that treatment assignments are
conditionally independent given \((\mca,\bG_i)\), with
\[
\P(Z_{ij}=1\mid \mca,\bG_i)
=
\frac{\lambda_{ij}G_{ij}}{1+\lambda_{ij}G_{ij}}.
\]
For \(z\in\{0,1\}\), define
\[
h_{ij}^{z}(\xi)
:=
\left(
\frac{\lambda_{ij}\xi}{1+\lambda_{ij}\xi}
\right)^z
\left(
\frac{1}{1+\lambda_{ij}\xi}
\right)^{1-z}.
\]
Then, for any fixed assignment vector
\(\bz_i=(z_{i1},\ldots,z_{in_i})\),
Bayes' rule gives
\[
\P(\bG_i\in d\bxi_i\mid \mca,\bZ_i=\bz_i)
\propto
\prod_{j=1}^{n_i}
h_{ij}^{z_{ij}}(\xi_{ij})\,P_{ij}(d\xi_{ij}).
\]
The right-hand side factorizes across \(j\). Equivalently, if
\[
m_{ij}^{z}
:=
\int h_{ij}^{z}(\xi)\,P_{ij}(d\xi),
\qquad
P_{ij}^{z}(d\xi)
:=
\frac{h_{ij}^{z}(\xi)P_{ij}(d\xi)}{m_{ij}^{z}},
\]
then
\[
\P(\bG_i\in d\bxi_i\mid \mca,\bZ_i=\bz_i)
=
\bigotimes_{j=1}^{n_i}
P_{ij}^{z_{ij}}(d\xi_{ij}).
\]
Under the assumed independence across matched sets, conditioning additionally
on the full matched-design event \(\cZ\) does not change this set-specific
conditional law once \(\bZ_i=\bz_i\) is fixed. Hence, for each
\(\bz_i\in\Omega_i\),
\[
\P(\bG_i\in d\bxi_i\mid \mca,\cZ,\bZ_i=\bz_i)
=
\bigotimes_{j=1}^{n_i}
P_{ij}^{z_{ij}}(d\xi_{ij}),
\]
so the conditional law is a product law.

\subsection{Numerical solution of the optimal distributions in
Proposition~\ref{prop:mean-band}}
\label{app:nonlinear_pro}

This section gives a nonlinear programming formulation for
computing an optimal product law in Proposition~\ref{prop:mean-band}. For
notational convenience, we suppress the dependence on \(i\) and write
\(J:=n_i\). By Proposition~\ref{prop:mean-band}, it is enough to consider
marginal distributions supported on at most two points. We therefore
parameterize
\[
G_j=
\begin{cases}
a_j, & \text{with probability } 1-p_j,\\
b_j, & \text{with probability } p_j,
\end{cases}
\qquad
1 \le a_j \le b_j \le \Gamma,\quad 0 \le p_j \le 1,
\quad j=1,\dots,J.
\]
The degenerate case is included by allowing \(a_j=b_j\). The mean-band
constraints become
\[
\mu^{-}(g)\le (1-p_j)a_j + p_j b_j \le \mu^{+}(g),
\qquad j=1,\dots,J.
\]

For \(z=(z_1,\dots,z_J)\in\{0,1\}^J\), define
\[
s_j(z_j):=a_j + (b_j-a_j)z_j.
\]
Since the coordinates are independent,
\[
\E\!\left[\sum_{j=1}^{J}
\frac{G_j}{\sum_{\ell=1}^{J}G_\ell}\,q_j\right]
=
\sum_{z\in\{0,1\}^J}
\left(\prod_{j=1}^J p_j^{z_j}(1-p_j)^{1-z_j}\right)\,
\frac{\sum_{j=1}^J q_j\,s_j(z_j)}
{\sum_{j=1}^J s_j(z_j)}.
\]
Hence, by Proposition~\ref{prop:mean-band}, the original infinite-dimensional problem has the same optimal value as the following smooth nonconvex nonlinear program:
\[
\begin{aligned}
\max_{a,b,p}\quad
&\sum_{z\in\{0,1\}^J}
\left(\prod_{j=1}^J p_j^{z_j}(1-p_j)^{1-z_j}\right)
\frac{\sum_{j=1}^J q_j\bigl(a_j+(b_j-a_j)z_j\bigr)}
{\sum_{j=1}^J \bigl(a_j+(b_j-a_j)z_j\bigr)}\\
\text{s.t.}\quad
&1\le a_j\le b_j\le \Gamma,\qquad j=1,\dots,J,\\
&0\le p_j\le 1,\qquad j=1,\dots,J,\\
&\mu^{-}(g)\le (1-p_j)a_j+p_j b_j\le \mu^{+}(g),
\qquad j=1,\dots,J.
\end{aligned}
\tag{NLP}
\]

Because \(G_j\in[1,\Gamma]\), every denominator in \((\mathrm{NLP})\) is bounded
below by \(J\), so the objective is smooth on a compact feasible set.

For implementation it is cleaner to reparameterize
\[
b_j=a_j+d_j,\qquad 0\le d_j\le \Gamma-a_j,
\]
so that the mean constraint becomes
\[
\mu^{-}(g)\le a_j+p_j d_j\le \mu^{+}(g).
\]
This removes the ordering constraint \(a_j\le b_j\).

When \(J\) is small or moderate, the objective may be evaluated exactly by summing over all \(2^J\) support configurations of \((G_1,\dots,G_J)\). 
\subsection{A rescaled Beta subclass}
\label{subsec:stoch-beta}
Here we consider a continuous parametric subclass in which each \(G_{ij}\)
follows a Beta distribution affinely rescaled to \([1,\Gamma]\). Let
\(\mathrm{Beta}_{[1,\Gamma]}(\alpha,\beta)\) denote the distribution of
\(1+(\Gamma-1)B\), where \(B\sim \mathrm{Beta}(\alpha,\beta)\). If
\(G\sim \mathrm{Beta}_{[1,\Gamma]}(\alpha,\beta)\), then
\[
\E[G]
=
1+(\Gamma-1)\frac{\alpha}{\alpha+\beta}.
\]
Define \(\mcl_i^{\otimes,\mathrm{Beta}}(g)\) as the subclass of
\(\mcl_i^{\otimes,\mathrm{2G}}(g)\) consisting of product laws
\(Q_i=\bigotimes_{j=1}^{n_i}Q_{ij}\) for which there exist parameter pairs
\[
(\alpha^+,\beta^+)\in(0,\infty)^2,
\qquad
(\alpha^-,\beta^-)\in(0,\infty)^2,
\]
such that
\[
Q_{ij}\in
\left\{
\mathrm{Beta}_{[1,\Gamma]}(\alpha^+,\beta^+),
\mathrm{Beta}_{[1,\Gamma]}(\alpha^-,\beta^-)
\right\},
\qquad
j=1,\ldots,n_i,
\]
and
\begin{equation}\label{eq:cons-beta}
g \le \frac{\alpha^+}{\alpha^+ + \beta^+} \le 1-g,
\qquad
g \le \frac{\alpha^-}{\alpha^- + \beta^-} \le 1-g.
\end{equation}
Thus, \(\mcl_i^{\otimes,\mathrm{Beta}}(g)\) is a two-group parametric subclass
whose two possible marginal distributions are rescaled Beta distributions on
\([1,\Gamma]\).

Our next result shows that the rescaled Beta family can approximate the two-group analysis arbitrarily well, in the sense that the worst-case means are the same.
\begin{proposition}
    \label{prop:beta-opt}
Suppose that \(q_{i1}\ge \cdots \ge q_{in_i}\). Then
\begin{equation}\label{eq:sup-beta-equals-sup}
\sup_{Q_i\in \mcl_i^{\otimes,\mathrm{Beta}}(g)} \mu_i(Q_i)
\;=\;
\sup_{Q_i\in \mcl_i^{\otimes,\mathrm{2G}}(g)} \mu_i(Q_i).
\end{equation}
\end{proposition}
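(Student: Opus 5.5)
The inequality ``$\le$'' in \eqref{eq:sup-beta-equals-sup} is immediate, because $\mcl_i^{\otimes,\mathrm{Beta}}(g)\subseteq\mcl_i^{\otimes,\mathrm{2G}}(g)$ by construction: the constraint \eqref{eq:cons-beta} is exactly the mean-band constraint for the rescaled Beta marginals. The work is in the reverse inequality. By Theorem~\ref{thm:two-group}, the right-hand side equals $\max_{1\le k\le n_i-1}\mu_i(Q_i^{(k)})$. Here $Q_i^{(k)}$ places $\delta_{\mu^+(g)}$ on coordinates $j\le k$ and $\operatorname{Bern}_{1,\Gamma}(g)$ on coordinates $j>k$. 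So it suffices to fix a maximizing $k$ and build a sequence of Beta two-group laws $Q_i^{(k),m}\in\mcl_i^{\otimes,\mathrm{Beta}}(g)$, using the same top-$k$ partition, with $\mu_i(Q_i^{(k),m})\to\mu_i(Q_i^{(k)})$.

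For the construction, I would take the plus group to be $\mathrm{Beta}_{[1,\Gamma]}(\alpha_m^+,\beta_m^+)$ with $\alpha_m^+=m(1-g)$ and $\beta_m^+=mg$. Its mean is exactly $\mu^+(g)$ and its variance is $(\Gamma-1)^2 g(1-g)/(m+1)\to0$, so it converges weakly to $\delta_{\mu^+(g)}$. (If $g=0$, use $\alpha=m$, $\beta=1$ instead, so the mean $m/(m+1)\to1$ while staying in the band.) For the minus group I would take $\mathrm{Beta}_{[1,\Gamma]}(\alpha_m^-,\beta_m^-)$ with $\alpha_m^-=g/m$ and $\beta_m^-=(1-g)/m$. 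This has mean exactly $\mu^-(g)$, and as the total concentration $1/m\to0$ it converges weakly to the endpoint Bernoulli $(1-g)\delta_1+g\delta_\Gamma$. This is the standard fact that $\mathrm{Beta}(\theta p,\theta(1-p))\Rightarrow\operatorname{Bern}(p)$ as $\theta\downarrow0$. For the edge cases $g=0$ and $g=1/2$, I would perturb the means slightly inside the band, for example using $\alpha^-=\varepsilon_m$ with a mean tending to $g$. The constraint \eqref{eq:cons-beta} then still holds, so feasibility is preserved.

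Convergence of the objective is the next step. Marginal weak convergence gives weak convergence of the product laws on the compact set $[1,\Gamma]^{n_i}$. The integrand $h_i(\bx_i)=\sum_j q_{ij}x_{ij}/\sum_\ell x_{i\ell}$ is bounded and continuous there, since the denominator is at least $n_i$. Hence $\mu_i(Q_i^{(k),m})\to\mu_i(Q_i^{(k)})$, and so the supremum over the Beta class is at least $\mu_i(Q_i^{(k)})$, which gives equality.

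The main obstacle I expect is the minus-group limit: proving cleanly that a rescaled Beta with fixed mean and vanishing concentration converges weakly to the two-point endpoint law. I would handle it by a variance argument. The variance is $(\Gamma-1)^2p(1-p)/(1+\theta)$, which tends to the maximal variance $(\Gamma-1)^2p(1-p)$ allowed for a law on $[1,\Gamma]$ with that mean. Any law on $[1,\Gamma]$ with that mean and that variance must be the endpoint Bernoulli, because equality holds in $\E[(G-1)(\Gamma-G)]\ge0$ only when $G\in\{1,\Gamma\}$ almost surely. Combined with tightness on the compact interval, every subsequential limit is therefore the endpoint Bernoulli.
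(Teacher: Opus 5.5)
Your proof is correct and follows the paper's structure. You get ``$\le$'' from the inclusion, use Theorem~\ref{thm:two-group} to fix the target $Q_i^{(k)}$, and approximate it with the paper's own sequences $\mathrm{Beta}_{[1,\Gamma]}(m(1-g),mg)$ and $\mathrm{Beta}_{[1,\Gamma]}(g/m,(1-g)/m)$. You use the same $\mathrm{Beta}(m,1)$ substitute for the plus group at $g=0$. You then conclude, as the paper does, by weak convergence of finite products against the bounded continuous integrand.

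The real difference is in the step you flagged as the main obstacle, the minus-group limit. The paper computes every moment, $\E[(B_M^-)^r]=(g/M)_r/(1/M)_r\to g$ for all $r\ge 1$, and then uses density of polynomials in $C([0,1])$. You use only two moments:
\begin{itemize}
\item the mean is fixed at $\mu^-(g)$;
\item the variance $(\Gamma-1)^2g(1-g)/(1+1/m)$ tends to the ceiling $(\mu^-(g)-1)(\Gamma-\mu^-(g))=(\Gamma-1)^2g(1-g)$ implied by $\E[(G-1)(\Gamma-G)]\ge 0$;
\item the endpoint Bernoulli is the unique law on $[1,\Gamma]$ that attains this ceiling;
\item compactness then forces every subsequential limit to be that law.
\end{itemize}
Both arguments are complete. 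The paper's is a direct calculation specific to the Beta family. Yours is more structural: it works for any fixed-mean family whose variance approaches the maximum, and it is the $f(x)=x^2$ case of the Edmundson--Madansky extremality already used in Lemma~\ref{lm:extreme-distributions-mean-band}.

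One small correction: $g=1/2$ is not an edge case. There $g/m=(1-g)/m=1/(2m)>0$, so the generic construction already applies. Your proposed fix would not even be available at $g=1/2$, because \eqref{eq:cons-beta} pins both means to exactly $1/2$ and leaves no room to perturb. Only $g=0$ needs a substitute for the minus group, and there the band constraint is vacuous. Either the paper's $\mathrm{Beta}(1,M)$ or your $\mathrm{Beta}(\varepsilon_m,\cdot)$ with mean tending to $0$ works trivially, since a $[0,1]$-valued variable whose mean tends to zero converges to $0$.
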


The point is simple. By Theorem~\ref{thm:two-group}, the optimizer in the two-group class uses the two boundary distributions
\[
\delta_{\mu^+(g)}
\qquad\text{and}\qquad
(1-g)\delta_1+g\delta_\Gamma.
\]
The rescaled Beta family can approximate both of them arbitrarily well: the first by sending the concentration parameter to infinity while keeping the mean fixed, and the second by sending both shape parameters to zero in the ratio \(g:(1-g)\). Hence the Beta subclass has the same supremal value, even though the supremum need not be attained by an interior Beta distribution. We now prove the proposition. 

\begin{proof}
Because \(\mcl_i^{\otimes,\mathrm{Beta}}(g)\subseteq \mcl_i^{\otimes,\mathrm{2G}}(g)\), we immediately have
\[
\sup_{Q_i\in \mcl_i^{\otimes,\mathrm{Beta}}(g)} \mu_i(Q_i)
\;\le\;
\sup_{Q_i\in \mcl_i^{\otimes,\mathrm{2G}}(g)} \mu_i(Q_i).
\]

For the reverse inequality, let \(Q_i^\star\) be an optimizer of
\[
\sup_{Q_i\in \mcl_i^{\otimes,\mathrm{2G}}(g)} \mu_i(Q_i)
\]
given by Theorem~\ref{thm:two-group}. Thus, for some
\(k\in\{1,\ldots,n_i-1\}\),
\[
Q_{ij}^\star=\delta_{\mu^+(g)}
\quad \text{for } j\le k,
\qquad
Q_{ij}^\star=(1-g)\delta_1+g\delta_\Gamma
\quad \text{for } j>k.
\]

We first treat the case \(g\in(0,1/2]\). For each \(M\ge 1\), define
\(Q_i(M)=\bigotimes_{j=1}^{n_i}Q_{ij}(M)\) by
\[
Q_{ij}(M)
=
\begin{cases}
\mathrm{Beta}_{[1,\Gamma]}\bigl(M(1-g),Mg\bigr), & j\le k,\\[3pt]
\mathrm{Beta}_{[1,\Gamma]}\bigl(g/M,(1-g)/M\bigr), & j>k.
\end{cases}
\]
The expectations of both distributions lie within the mean band, hence \(Q_i(M)\) belongs to \(\mcl_i^{\otimes,\mathrm{Beta}}(g)\).

Now let \(B_M^+\sim \mathrm{Beta}(M(1-g),Mg)\). Then
\[
\E[B_M^+] = 1-g,
\qquad
\Var(B_M^+) = \frac{g(1-g)}{M+1}.
\]
Therefore \(\Var(B_M^+)\to 0\), so \(B_M^+\to 1-g\) in \(L^2\), hence in distribution. After rescaling,
\[
\mathrm{Beta}_{[1,\Gamma]}\bigl(M(1-g),Mg\bigr)
\Rightarrow
\delta_{\mu^+(g)}.
\]

Next let \(B_M^-\sim \mathrm{Beta}(g/M,(1-g)/M)\). For each integer \(m\ge 1\),
\[
\E\bigl[(B_M^-)^m\bigr]
=
\frac{(g/M)_m}{(1/M)_m},
\]
where \((a)_m=a(a+1)\cdots(a+m-1)\) is the rising factorial. Since
\[
\frac{(g/M)_m}{(1/M)_m}
=
\frac{\frac{g}{M}\left(1+\frac{g}{M}\right)\cdots\left(m-1+\frac{g}{M}\right)}
{\frac{1}{M}\left(1+\frac{1}{M}\right)\cdots\left(m-1+\frac{1}{M}\right)}
\longrightarrow g,
\]
we obtain
\[
\E\bigl[(B_M^-)^m\bigr]\longrightarrow g
\qquad\text{for every } m\ge 1.
\]
Hence, for any polynomial \(p\),
\[
\E[p(B_M^-)]
\longrightarrow
(1-g)p(0)+gp(1).
\]
By density of polynomials in \(C([0,1])\), it follows that
\[
B_M^-
\Rightarrow
(1-g)\delta_0+g\delta_1.
\]
After affine rescaling,
\[
\mathrm{Beta}_{[1,\Gamma]}\bigl(g/M,(1-g)/M\bigr)
\Rightarrow
(1-g)\delta_1+g\delta_\Gamma.
\]

Since the number of coordinates is finite, the product distributions \(Q_i(M)\) converge weakly to \(Q_i^\star\) on \([1,\Gamma]^{n_i}\). Writing
\[
\mu_i(Q_i)=\int \psi_i \, dQ_i,
\]
where \(\psi_i\) is the bounded continuous integrand defining the objective, weak convergence yields
\[
\mu_i(Q_i(M)) \longrightarrow \mu_i(Q_i^\star).
\]
Because each \(Q_i(M)\in \mcl_i^{\otimes,\mathrm{Beta}}(g)\), we conclude that
\[
\sup_{Q_i\in \mcl_i^{\otimes,\mathrm{Beta}}(g)} \mu_i(Q_i)
\;\ge\;
\lim_{M\to\infty}\mu_i(Q_i(M))
=
\mu_i(Q_i^\star)
=
\sup_{Q_i\in \mcl_i^{\otimes,\mathrm{2G}}(g)} \mu_i(Q_i).
\]
Combined with the first inequality, this proves \eqref{eq:sup-beta-equals-sup} for \(g\in(0,1/2]\).

When \(g=0\), Theorem~\ref{thm:two-group} gives the two-group optimizer with marginals \(\delta_\Gamma\) and \(\delta_1\). These are approximated by
\[
\mathrm{Beta}_{[1,\Gamma]}(M,1)\Rightarrow \delta_\Gamma,
\qquad
\mathrm{Beta}_{[1,\Gamma]}(1,M)\Rightarrow \delta_1.
\]
Since the mean-band constraint is vacuous when \(g=0\), the same continuity argument applies and yields \eqref{eq:sup-beta-equals-sup} in this case as well.
\end{proof}

\section{Derivation for Design Sensitivity}\label{app:ds}
This appendix gives the calculations used in Section~\ref{sec:power}. We
restrict to matched pairs and to the paired-difference statistic described in
Section~\ref{subsec:design_setup}. 

Let \(\varrho^{\mathrm{2G}}(g;\Gamma)\) and
\(\varrho^{\mathrm{Bern}}(g;\Gamma)\) denote the least favorable marginal
probability that the higher-score unit in a pair receives treatment under,
respectively, the two-group sensitivity analysis and the Bernoulli sensitivity
analysis.
For the two-group sensitivity analysis, Theorem~\ref{thm:two-group}
implies that
\[
    \varrho^{\mathrm{2G}}(g;\Gamma)
    =
    (1-g)\frac{\mu^+(g)}{\mu^+(g)+1}
    +
    g\frac{\mu^+(g)}{\mu^+(g)+\Gamma}.
\]
For the Bernoulli sensitivity analysis, Theorem~\ref{thm:ber} gives
\[
    \varrho^{\mathrm{Bern}}(g;\Gamma)
    =
    (1-g)^2\frac{\Gamma}{1+\Gamma}
    +
    g(1-g)
    +
    g^2\frac{1}{1+\Gamma}.
\]
When \(g=0\), both stochastic analyses reduce to Rosenbaum's deterministic
sensitivity analysis:
\[
    \varrho^{\mathrm{2G}}(0;\Gamma)
    =
    \varrho^{\mathrm{Bern}}(0;\Gamma)
    =
    \frac{\Gamma}{1+\Gamma}.
\]

Since exactly
one unit is treated in each pair,
\[
    T_i-\bar q_i
    =
    (2Z_{i1}-1)\frac{q_{i1}-q_{i2}}{2},
    \qquad
    |T_i-\bar q_i|
    =
    \frac{q_{i1}-q_{i2}}{2}.
\]
Thus, under a stochastic sensitivity analysis
\(\mathcal A\in\{\mathrm{2G},\mathrm{Bern}\}\), the largest expectation under
the null of \(T_i-\bar q_i\) is
\[
    \bigl(2\varrho^{\mathcal A}(g;\Gamma)-1\bigr)
    |T_i-\bar q_i|.
\]
On the other hand, under the data-generating model in
Section~\ref{subsec:design_setup},
\[
\frac1I\sum_{i=1}^I\left(T_i-\bar q_i\right)
\longrightarrow \E\left[\frac{D_i}{2}\right]
=\frac{\tau}{2},
\qquad
\frac1I\sum_{i=1}^I |T_i-\bar q_i|
\longrightarrow
\E\left[\frac{|D_i|}{2}\right].
\]
Therefore rejection persists asymptotically when
\[
    \frac{\tau}{2}
    >
    \bigl(2\varrho^{\mathcal A}(g;\Gamma)-1\bigr)
    \frac{\E[|D_i|]}{2}.
\]
Equivalently, the asymptotic rejection boundary is
\begin{equation}\label{eq:app-ds-common-boundary}
\varrho^{\mathcal A}(g;\Gamma)
=
\frac12\left(1+\frac{\tau}{\E[|D_i|]}\right).
\end{equation}
For fixed \(g\), the design sensitivity
\(\widetilde\Gamma^{\mathcal A}(\tau;g)\) is the solution of
\eqref{eq:app-ds-common-boundary} in \(\Gamma\). For fixed \(\Gamma\),
\(\widetilde g^{\mathcal A}(\tau;\Gamma)\) is the solution of
\eqref{eq:app-ds-common-boundary} in \(g\).

\subsection{Solving \texorpdfstring{\(\widetilde\Gamma\)}{Gamma} for fixed \texorpdfstring{\(g\)}{g}}

\paragraph{Two-group analysis.}
Substituting \(\varrho^{\mathrm{2G}}(g;\Gamma)\) into
\eqref{eq:app-ds-common-boundary} and clearing denominators gives
\[
    a_2\{\widetilde\Gamma^{\mathrm{2G}}(\tau;g)\}^2
    +
    a_1\widetilde\Gamma^{\mathrm{2G}}(\tau;g)
    +
    a_0
    =0,
\]
where
\[
\begin{aligned}
    a_2
    &=(1-g)\left[
        \frac{2-g}{2}\left(1+\frac{\tau}{\E[|D_i|]}\right)
        -2(1-g)
      \right],\\
    a_1
    &=
      (1+g-g^2)\left(1+\frac{\tau}{\E[|D_i|]}\right)
      -4g(1-g),\\
    a_0
    &=g\left[
        \frac{1+g}{2}\left(1+\frac{\tau}{\E[|D_i|]}\right)
        -2g
      \right].
\end{aligned}
\]
If
\[
\frac12\left(1+\frac{\tau}{\E[|D_i|]}\right)
<
\frac{2(1-g)}{2-g},
\]
then the finite design sensitivity is
\begin{equation}\label{eq:app-ds-2g-gamma}
\widetilde\Gamma^{\mathrm{2G}}(\tau;g)
=
\frac{-a_1-\sqrt{a_1^2-4a_2a_0}}{2a_2}.
\end{equation}
The minus sign before the square root is the relevant branch because, in the
finite case, \(a_2<0\). If the displayed inequality fails, then
\(\varrho^{\mathrm{2G}}(g;\Gamma)\) does not reach the boundary in
\eqref{eq:app-ds-common-boundary} for any finite \(\Gamma\), and we write
\(\widetilde\Gamma^{\mathrm{2G}}(\tau;g)=\infty\).

\paragraph{Bernoulli analysis.}
For the Bernoulli analysis,
\[
\varrho^{\mathrm{Bern}}(g;\Gamma)
=
\frac{\Gamma(1-g)+g}{\Gamma+1}.
\]
Solving \eqref{eq:app-ds-common-boundary} gives
\begin{equation}\label{eq:app-ds-bern-gamma}
\widetilde\Gamma^{\mathrm{Bern}}(\tau;g)
=
\frac{
1+\tau/\E[|D_i|]-2g
}{
1-\tau/\E[|D_i|]-2g
},
\end{equation}
provided
\[
\frac12\left(1+\frac{\tau}{\E[|D_i|]}\right)<1-g.
\]
If this inequality fails, then
\(\widetilde\Gamma^{\mathrm{Bern}}(\tau;g)=\infty\).

When \(g=0\), both \eqref{eq:app-ds-2g-gamma} and
\eqref{eq:app-ds-bern-gamma} reduce to the conventional design sensitivity
\[
    \widetilde\Gamma(\tau;0)
    =
    \frac{1+\tau/\E[|D_i|]}{1-\tau/\E[|D_i|]}
    =
    \frac{\E[|D_i|]+\tau}{\E[|D_i|]-\tau}.
\]

\subsection{Solving \texorpdfstring{\(\widetilde g\)}{g} for fixed \texorpdfstring{\(\Gamma\)}{Gamma}}

\paragraph{Bernoulli analysis.}
For fixed \(\Gamma>1\), solving
\(\varrho^{\mathrm{Bern}}(g;\Gamma)=
\frac12(1+\tau/\E[|D_i|])\) gives
\begin{equation}\label{eq:app-ds-bern-g}
\widetilde g^{\mathrm{Bern}}(\tau;\Gamma)
=
\frac{
\Gamma-(\Gamma+1)\frac12\left(1+\frac{\tau}{\E[|D_i|]}\right)
}{\Gamma-1}
=
\frac{\Gamma-1-(\Gamma+1)\tau/\E[|D_i|]}{2(\Gamma-1)}.
\end{equation}

\paragraph{Two-group analysis.}
For the two-group analysis, substituting \(\Gamma\) into
\(\varrho^{\mathrm{2G}}(g;\Gamma)\) and collecting powers of \(g\) gives
\[
    b_2\{\widetilde g^{\mathrm{2G}}(\tau;\Gamma)\}^2
    +
    b_1\widetilde g^{\mathrm{2G}}(\tau;\Gamma)
    +
    b_0
    =0,
\]
where
\[
\begin{aligned}
b_2
&=
\left\{\frac12\left(1+\frac{\tau}{\E[|D_i|]}\right)-2\right\}
(\Gamma-1)^2,\\
b_1
&=-(\Gamma-1)\left[
(3\Gamma+1)\frac12\left(1+\frac{\tau}{\E[|D_i|]}\right)
-4\Gamma
\right],\\
b_0
&=2\Gamma\left[
(\Gamma+1)\frac12\left(1+\frac{\tau}{\E[|D_i|]}\right)
-\Gamma
\right].
\end{aligned}
\]
The relevant root is the smaller root,
\begin{equation}\label{eq:app-ds-2g-g}
\widetilde g^{\mathrm{2G}}(\tau;\Gamma)
=
\frac{-b_1+\sqrt{b_1^2-4b_2b_0}}{2b_2}.
\end{equation}

The values in \eqref{eq:app-ds-bern-g} and \eqref{eq:app-ds-2g-g} should be
read relative to the admissible range \(0\le g\le 1/2\). If the displayed root
is negative, then the conventional analysis at \(g=0\) already rejects
asymptotically at that value of \(\Gamma\), so the effective threshold is
\(\widetilde g=0\). If the displayed root exceeds \(1/2\), then no admissible
value of \(g\) is large enough to make rejection persist asymptotically at that
value of \(\Gamma\).


\section{Details for the binge drinking study}
\label{app:binge-match}

The binge drinking study in Section~\ref{subsec:binge} used the \texttt{binge} data from the
\texttt{iTOS} package. We restricted the analysis to frequent binge drinkers
and never-bingers, and defined the treatment indicator \(Z=1\) for frequent
binge drinkers and \(Z=0\) for never-bingers.

The match was based on nine pre-treatment baseline covariates: age, sex,
education, body-mass index, waist-to-hip ratio, vigorous activity, current
smoking frequency, smoking-cessation status, and current use of medication for
high blood pressure. Table~\ref{tab:binge-covariates} lists the corresponding
variables in the \texttt{iTOS} data.

\begin{table}[H]
\centering
\begin{tabular}{lp{0.72\textwidth}}
\toprule
Variable & Description \\
\midrule
\texttt{age} & Age in years. \\
\texttt{female} & Indicator for female sex. \\
\texttt{educationf} & Ordered factor for education with five levels:
\(<9\)th grade, 9--11th grade without a high school degree or equivalent,
high school degree or equivalent, some college, and at least a BA degree. \\
\texttt{bmi} & Body-mass index, a measure of obesity. \\
\texttt{waisthip} & Waist-to-hip ratio, a measure of obesity. \\
\texttt{vigor} & Indicator for vigorous activity, either recreational or occupational. \\
\texttt{smokenow} & Integer score for current smoking frequency: Everyday \(<\) SomeDays \(<\) No. \\
\texttt{smokeQuit} & Indicator for having smoked regularly and quit; current smokers and never
smokers both have value 0. \\
\texttt{bpRX} & Indicator for currently taking medication to control high blood pressure. \\
\bottomrule
\end{tabular}
\caption{Baseline covariates used in the NHANES binge-drinking match. Variable
names and definitions follow the \texttt{iTOS} \texttt{binge} data
documentation.}
\label{tab:binge-covariates}
\end{table}
We constructed a rank-based Mahalanobis distance on the expanded covariate
matrix, with education treated as a factor. We also imposed a propensity-score
caliper of \(0.08\) standard deviations on the logit propensity score. The
propensity score was estimated by logistic regression of the treatment
indicator on the same baseline covariates. 
Because the number of never-bingers
was much larger than the number of frequent binge drinkers, restricted full
matching was implemented using \texttt{optmatch::fullmatch} with
\texttt{omit.fraction = 0.7}, requiring at least one control per treated unit
and allowing at most ten controls per treated unit.

The final match contained 206 matched sets, each with one treated unit. The
numbers of controls per matched set ranged from 1 to 10, with counts
\[
45,\ 22,\ 14,\ 13,\ 13,\ 7,\ 7,\ 4,\ 5,\ 76
\]
for 1 through 10 controls, respectively. Thus the matched sample contained
1,382 individuals: 206 treated units and 1,176 controls.

Because \texttt{educationf} is an ordered factor with five levels, it is
represented in the matching distance by four numerical contrast variables. In
Figure~\ref{fig:balance}, the imbalance for education is reported as the
largest absolute standardized difference among these four variables.


\end{document}